\newtheorem{theorem}{Theorem}
\newtheorem*{theorem*}{Theorem}
\newcommand{\suppmat}{Appendix}
\newcommand{\proofmain}[1]{\vspace{-0.0mm} \noindent \textbf{Proof:} See \suppmat{} \ref{prf:#1}. \vspace{1.2mm} \newline}
\newcommand{\parbasic}[1]{\noindent \textbf{#1} \hspace{0.3mm}}
\newcommand{\Schrodinger}{Schr\"{o}dinger}
\newcommand{\ansatze}{ans\"{a}tze}
\DeclareMathOperator*{\argmin}{argmin}
\newcommand{\na}{{\mathcal{N}_\alpha}}
\newcommand{\elocal}{\mathcal{E}}
\newcommand{\G}{\mathbb{G}}
\newcommand{\nup}{\mathcal{N}_u}
\newcommand{\ndn}{\mathcal{N}_d}
\renewcommand{\l}{\ell}
\newcommand{\lp}{{\l + 1}}
\newcommand{\alphac}{{\hat{\alpha}}}
\newcommand{\ral}{r_\alpha^\l}
\newcommand{\rcl}{r_\alphac^\l}
\newcommand{\rail}{r_{\alpha,i}^\l}
\newcommand{\railp}{r_{\alpha,i}^\lp}
\newcommand{\bal}{\beta_\alpha^\l}
\newcommand{\balp}{\beta_\alpha^\lp}
\newcommand{\gal}{\gamma_\alpha^\l}
\newcommand{\gcl}{\gamma_\alphac^\l}
\newcommand{\gail}{\gamma_{\alpha,i}^\l}
\newcommand{\gailp}{\gamma_{\alpha,i}^\lp}
\newcommand{\numa}{{n_\alpha}}
\newcommand{\xal}{\xi_\alpha^\l}
\newcommand{\gailrp}{\underline{\gamma}_{\alpha,i}^\lp}
\newcommand{\galrp}{\underline{\gamma}_\alpha^\lp}
\newcommand{\gclrp}{\underline{\gamma}_\alphac^\lp}
\newcommand{\Pal}{\varphi_\alpha^\l}
\newcommand{\Pail}{\varphi_{\alpha,i}^\l}
\newcommand{\zal}{\zeta_\alpha^\l}
\newcommand{\zail}{\zeta_{\alpha,i}^\l}
\newcommand{\bzal}{\bar{\zeta}_\alpha^\l}
\newcommand{\bzcl}{\bar{\zeta}_\alphac^\l}
\newcommand{\Tal}{T_\alpha^\l}
\newcommand{\Tail}{T_{\alpha,i}^\l}
\newcommand{\tal}{\tau_\alpha^\l}
\newcommand{\mal}{\mu_\alpha^\l}
\newcommand{\hmal}{\hat{\mu}_\alpha^\l}
\newcommand{\Lal}{\Lambda_\alpha^\l}
\newcommand{\pir}{{\pi_\prec(r)}}
\newcommand{\ril}{r_i^\l}
\newcommand{\rilp}{r_i^\lp}
\newcommand{\rjl}{r_j^\l}
\newcommand{\rjlp}{r_j^\lp}
\newcommand{\Dijl}{\delta_{ij}^\l}
\newcommand{\Dijlp}{\delta_{ij}^\lp}
\newcommand{\Dl}{\delta^\l}
\newcommand{\Dlp}{\delta^\lp}
\newcommand{\DiIl}{\delta_{iI}^\l}
\newcommand{\DIl}{\delta_I^\l}
\newcommand{\Rl}{\Theta^\l}
\newcommand{\tl}{t^\l}
\newcommand{\zijl}{\zeta_{ij}^\l}
\newcommand{\bzl}{\bar{\zeta}^\l}
\newcommand{\el}{\eta^\l}
\newcommand{\hel}{\hat{\eta}^\l}
\newcommand{\dpp}{\rho_{dpp}}
\title{
\Large
A Theoretical Framework for an Efficient Normalizing Flow-Based Solution to the Electronic \Schrodinger{} Equation
}
\author{
  Daniel Freedman\thanks{Independent Researcher}
  \And
  Eyal Rozenberg\footnotemark[1]\hspace{1.4mm}\footnotemark[2]
  \And
  Alex Bronstein\thanks{Technion - Israel Institute of Technology}
}
\begin{document}

\maketitle

\begin{abstract}
    A central problem in quantum mechanics involves solving the Electronic \Schrodinger{} Equation for a molecule or material. The Variational Monte Carlo approach to this problem approximates a particular variational objective via sampling, and then optimizes this approximated objective over a chosen parameterized family of wavefunctions, known as the ansatz.  Recently neural networks have been used as the ansatz, with accompanying success.  However, sampling from such wavefunctions has required the use of a Markov Chain Monte Carlo approach, which is inherently inefficient.  In this work, we propose a solution to this problem via an ansatz which is cheap to sample from, yet satisfies the requisite quantum mechanical properties.  We prove that a normalizing flow using the following two essential ingredients satisfies our requirements: (a) a base distribution which is constructed from Determinantal Point Processes; (b) flow layers which are equivariant to a particular subgroup of the permutation group.  We then show how to construct both continuous and discrete normalizing flows which satisfy the requisite equivariance.  We further demonstrate the manner in which the non-smooth nature (``cusps'') of the wavefunction may be captured, and how the framework may be generalized to provide induction across multiple molecules.  The resulting theoretical framework entails an efficient approach to solving the Electronic \Schrodinger{} Equation.
\end{abstract}

\section{Introduction}
\label{sec:introduction}

\parbasic{The Electronic \Schrodinger{} Equation}
A central problem in quantum mechanics involves solving the Electronic \Schrodinger{} Equation to compute the ground state energy and wavefunction of a molecule or material.  This problem has manifold applications in chemistry, condensed matter physics, and materials science. However, solving the Electronic \Schrodinger{} Equation can be tricky: analytical solutions only exist for certain very simple problems; and numerical solutions must cope with the curse of dimensionality, due to the fact that the wavefunction's dimensionality grows linearly with the number of electrons.

\parbasic{Variational Monte Carlo}
A standard computational approach to this problem is based on Variational Monte Carlo \citep{ceperley1986quantum, austin2012quantum, gubernatis2016quantum, foulkes2001quantum, needs2009continuum}.  Solving for the ground state is equivalent to finding the eigenfunction of the Hamiltonian corresponding to the smallest eigenvalue (energy); this problem can be posed as the minimization of a Rayleigh quotient.  The issue is that both the numerator and denominator of the Rayleigh quotient correspond to very high dimensional integrals.  Variational Monte Carlo approximates these integrals via sampling, and the approximated objective is optimized over a family of wavefunctions, yielding an upper bound on the ground state energy.  The heart of this method is therefore the wavefunction family, also known as the ansatz: the more expressive the ansatz, the tighter the upper bound will be, yielding better approximations to the ground state energy.

\parbasic{Neural Networks as the Ansatz}
Recent work has proposed using neural networks as a flexible ansatz, and has achieved very high quality results.  The network must be adapted to respect the properties of electronic wavefunctions -- particularly antisymmetry, as electrons are Fermions.  Important examples of this type of ansatz are PauliNet \citep{hermann2020deep, schatzle2021convergence} and FermiNet \citep{pfau2020ab, spencer2020better}, both of which attain excellent ground state energies (e.g.~FermiNet achieves 99.8\% of the correlation energy for boron atoms).

\parbasic{The Problem: Sampling Inefficiency}
In order to be able to apply the Variational Monte Carlo formalism to the \ansatze{} just described, such as PauliNet or FermiNet, one must be able to sample from the densities corresponding to the wavefunctions given by their neural networks.  In general, this is only possibly using Markov Chain Monte Carlo (MCMC) techniques such as Langevin Monte Carlo \citep{umrigar1993diffusion} or any of several variations.  The issue with using such MCMC approaches to sampling is that they are inherently time-consuming: each sample is itself the solution of a stochastic differential equation as time goes to infinity.

\parbasic{Goals and Contributions}
The main goal of this paper is to solve the problem of sampling inefficiency, thereby yielding faster algorithms for solving the Electronic \Schrodinger{} Equation.  We achieve this goal by specifying a wavefunction ansatz which is easy to sample from, yet satisfies the requisite quantum mechanical properties.  In particular, we use an ansatz based on specially designed normalizing flows.  More specifically, we provide the following contributions:
\begin{itemize}
    \item We establish that an ansatz satisfying our desiderata can be instantiated as a normalizing flow with these characteristics: (a) its base distribution is symmetric under a particular subgroup of the permutation group, and vanishes for identical electrons; (b) the flow transformation is equivariant to the same subgroup of the permutation group.
    \item We show that the base distribution can be constructed using a particular combination of Determinantal Point Processes.
    \item We construct both continuous and discrete normalizing flows obeying the requisite equivariance.
    \item We provide a training regimen based on standard stochastic gradient descent.
    \item We show how to accommodate cusps, which encapsulate non-smooth aspects of the wavefunction.
    \item We generalize the framework so that induction across multiple molecules may be accommodated, while including the necessary additional invariances, in particular rigid motion invariance.
\end{itemize}
We end by noting that our contributions are of a theoretical character.  Our interest is to elucidate a theoretical approach to solving the Electronic \Schrodinger{} Equation based on neural networks, which retains both the expressivity of the neural network function class while at the same time providing considerably greater efficiency.  These contributions should be of importance and interest to the growing community of researchers working on the boundary of machine learning and quantum physics; including those researchers whose work involves the practical computation of ground state energies.

\section{Related Work}
\label{sec:related_work}

\parbasic{Neural Networks and Physics}
We begin by noting that neural networks have recently found broad and quite varied use in physics problems.  Amongst others, examples include solution of physics-based partial differential equations \citep{raissi2019physics, li2020fourier, lu2021learning}; density functional theory \citep{ryczko2019deep, kirkpatrick2021pushing}; topological states \citep{deng2017machine}; quantum state tomography \citep{torlai2018neural}; and quantum optics \citep{rozenberg2021inverse, rozenberg2022inverse}.  Useful surveys include \citep{carleo2019machine, zhang2023artificial}.

\parbasic{Pure Spin Systems}
Various works have used neural networks as the ansatz in the case of pure-spin systems, sometimes also referred to as ``discrete space systems'', in which the spins are at fixed locations.  These include \citep{carleo2017solving, deng2017quantum, gao2017efficient, levine2019quantum, sharir2022neural, passetti2023can}.

\parbasic{Continuous Space Systems}
In terms of continuous space problems of the sort that interest us, DeepWF \citep{han2019solving} bases its on ansatz on the classical Slater-Jastrow formalism, but learns both the symmetric and antisymmetric parts; the latter contains only two-electron terms, limiting the accuracy.  PauliNet \citep{hermann2020deep, schatzle2021convergence} also bases its ansatz on the Slater-Jastrow-Backflow form, but does so in a way that captures many-electron interactions, while respecting permutation-equivariance; this, as well as the inclusion of cusp terms, leads to much higher accuracy (e.g.~97.3\% of the correlation energy for boron atoms).  FermiNet \citep{pfau2020ab, spencer2020better} attains still higher accuracy (e.g.~99.8\% of the correlation energy for boron atoms) by using an appropriately designed neural network to represent the entire wavefunction, which contains a generalization of Slater determinants to account for all-electron interactions.  A hybrid solution which improves upon both PauliNet and FermiNet is presented in \citep{gerard2022gold}.  Techniques for learning / induction across several molecules or materials at once are presented in \citep{gao2023generalizing, scherbela2024towards, gerard2024transferable}.  We briefly mention applications to periodic systems \citep{wilson2022wave, li2022ab, pescia2022neural, cassella2023discovering}; techniques that use Diffusion Monte Carlo \citep{wilson2021simulations, ren2023towards}; and methods that deal with excited states \citep{entwistle2023electronic, pfau2023natural, naito2023multi}.

\parbasic{Normalizing Flows and Quantum Problems}
There is a body of work which has applied normalizing flows to problems in lattice field theories, e.g. \citep{kanwar2020equivariant, boyda2021sampling, abbott2023aspects, abbott2023normalizing}; due to the different physical scenario, the setup in these papers is naturally quite different from ours.  The work by \citep{xie2022abinitio} introduced the use of normalizing flows in the context of Fermionic problems; follow on papers for specific applications include \citep{xie2023m, saleh2023computing}.  Our paper considerably broadens this approach, by (a) establishing all results rigorously, as in Theorems \ref{thm:density_phase} - \ref{thm:continuous_flow}; (b) showing how to practically implement the flows with discrete layers, as in Theorem \ref{thm:layer_inverse_equivariance}; (c) providing important extensions to deal with phase computation (see Theorem \ref{thm:phase}), the crucial issue of cusps (see Theorem \ref{thm:cusps}), and induction across multiple molecules (see Theorems \ref{thm:induction} - \ref{thm:induction_transformation}).  Finally, we mention two further papers: \citep{stokes2023numerical} focuses on geometric aspects of the problem; while  WaveFlow \citep{thiede2022waveflow}, develops a specialized normalizing flow architecture for one-dimensional systems, which is interesting, albeit of limited applicability.

\section{Problem Setup}
\label{sec:problem_setup}

\subsection{Goals}

\parbasic{The Setting}
Our overall goal is to compute the ground state wavefunction and energy of a molecule given its molecular parameters and spin multiplicity.  Denote $x_i = (r_i, s_i)$ to be the pair consisting of the position and spin for the $i^{th}$ electron; $x$ will denote the entire ordered list $(x_1, \dots, x_n)$, with corresponding definitions for $r$ and $s$.  We specify wavefunctions as $\psi(x)$; due to the fact that electrons are Fermions, valid wavefunctions must be \textit{antisymmetric}, that is if $\pi \in \mathbb{S}_n$ is a permutation, then
\begin{equation}
    \psi(\pi x) = (-1)^\pi \psi(x)
\end{equation}
where as usual, $(-1)^\pi$ is shorthand for $(-1)^{N(\pi)}$ where $N(\pi)$ is the minimal number of flips to produce $\pi$.

Let $R_I$ and $Z_I$ denote the position and atomic number of the $I^{th}$ nucleus, and
let the Laplacian for the $i^{th}$ electron be $\Delta_i = \frac{\partial^2}{\partial r_{i1}^2} + \frac{\partial^2}{\partial r_{i2}^2} + \frac{\partial^2}{\partial r_{i3}^2}$, and denote the sum of the Laplacians by $\Delta = \sum_i \Delta_i$; then the Hamiltonian is given by
\begin{equation}
    H = -\tfrac{1}{2}\Delta + V(x)
\end{equation}
where the potential $V(x)$ is given by
\begin{equation}
    \textstyle
    V(x) = \sum_{i>j} \frac{1}{\| r_i - r_j \|} - \sum_{iI}  \frac{Z_I}{\| r_i - R_I \|} + \sum_{I>J}  \frac{Z_I Z_J}{\| R_I - R_J \|}
\end{equation}
Our goal is to compute the ground state wavefunction, which we denote as $\psi_0(x)$ and corresponding ground state energy $E_0$.  They may be computed using the variational principle, i.e.~by minimizing the Rayleigh quotient:
\begin{equation}
    \psi_0 = \argmin_{\psi \in \Psi} \frac{\langle \psi | H | \psi \rangle}{\langle \psi | \psi \rangle}
    \quad \text{and} \quad
    E_0 = \frac{\langle \psi_0 | H | \psi_0 \rangle}{\langle \psi_0 | \psi_0 \rangle}
    \label{eq:psi_argmin}
\end{equation}
where $\Psi$ is the set of all possible valid wavefunctions, and $H$ is the Hamiltonian.  If we specify the wavefunction ansatz as a neural network with parameters $\theta$, this becomes
\begin{equation}
    \theta^* = \argmin_\theta \frac{\langle \psi(\cdot; \theta) | H | \psi(\cdot; \theta) \rangle}{\langle \psi(\cdot; \theta) | \psi(\cdot; \theta) \rangle}
\end{equation}
and
\begin{equation}
    E^* = \frac{\langle \psi(\cdot; \theta^*) | H | \psi(\cdot; \theta^*) \rangle}{\langle \psi(\cdot; \theta^*) | \psi(\cdot; \theta^*) \rangle} \ge E_0
    \label{eq:loss}
\end{equation}
That is, we compute an upper bound $E^*$ to the ground state energy $E_0$.  The more expressive the ansatz, the tighter the bound will be.

\parbasic{Variational Monte Carlo}
The issue with the formulation to this point is the need to compute the inner products in Equations (\ref{eq:psi_argmin}) and (\ref{eq:loss}), which correspond to very high-dimensional integrals.  A standard solution to this problem is based on a Monte Carlo scheme.  To begin with, let us define the local energy $\elocal(x)$ and its real part $\elocal_r(x)$ as
\begin{equation}
    \elocal(x) \equiv \frac{H \psi(x)}{\psi(x)} = -\frac{\Delta \psi(x)}{2\psi(x)} + V(x)
    , \hspace{2mm}
    \elocal_r(x) = \mathfrak{Re}\{ \elocal(x) \}
    \label{eq:elocal_original}
\end{equation}
In this case, one can simplify the minimand in Equation (\ref{eq:psi_argmin}) (see \suppmat{}) as
\begin{equation}
    \frac{\langle \psi | H | \psi \rangle}{\langle \psi | \psi \rangle}
    \, = \, \mathbb{E}_{x \sim \rho(\cdot)} \left[ \elocal_r(x) \right]
    \, \approx \, \frac{1}{K} \sum_{k=1}^K \elocal_r\left(x^{(k)}\right)
    \label{eq:loss_for_sgd}
\end{equation}
where the $x^{(k)}$ are sampled from
\begin{equation}
    \rho(x) = \frac{|\psi(x)|^2}{\langle \psi | \psi \rangle}.
\end{equation}

\subsection{The General Approach}

As detailed in Sections \ref{sec:introduction} and \ref{sec:related_work}, a number of recent works have followed the above approach using a variety of neural networks as the ansatz for the wavefunction $\psi(\cdot; \theta)$.  In order to do so, one must be able to sample from $\rho(x; \theta) = |\psi(x; \theta)|^2 / \langle \psi | \psi \rangle$; and as the networks are quite general, the only feasible method for sampling is a Markov Chain Monte Carlo technique such as Langevin Monte Carlo \citep{umrigar1993diffusion} or any of several variations.  This can be time-consuming, as each sample is the solution of a stochastic differential equation as time goes to infinity.

A solution to this problem presents itself if we can somehow specify a wavefunction $\psi(x)$ which is easy to sample from.  We are interested in wavefunctions which satisfy the following three properties:
\begin{enumerate}[label=(W\arabic*)]
    \item There is an explicit functional form for the wavefunction $\psi(x)$.
    \item $\psi$ is antisymmetric.
    \item We can sample non-iteratively (in constant time) from $|\psi(\cdot)|^2$.
\end{enumerate}
The first two properties are necessary for any form of Variational Monte Carlo: (W1) allows us to evaluate the local energy $\elocal_r$ in (\ref{eq:elocal_original}) for use in (\ref{eq:loss_for_sgd}); and (W2) is required for valid electronic (Fermionic) wavefunctions.  But (W3) is the new ingredient: if we have a family of wavefunctions $\psi$ satisfying (W1)-(W3), then solving the minimization in (\ref{eq:loss}) via the Monte Carlo approach in (\ref{eq:loss_for_sgd}) will be considerably accelerated, as each sample will only require constant time to generate.  We add a fourth property, which is not strictly necessary but is both desirable and will prove useful:
\begin{enumerate}[label=(W\arabic*)]
    \setcounter{enumi}{3}
    \item $\psi$ is normalized, that is $\int |\psi(x)|^2 dx = 1$.
\end{enumerate}

It turns out that generating such wavefunctions is possible using the following procedure:
\begin{theorem}
    \label{thm:density_phase}
    Let $\rho(\cdot)$ be a probability density function which we can sample from in constant time.  Let $\rho(\cdot)$ satisfy two additional properties:
    \begin{enumerate}[label=(D\arabic*)]
        \item $\rho(x)$ is symmetric: $\rho(\pi x) = \rho(x)$ for all permutations $\pi \in \mathbb{S}_n$.
        \item $\rho(x) = 0$ if $x_i = x_j$ for any $i, j$.
    \end{enumerate}
    Finally, let $\kappa(x)$ be a complex function which satisfies $|\kappa(x)| = 1 \,\, \forall x$, and is nearly antisymmetric: 
    \begin{equation}
        \kappa(\pi x) = 
        \begin{cases}
            (-1)^\pi \kappa(x) & \text{if } x_i \neq x_j \text{ for all } i, j \\
            \bar{\kappa} & \text{otherwise}
        \end{cases}
    \end{equation}
    where $\bar{\kappa} \in \mathbb{C}$ is an arbitrary value with $|\bar{\kappa}|=1$.
    Then $\psi$ satisfies (W1)-(W4) if and only if $\psi$ can be written as $\psi(x) = \kappa(x) \sqrt{\rho(x)}$ with $\kappa$ and $\rho$ satisfying the above-stated properties.
\end{theorem}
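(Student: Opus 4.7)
The plan is to prove the biconditional by verifying each direction separately, with the natural choice $\rho(x) = |\psi(x)|^2$ and $\kappa(x) = \psi(x)/|\psi(x)|$ (wherever $\psi$ does not vanish) serving as the bridge. The two key observations guiding everything are that $|\kappa|\equiv 1$ forces $|\psi|^2 = \rho$, and that the diagonal set $\{x : x_i = x_j \text{ for some } i \neq j\}$ plays a distinguished role: both the antisymmetry of $\psi$ and property (D2) of $\rho$ force vanishing there, which is also exactly where the definition of $\kappa$ needs a convention rather than being pinned down.

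For the sufficiency direction, I would assume the decomposition $\psi = \kappa\sqrt{\rho}$ and verify each of (W1)--(W4) in turn. (W1) is immediate. For (W2), I would split into the two cases of the near-antisymmetry definition: off the diagonal, $\psi(\pi x) = \kappa(\pi x)\sqrt{\rho(\pi x)} = (-1)^\pi \kappa(x)\sqrt{\rho(x)} = (-1)^\pi \psi(x)$ using (D1) and the first branch of near-antisymmetry; on the diagonal, (D2) gives $\rho(x) = 0$, so both $\psi(x)$ and $\psi(\pi x)$ equal zero. (W3) follows because $|\psi(x)|^2 = |\kappa(x)|^2\rho(x) = \rho(x)$, which by hypothesis is sampleable in constant time. (W4) is just $\int |\psi|^2 = \int \rho = 1$ since $\rho$ is a probability density.

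For the necessity direction, I would set $\rho(x) := |\psi(x)|^2$, which is a PDF by (W4) and sampleable in constant time by (W3). The crucial step is (D2): if $x_i = x_j$ for some $i \neq j$, then the transposition $\tau_{ij} \in \mathbb{S}_n$ fixes $x$, so (W2) gives $\psi(x) = \psi(\tau_{ij} x) = -\psi(x)$, forcing $\psi(x) = 0$ and hence $\rho(x) = 0$. (D1) follows from $|(-1)^\pi \psi(x)|^2 = |\psi(x)|^2$. For $\kappa$, I would set $\kappa(x) := \psi(x)/|\psi(x)|$ wherever $\psi(x) \neq 0$, set $\kappa(x) := \bar\kappa$ on the entire diagonal, and for any off-diagonal orbit on which $\psi$ vanishes, pick a representative $x^*$, set $\kappa(x^*) := 1$, and extend by $\kappa(\pi x^*) := (-1)^\pi$; this is well-defined because the stabilizer of a point with distinct coordinates is trivial. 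The condition $|\kappa|\equiv 1$ then holds by construction, and near-antisymmetry follows directly: off the diagonal from the definition of $\kappa$ combined with (W2), and on the diagonal from the constant assignment $\kappa \equiv \bar\kappa$ (since a permutation of a diagonal point is still a diagonal point).

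The only subtlety worth flagging is the pointwise extension of $\kappa$ over the zero set of $\psi$ in the necessity direction; for the typical case in which this set has measure zero off the diagonal it is essentially cosmetic, but the theorem is stated pointwise, so the orbit-representative construction above is needed to make $\kappa$ globally well-defined. All other steps reduce to direct verification against the definitions.
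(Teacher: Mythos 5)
Your proof is correct and follows essentially the same route as the paper's: direct case-by-case verification of (W1)--(W4) from the factorization $\psi = \kappa\sqrt{\rho}$ in one direction, and in the other direction taking $\rho = |\psi|^2$, obtaining (D2) from the transposition/fixed-point argument, (D1) from antisymmetry, and $\kappa$ as the phase of $\psi$. Your orbit-representative definition of $\kappa$ on the off-diagonal zero set of $\psi$ is in fact slightly more careful than the paper's argument, which assigns $\bar{\kappa}$ only on the diagonal and reads off $\kappa(\pi x) = (-1)^{\pi}\kappa(x)$ without addressing nodal points where $\rho$ vanishes but the coordinates are distinct.
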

\proofmain{density_phase}
The general idea expressed in Theorem \ref{thm:density_phase} is that we can build the wavefunction $\psi$ out of an easy-to-sample-from density function satisfying additional properties (D1)-(D2); and a nearly antisymmetric phase function $\kappa$.  In what follows, we will show how to construct both of these ingredients.  But before doing so, we take a short detour to address the most important practical scenario, that of fixed spin multiplicity.

\subsection{Fixed Spin Multiplicity}
\label{sec:multiplicity}

\parbasic{Notation}
As in most approaches to this problem, we assume that the spin multiplicity of the molecule is specified, which is equivalent to fixing the number of spin up and spin down electrons, denoted $n_u$ and $n_d$ respectively, with $n_u + n_d = n$.  Define the \textit{canonical spin vector} to be given by $\bar{s} = [\uparrow, \dots, \uparrow, \downarrow, \dots, \downarrow]$, i.e.~the first $n_u$ are $\uparrow$, the last $n_d$ are $\downarrow$.
We let the sets of indices of up and down spin electrons for the canonical spin vector be denoted by $\nup = \{ 1, \dots, n_u \}$ and $\ndn = \{ n_u + 1, \dots, n \}$.  Finally, we will be interested in the subgroup of permutations in which a permutation is applied separately to spin-up and spin-down electrons.  We denote this subgroup by
\begin{equation}
    \G \equiv \mathbb{S}_{\nup} \times \mathbb{S}_{\ndn} \quad \quad (\G \text{ is a subgroup of } \mathbb{S}_n)
\end{equation}

\parbasic{Specification of the Density}
In the case of fixed spin multiplicity, the specification of the density $\rho(x)$ is simplified:
\begin{theorem}
    \label{thm:spin_multiplicity}
    Given a configuration $x = (r, s)$, let a permutation which maps the spin vector $s$ to the canonical spin vector $\bar{s}$ be given by $\bar{\pi}_s$, i.e. $\bar{s} = \bar{\pi}_s s$.  Let $\bar{\rho}(r)$ be a density function on electron positions (i.e.~no spins) satisfying
    \begin{enumerate}[label=(R\arabic*)]
        \item $\bar{\rho}$ is $\G$-invariant: $\bar{\rho}(\pi r) = \bar{\rho}(r) \text{ for all } \pi \in \G$
        \item $\bar{\rho}(r) = 0 \text{ if } r_i = r_j, \text{ for } i, j \in \nup \text{ or } i, j \in \ndn$
    \end{enumerate}
    A density $\rho(x) = \rho(r, s)$ satisfies conditions (D1)-(D2) in Theorem \ref{thm:density_phase} if and only if it may be written as  $\rho(r, s) = \bar{\rho}(\bar{\pi}_s r)$ for a density  $\bar{\rho}(r)$ satisfying conditions (R1) and (R2).
\end{theorem}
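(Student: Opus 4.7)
The plan is a straightforward two-direction equivalence, with the key conceptual point being that the map $s \mapsto \bar\pi_s$ is only defined up to left multiplication by an element of $\G$ (since $\G$ is precisely the stabilizer of $\bar s$ in $\mathbb{S}_n$). This observation is what forces the invariance (R1) and makes the definition $\bar\rho(\bar\pi_s r)$ well-posed.

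For the ``if'' direction, I would assume $\bar\rho$ satisfies (R1)--(R2) and set $\rho(r,s) \equiv \bar\rho(\bar\pi_s r)$. First I would check that this is well-defined: if both $\bar\pi_s$ and $\bar\pi_s'$ send $s$ to $\bar s$, then $\bar\pi_s (\bar\pi_s')^{-1}$ stabilizes $\bar s$ and so lies in $\G$, whence (R1) gives $\bar\rho(\bar\pi_s r) = \bar\rho(\bar\pi_s' r)$. Next, for (D1), for any $\pi \in \mathbb{S}_n$ the composition $\bar\pi_{\pi s}\,\pi$ maps $s$ to $\bar s$ and is therefore a legitimate choice of $\bar\pi_s$, so $\rho(\pi x) = \bar\rho(\bar\pi_{\pi s}\,\pi r) = \bar\rho(\bar\pi_s r) = \rho(x)$. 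For (D2), if $x_i = x_j$ then $r_i = r_j$ and $s_i = s_j$; the permutation $\bar\pi_s$ moves indices $i,j$ to a pair of indices both in $\nup$ (if $s_i = \uparrow$) or both in $\ndn$ (if $s_i = \downarrow$), and preserves the equality of the corresponding coordinates of $r$. Then (R2) forces $\bar\rho(\bar\pi_s r) = 0$.

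For the ``only if'' direction, I would define $\bar\rho(r) \equiv \rho(r, \bar s)$ and verify (R1), (R2), and the formula $\rho(r,s) = \bar\rho(\bar\pi_s r)$. Property (R1) follows because any $\pi \in \G$ fixes $\bar s$, so $\bar\rho(\pi r) = \rho(\pi r, \pi \bar s) = \rho(\pi(r,\bar s)) = \rho(r,\bar s) = \bar\rho(r)$ by (D1). Property (R2) is immediate from (D2): if $r_i = r_j$ with $i,j$ both in $\nup$ or both in $\ndn$, then in the configuration $(r,\bar s)$ the corresponding spins agree, so $x_i = x_j$ and $\rho(r,\bar s) = 0$. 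Finally, using (D1) with the specific permutation $\bar\pi_s$, we have $\rho(r,s) = \rho(\bar\pi_s r, \bar\pi_s s) = \rho(\bar\pi_s r, \bar s) = \bar\rho(\bar\pi_s r)$.

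No step presents a real obstacle; the only thing to be careful about is treating $\bar\pi_s$ correctly as a coset representative modulo $\G$, and making sure the well-definedness argument in the ``if'' direction is carried out before invoking the formula anywhere else. I would state that observation explicitly at the top of the proof so that the remainder reads cleanly.
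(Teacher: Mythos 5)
Your proposal is correct and follows essentially the same route as the paper's proof: both directions match (defining $\bar{\rho}(r) = \rho(r, \bar{s})$ for the ``only if'' part, and using the group-theoretic relation between $\bar{\pi}_{\pi s}$ and $\bar{\pi}_s$ for the ``if'' part), with your coset/well-definedness observation being just a cleaner packaging of the paper's explicit computation $\bar{\pi}_{\pi s} = \breve{\pi}\,\bar{\pi}_s\,\pi^{-1}$ with $\breve{\pi} \in \G$.
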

\proofmain{spin_multiplicity}
To summarize: in the case of fixed spin multiplicity, specifying a wavefunction $\psi$ satisfying our desired conditions (W1)-(W4) is equivalent to specifying a density $\bar{\rho}(r)$ satisfying conditions (R1)-(R2); and then applying the transformations given in Theorems \ref{thm:density_phase} and \ref{thm:spin_multiplicity} to map from $\bar{\rho}$ to $\psi$.\footnote{We have for the moment ignored the issue of the phase $\kappa$, which we return to in Sections \ref{sec:training_via_sgd} and \ref{sec:phase}.}
Therefore, henceforth we will focus exclusively on specifying densities $\bar{\rho}(r)$ satisfying conditions (R1)-(R2).  To avoid unnecessary notational complexity, we will drop the bars and simply write $\rho(r)$.

\section{Using Normalizing Flows to Construct the Wavefunction Ansatz}
\label{sec:normalizing_flow}

\subsection{Sufficient Properties of the Normalizing Flow's Base Density and Transformation}

Our goal is to use a normalizing flow to construct the density $\rho(r)$.  Let $D$ be the ambient dimension (i.e.~$D=3$) and $n$ be the number of electrons.  The relevant vectors will live in the space $\mathbb{R}^{Dn}$ construed as the Cartesian product $\mathbb{R}^D \times \dots \times \mathbb{R}^D$ (which is of course isomorphic to $\mathbb{R}^{Dn}$).  A normalizing flow will consist of two ingredients: (1) a base random variable $z$, which lives in $\mathbb{R}^{Dn}$, and is described by the density $\rho_z(z)$; (2) an invertible transformation $T: \mathbb{R}^{Dn} \to \mathbb{R}^{Dn}$, such that $r = T(z)$.  In this case, the density $\rho(r)$ is the push-forward of $\rho_z$ along $T$, and is given by the change of variables formula
\begin{equation}
    \rho(r) = \rho_z(T^{-1}(r))|\det J_{T^{-1}}(r)|
    \label{eq:change_of_variables}
\end{equation}
Recall that we would like our density $\rho(r)$ to satisfy conditions (R1)-(R2) laid out in Theorem \ref{thm:spin_multiplicity}.  The following theorem establishes conditions for this to occur:
\begin{theorem}
    \label{thm:normalizing_flow}
    Suppose that we have a normalizing flow, whose base density $\rho_z$ satisfies properties (R1) and (R2) from Theorem \ref{thm:spin_multiplicity}, and whose transformation $T$ is $\G$-equivariant.  Then the density resulting from the normalizing flow will satisfy properties (R1) and (R2).
\end{theorem}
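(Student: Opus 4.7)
The plan is to verify each of (R1) and (R2) directly from the change-of-variables formula (\ref{eq:change_of_variables}), using equivariance of $T$ (and consequently of $T^{-1}$) together with the assumed properties of the base density $\rho_z$. Throughout, I would represent each $\pi \in \G$ by a block permutation matrix $P_\pi$ acting on $\mathbb{R}^{Dn} = (\mathbb{R}^D)^n$, which is orthogonal with $|\det P_\pi| = 1$.

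\textbf{Preliminary: equivariance of $T^{-1}$ and invariance of the Jacobian.} From $T(\pi z) = \pi T(z)$, setting $r = T(z)$ yields $T^{-1}(\pi r) = \pi T^{-1}(r)$. Differentiating both sides in $r$ gives $J_{T^{-1}}(\pi r)\, P_\pi = P_\pi\, J_{T^{-1}}(r)$, i.e.\ $J_{T^{-1}}(\pi r) = P_\pi J_{T^{-1}}(r) P_\pi^{-1}$. Taking determinants and absolute values,
\begin{equation}
    |\det J_{T^{-1}}(\pi r)| = |\det J_{T^{-1}}(r)|.
\end{equation}

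\textbf{Verification of (R1).} For any $\pi \in \G$, I would compute
\begin{equation}
    \rho(\pi r) = \rho_z(T^{-1}(\pi r))\,|\det J_{T^{-1}}(\pi r)| = \rho_z(\pi T^{-1}(r))\,|\det J_{T^{-1}}(r)| = \rho_z(T^{-1}(r))\,|\det J_{T^{-1}}(r)| = \rho(r),
\end{equation}
where the second equality uses the equivariance of $T^{-1}$ and the Jacobian invariance above, and the third uses (R1) for $\rho_z$.

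\textbf{Verification of (R2).} Suppose $r_i = r_j$ with $i,j \in \nup$ (the case $i,j \in \ndn$ is identical). Let $\tau_{ij}$ be the transposition swapping coordinates $i$ and $j$; then $\tau_{ij} \in \mathbb{S}_{\nup} \subset \G$, and by hypothesis $\tau_{ij} r = r$. Let $z = T^{-1}(r)$. By equivariance of $T^{-1}$,
\begin{equation}
    \tau_{ij} z = \tau_{ij} T^{-1}(r) = T^{-1}(\tau_{ij} r) = T^{-1}(r) = z,
\end{equation}
so $z_i = z_j$. Applying (R2) for $\rho_z$ gives $\rho_z(z) = 0$, and since $|\det J_{T^{-1}}(r)|$ is finite ($T$ being an invertible smooth transformation of the flow), the change-of-variables formula yields $\rho(r) = 0$.

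\textbf{Where the difficulty lies.} There is no deep obstacle here; the theorem is essentially a bookkeeping exercise combining equivariance with the change-of-variables formula. The one step to be careful about is the invariance of $|\det J_{T^{-1}}|$ under the action of $\G$, which relies on $P_\pi$ being a (signed) permutation matrix acting by conjugation on the Jacobian; once that is in place, both (R1) and (R2) follow in a few lines.
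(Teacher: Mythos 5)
Your proof is correct, and the (R2) half is essentially identical to the paper's: both use the transposition $\pi_{ij}\in\G$ fixing $r$, the equivariance of $T^{-1}$ to conclude $z_i=z_j$, property (R2) of $\rho_z$, and the change-of-variables formula to get $\rho(r)=0$. The only real difference is in (R1): the paper simply invokes Theorem 1 of \citep{kohler2020equivariant} (a $\G$-invariant density pushed forward along a $\G$-equivariant diffeomorphism remains $\G$-invariant), whereas you reprove that fact directly, differentiating $T^{-1}(\pi r)=\pi T^{-1}(r)$ to get $J_{T^{-1}}(\pi r)=P_\pi J_{T^{-1}}(r)P_\pi^{-1}$ and hence $|\det J_{T^{-1}}(\pi r)|=|\det J_{T^{-1}}(r)|$, after which invariance of $\rho$ follows from the change of variables. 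Your route is self-contained and makes explicit why the Jacobian factor is unaffected by permutations (conjugation by an orthogonal permutation matrix), at the cost of assuming differentiability of $T^{-1}$, which is implicit anyway in the change-of-variables formula the paper uses; the paper's citation is shorter and covers more general isometry groups, but mathematically the two arguments coincide.
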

\proofmain{normalizing_flow}
Armed with this key result, we now set out to design the base density $\rho_z$ and transformation $T$ which satisfy the conditions of Theorem \ref{thm:normalizing_flow}.

\subsection{Base Density: Determinantal Point Processes}

In most cases in machine learning, the base density for a normalizing flow is taken to be a standard distribution, most often a Gaussian.  In our case, we require that the base density have certain special properties, namely (R1) and (R2) from Theorem \ref{thm:spin_multiplicity}.  It turns out that Determinantal Point Processes (DPPs) have just the properties we require.  In particular, we are interested in the class of DPPs known as Projection DPPs \citep{gautier2019two, lavancier2015determinantal}, which can be specified as follows.  We will let $y$ specify a generic point in $\mathbb{R}^D$. Let $h_k : \mathbb{R}^D \to \mathbb{R}$ for $k = 1, \dots, n$ be a set of $n$ functions which are orthogonal, that is $\langle h_i, h_j \rangle = \int_{\mathbb{R}^D} h_i(y) h_j(y) dy = \delta_{ij}$.  Let $H(y)$ be the column vector composed by stacking the individual functions $h_i(y)$ and define the kernel function as $K(y, y') = H(y)^T H(y')$.  Then for a given collection of $n$ points in $\mathbb{R}^D$, that is $r = (r_1, \dots, r_n)$, we define the $n \times n$ kernel matrix $\mathbf{K}_n(r)$:
\begin{equation}
    \mathbf{K}_n(r) =
    \begin{bmatrix}
        K(r_1, r_1) & \dots & K(r_1, r_n) \\
        \vdots & \ddots & \vdots \\
        K(r_n, r_1) & \dots & K(r_n, r_n)
    \end{bmatrix}    
\end{equation}
Using this kernel matrix, we specify the density of the Projection DPP as follows:
\begin{equation}
    \dpp(r; n) = \frac{1}{n!} \det \mathbf{K}_n(r)
    \label{eq:dpp}
\end{equation}
Since $\mathbf{K}_n(r)$ is positive semi-definite, it follows that its determinant is non-negative so that $\dpp(r; n)$ is non-negative, as desired.  A proof that $\dpp(r; n)$ is properly normalized (i.e.~integrates to $1$) can be found, for example, in Proposition 2.10 of \citep{johansson2006random}.

Given the notion of a Projection DPP, we may define the base density as follows.  As above, let the base random variable be $z$, where $z$ can be broken into spin-up and spin-down pieces, denoted $z_u$ and $z_d$.  (Specifically, $z_u$ and $z_d$ are the parts of $z$ corresponding to electrons in $\nup$ and $\ndn$, respectively.)  The base density can then be constructed by taking
\begin{equation}
    \rho_z(z) = \dpp(z_u; n_u) \dpp(z_d; n_d)
    \label{eq:base_density}
\end{equation}
That is, $z_u$ and $z_d$ are chosen from two independent Projection DPPs.  We then have the following theorem:
\begin{theorem}
    \label{thm:base_density}
    Let $\rho_z$ be the density specified in Equation (\ref{eq:base_density}).  Then $\rho_z$ satisfies conditions (R1) and (R2) from Theorem \ref{thm:spin_multiplicity}.
\end{theorem}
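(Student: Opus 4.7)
The plan is to verify each of the two required conditions directly from the factorized form $\rho_z(z) = \dpp(z_u; n_u)\,\dpp(z_d; n_d)$, by reducing (R1) and (R2) to elementary properties of the determinant of the kernel matrix $\mathbf{K}_n(r)$ in Equation~(\ref{eq:dpp}). Since $\rho_z$ is a product and each factor depends only on one spin sector, the problem splits cleanly into the spin-up and spin-down pieces, and it suffices to prove both properties for a single Projection DPP density $\dpp(\cdot; m)$ under the symmetric group $\mathbb{S}_m$.

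First I would handle (R1). A permutation $\pi \in \G = \mathbb{S}_{\nup} \times \mathbb{S}_{\ndn}$ decomposes as $(\pi_u, \pi_d)$ acting separately on the up and down blocks, so $\rho_z(\pi z) = \dpp(\pi_u z_u; n_u)\,\dpp(\pi_d z_d; n_d)$, and invariance under $\G$ reduces to $\mathbb{S}_m$-invariance of a single DPP. For $\sigma \in \mathbb{S}_m$, the matrix $\mathbf{K}_m(\sigma z_u)$ is obtained from $\mathbf{K}_m(z_u)$ by the simultaneous permutation of rows and columns, i.e.\ $\mathbf{K}_m(\sigma z_u) = P_\sigma^T \mathbf{K}_m(z_u) P_\sigma$ for the associated permutation matrix $P_\sigma$. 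Since $\det(P_\sigma)^2 = 1$, the determinant is unchanged, which gives (R1) after multiplying the two factors back together.

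Next I would handle (R2). Suppose $z_i = z_j$ with $i \ne j$ and both in $\nup$ (the case $i, j \in \ndn$ is identical). Then the $i$th and $j$th rows of $\mathbf{K}_{n_u}(z_u)$ coincide, since entry $(i,k)$ equals $K(z_i, z_k) = H(z_i)^T H(z_k)$ and similarly for entry $(j,k)$, and by assumption $H(z_i) = H(z_j)$. A matrix with two equal rows has vanishing determinant, so $\dpp(z_u; n_u) = 0$ and hence $\rho_z(z) = 0$, which is (R2).

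There is no real obstacle here — the two conditions are essentially tautologies once one unpacks the definition of $\mathbf{K}_n$ and uses that (a) permuting the arguments of a Projection DPP density conjugates its kernel matrix by a permutation matrix, and (b) coincident arguments produce repeated rows. The only thing to be careful about is that (R2) only requires vanishing when two positions collide \emph{within the same spin sector}, which is exactly what the product structure of (\ref{eq:base_density}) delivers: collisions across sectors (an up electron meeting a down electron at the same location) need not make $\rho_z$ vanish, and indeed the theorem does not ask them to.
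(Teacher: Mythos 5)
Your proposal is correct and follows essentially the same route as the paper's proof: for (R1) you reduce $\G$-invariance to invariance of each Projection DPP factor and use the fact that permuting the points conjugates $\mathbf{K}$ by a permutation matrix (so the determinant is unchanged), and for (R2) you observe that coincident points within a spin sector produce repeated rows (the paper says columns, which is equivalent since $\mathbf{K}$ is symmetric), forcing the determinant to vanish. No gaps; the remark about cross-sector collisions is a nice clarification but not needed.
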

\proofmain{base_density}
We therefore have an explicit form for the base density from Equations (\ref{eq:dpp}) and (\ref{eq:base_density}).  Furthermore, sampling from the base density amounts to sampling from two independent Projection DPPs.  A sampling procedure for Projection DPPs is specified in \suppmat{}.

We note that it is possible to make the base density itself learnable.  This is achieved by making each Projection DPP learnable, through the introduction of learnable orthogonal functions $h_k : \mathbb{R}^D \to \mathbb{R}$, that is $h_k(y; \theta)$, from which the kernel function $K(y, y'; \theta) = H(y; \theta)^T H(y; \theta)$ becomes learnable.  In order to retain orthogonality, one may proceed as follows.  Let $B_\theta: \mathbb{R}^D \to \mathbb{R}^n$ be a specified by a network with parameters $\theta$, for example a Multilayer Perceptron (in which case $B_\theta$ is a type of neural field), and let $H(y; \theta) = AB_\theta(y)$ for a square $n \times n$ matrix $A$.  Let the Gram matrix of the network $B_\theta$ be given by $\Xi_\theta$, that is $(\Xi_\theta)_{ij} = \langle (B_\theta)_i, (B_\theta)_j \rangle$.  If the eigendecomposition of $\Xi_\theta$ is given by $\Xi_\theta = U_\theta \Lambda_\theta U_\theta^T$, then orthogonality of $H(\cdot; \theta)$ is achieved by choosing $A = A_\theta = \Lambda_\theta^{-1/2} U_\theta^T$.

\subsection[G-Equivariant Layers]{$\G$-Equivariant Layers}

As noted in Section \ref{sec:normalizing_flow}, we require the normalizing flow transformation to be $\G$-equivariant.  Of course, chaining together many layers which are each $\G$-equivariant results in an overall transformation which is also $\G$-equivariant.  Now, suppose that a particular layer $\l$ can be written as
\begin{equation}
    r^\lp = T^\l(r^\l)
\end{equation}
where $r^\l = (r_1^\l, \dots, r_n^\l)$ and likewise for $r^\lp$.  We will need to see the action on the spin-up and spin-down electrons separately, so we denote $r_u^\l = (r_i^\l)_{i \in \nup}$ and $r_d^\l = (r_i^\l)_{i \in \ndn}$; and we may write
\begin{equation}
    r_u^\lp = T_u^\l(r_u^\l, r_d^\l) \quad \text{and} \quad r_d^\lp = T_d^\l(r_u^\l, r_d^\l)
    \label{eq:layer_by_spin}
\end{equation}
For notational convenience, we use $\alpha \in \{u, d\}$ to denote the spin, and the complement of the spin is given by $\alphac$ (i.e.~if $\alpha = u$ then $\alphac=d$ and vice-versa).  Then we have the following theorem:
\begin{theorem}
    \label{thm:equivariance_by_spin}
    The transformation $T^\l$ is $\G$-equivariant if and only if 
    \begin{equation}
        \Tal(\pi_\alpha \ral, \pi_\alphac \rcl) = \pi_\alpha \Tal(\ral, \rcl) \qquad \alpha \in \{u, d\}
        \label{eq:layer_eq_inv}
    \end{equation}
    That is, $\Tal$ is equivariant with respect to $r_\alpha^\l$, and invariant with respect to $r_\alphac^\l$.
\end{theorem}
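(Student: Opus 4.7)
The plan is to prove both directions by directly unpacking the definition of $\G$-equivariance in terms of the block decomposition (\ref{eq:layer_by_spin}). Since $\G = \mathbb{S}_{\nup} \times \mathbb{S}_{\ndn}$, every element of $\G$ has the form $(\pi_u, \pi_d)$, and its action on $r^\l = (r_u^\l, r_d^\l)$ splits as $(\pi_u, \pi_d) \cdot r^\l = (\pi_u r_u^\l, \pi_d r_d^\l)$. The $\G$-equivariance of $T^\l$ is therefore the single statement
\begin{equation}
T^\l(\pi_u r_u^\l, \pi_d r_d^\l) \;=\; (\pi_u T_u^\l(r_u^\l, r_d^\l),\, \pi_d T_d^\l(r_u^\l, r_d^\l))
\end{equation}
for every $(\pi_u, \pi_d) \in \G$ and every $r^\l$.

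For the forward direction ($\Rightarrow$), I would assume $\G$-equivariance and then, for each spin $\alpha \in \{u,d\}$, project the above vector equality onto its $\alpha$-block. Using (\ref{eq:layer_by_spin}), the $\alpha$-block of the left-hand side equals $\Tal(\pi_\alpha \ral, \pi_\alphac \rcl)$, while the $\alpha$-block of the right-hand side equals $\pi_\alpha \Tal(\ral, \rcl)$, yielding exactly (\ref{eq:layer_eq_inv}).

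For the reverse direction ($\Leftarrow$), I would assume (\ref{eq:layer_eq_inv}) holds for both $\alpha = u$ and $\alpha = d$, and simply concatenate the two identities. That is, for any $(\pi_u, \pi_d) \in \G$,
\begin{equation}
T^\l(\pi_u r_u^\l, \pi_d r_d^\l) = \bigl(T_u^\l(\pi_u r_u^\l, \pi_d r_d^\l),\, T_d^\l(\pi_u r_u^\l, \pi_d r_d^\l)\bigr) = \bigl(\pi_u T_u^\l(r_u^\l, r_d^\l),\, \pi_d T_d^\l(r_u^\l, r_d^\l)\bigr),
\end{equation}
which is the $\G$-equivariance of $T^\l$.

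There is no real obstacle here; the statement is essentially a tautological restatement of $\G$-equivariance once one acknowledges that $\G$ is a product group and that $T^\l$ splits accordingly into the two maps $T_u^\l, T_d^\l$. The only mild subtlety worth flagging explicitly is that the second argument of $\Tal$ receives an $\pi_\alphac$ action but the output is not transformed by $\pi_\alphac$ — this is precisely the ``invariance with respect to $r_\alphac^\l$'' phrasing in the theorem, and it follows from the fact that $\Tal$'s output lives entirely in the $\alpha$-block, on which $\pi_\alphac$ acts as the identity.
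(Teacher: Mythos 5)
Your proposal is correct and follows essentially the same route as the paper's proof: decompose $\pi \in \G$ as $(\pi_u, \pi_d)$, read off the blockwise equality componentwise for the forward direction, and concatenate the two componentwise identities for the reverse direction. The observation that the output of $\Tal$ lies in the $\alpha$-block (so $\pi_\alphac$ acts trivially on it) matches the paper's implicit reading of the invariance clause.
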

\proofmain{equivariance_by_spin}

We now show how to specify continuous and discrete normalizing flows satisfying Theorem \ref{thm:equivariance_by_spin}.

\subsection{Continuous Normalizing Flows}
\label{sec:continuous_flow}

According to Theorem \ref{thm:normalizing_flow}, we are required a find a transformation which is $\G$-equivariant.  We now show this can be achieved via a continuous normalizing flow.  We specify this flow via the ordinary differential equation (ODE)
\begin{equation}
    \frac{dv}{dt} = \Gamma_t(v), \quad \text{with} \quad  v(0) = z \sim \rho_z(\cdot) \quad \text{and} \quad r = v(1)
    \label{eq:continuous_flow}
\end{equation}
That is, the transformation $r = T(z)$ is derived as follows: the initial condition is sampled from the base density; and $r$ is gotten by integrating the ODE forward to time $t=1$.  $\Gamma$'s $t$-dependence is indicated via a subscript for notational convenience.  We then have the following theorem:
\begin{theorem}
    \label{thm:continuous_flow}
    Let the transformation $r = T(z)$ be specified as in Equation (\ref{eq:continuous_flow}).  Then $T$ is $\G$-equivariant if $\Gamma_t$ is $\G$-equivariant for all $t$.
\end{theorem}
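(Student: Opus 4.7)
The plan is to prove $\G$-equivariance of $T$ by exhibiting a pair of initial value problems whose solutions must coincide by uniqueness. Fix an arbitrary permutation $\pi \in \G$ and an arbitrary $z \in \mathbb{R}^{Dn}$. I would set $v(t)$ to be the solution of the ODE in Equation (\ref{eq:continuous_flow}) with initial condition $v(0) = z$, so that $T(z) = v(1)$. Separately, let $w(t)$ be the solution with initial condition $w(0) = \pi z$, so that $T(\pi z) = w(1)$. The goal is to show $w(1) = \pi v(1)$.

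The central trick is to define the candidate trajectory $\tilde{v}(t) \equiv \pi v(t)$ and verify that it satisfies the same initial value problem as $w$. At $t = 0$, we have $\tilde{v}(0) = \pi v(0) = \pi z = w(0)$. Since $\pi$ acts linearly (as a permutation of the $n$ points in $\mathbb{R}^D$, it is represented by a constant orthogonal matrix), differentiation commutes with its action, giving
\begin{equation}
    \frac{d\tilde{v}}{dt} = \pi \frac{dv}{dt} = \pi \Gamma_t(v(t)) = \Gamma_t(\pi v(t)) = \Gamma_t(\tilde{v}(t)),
\end{equation}
where the third equality invokes the hypothesized $\G$-equivariance of $\Gamma_t$ for each $t$. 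Thus $\tilde{v}$ satisfies exactly the same ODE and initial condition as $w$.

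By standard uniqueness of solutions to ODEs (assuming $\Gamma_t$ is, for instance, Lipschitz in its spatial argument, which is the standard regularity assumption for continuous normalizing flows), we conclude $\tilde{v}(t) = w(t)$ for all $t \in [0,1]$. Evaluating at $t=1$ yields $T(\pi z) = w(1) = \tilde{v}(1) = \pi v(1) = \pi T(z)$, which is precisely the $\G$-equivariance of $T$.

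I do not anticipate any serious obstacle. The only subtlety worth calling out is the reliance on ODE uniqueness, which in turn requires a mild regularity assumption on $\Gamma_t$; this is implicit in the very notion of a well-defined continuous normalizing flow, so it is safe to invoke without further comment. Everything else reduces to the commutation of $\pi$ (a linear isometry) with time differentiation and to a direct substitution of the equivariance property of $\Gamma_t$.
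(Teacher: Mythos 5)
Your proof is correct, but it takes a different route from the paper: the paper's entire proof of this theorem is a one-line appeal to Theorem 2 of \citep{kohler2020equivariant}, whereas you reprove that result from first principles. Your argument --- pushing the trajectory forward by $\pi$ (a constant linear/orthogonal action that commutes with $d/dt$), using the pointwise equivariance of $\Gamma_t$ to show $\pi v(t)$ solves the same initial value problem as the trajectory started at $\pi z$, and invoking uniqueness of ODE solutions --- is exactly the standard argument underlying the cited theorem, and it is sound. What your version buys is self-containedness and transparency about hypotheses: you make explicit the Lipschitz (or similar) regularity on $\Gamma_t$ needed for uniqueness, which the paper leaves implicit in the citation; note also that the cited result is stated for time-independent vector fields, so your handling of the explicit $t$-dependence (equivariance ``for all $t$'', with uniqueness for the non-autonomous ODE) is a small but genuine point in favor of writing the argument out. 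What the paper's route buys is brevity and a pointer to the companion statement (Theorem 1 there) on invariance of the push-forward density, which the paper reuses elsewhere (e.g.\ in the proofs of Theorems \ref{thm:normalizing_flow} and \ref{thm:induction}).
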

\proofmain{continuous_flow}
It therefore suffices to design a $\G$-equivariant function $\Gamma_t$.  Let us break this down by spin: from Theorem \ref{thm:equivariance_by_spin}, we know that this implies that for all $t$, we have that $\Gamma_t(\pi_\alpha r_\alpha, \pi_\alphac r_\alphac) = \pi_\alpha \Gamma_t(r_\alpha, r_\alphac) \text{ for } \alpha \in \{u, d\}$.  We show in \suppmat{} how to implement a layer of $\Gamma$ with a combination of multihead attention, fully connected layers, and linear projections ($\Gamma$ can be composed of many such layers).

Continuous normalizing flows are elegant; however, they can present some numerical difficulties.  In particular, the issue of ODE stiffness frequently arises in deep learning pipelines involving continuous normalizing flows.  Thus, we now present an alternative method, based on discrete normalizing flows.

\subsection{Discrete Normalizing Flows}
\label{sec:discrete_normalizing_flows}

Our goal is now to design such functions $T_u^\l$ and $T_d^\l$ which satisfy Equation (\ref{eq:layer_eq_inv}), and for which the overall transformation $T^\l = (T_u^\l, T_d^\l)$ is invertible.  The goal of the layer we propose here is to \textit{not sacrifice on expressivity}, especially when compared to many layers which are designed for discrete normalizing flows.  In particular, the main issue will be to show that the expressivity can be retained even with the joint requirements of invertibility and $\G$-equivariance.  We note that the kind of transformation we propose below is not generally used for normalizing flows, as the determinant of its Jacobian is not fast to compute; however, this is not an issue in our case, as the dimension of the spaces we are dealing with is moderate in size.  For a more detailed discussion, see \suppmat{}.

To solve this problem, we introduce the Split Subspace Layer; we note that this layer may be of broader interest in machine learning, independent of the current setting.  As before, we take $D$ to represent the ambient spatial dimension; in our case, $D=3$.  A key parameter for the $\l^{th}$ layer will be the orthogonal matrix $\Lal \in O(D)$; in particular, we divide this matrix into 2 pieces
\begin{equation}
    \Lal = [\bal, \xal]
    \text{ with }
    \bal \in \mathbb{R}^{D \times D_\beta}
    \text{ and }
    \xal \in \mathbb{R}^{D \times (D-D_\beta)}
\end{equation}
That is, $\bal$ represents the first $D_\beta$ columns of $\Lal$, and $\xal$ represents the final $D-D_\beta$ columns.  For each electron $i$, we compute the inner product of its coordinates with $\bal$, i.e.
\begin{equation}
    \gail = (\bal)^T \rail \quad \text{so that } \gail \in \mathbb{R}^{D_\beta}
\end{equation}
We can collect the individual vectors $\gail$ into a list $\gal = ( \gail )_{i \in \na}$.
Given this, we define the Split Subspace Layer $T_\alpha^\l$ on a per-electron basis by
\begin{equation}
    \railp = \Tail(\ral, \rcl) = \rail + \xal \Pail(\gal, \gcl)
    \label{eq:layer_main}
\end{equation}
where $\Pal$ is a network, $\Pail$ is the part of (the output of) $\Pal$ corresponding to the $i^{th}$ electron, and $\Pail(\gal, \gcl) \in \mathbb{R}^{D-D_\beta}$.  A crucial aspect of this $\Pal$ network is that it captures dependencies between all electrons.  This can been seen by examining Equation (\ref{eq:layer_main}): $\Pail$ depends on both $\gal$ and $\gcl$.  But recall that $\gal$ is a list of the vectors $\gail$ for each electron $i$ with spin $\alpha$, and $\gcl$ is the corresponding list for electrons with the complement spin $\alphac$.   As a result, $\Pail$ depends on all electrons, as desired.

The layer is referred to as the Split Subspace Layer due to the fact that its input is one subspace of $\mathbb{R}^D$, given by $\bal$; whereas its output is in the orthogonal complement of this subspace, given by $\xal$.  Note that there are multiple possible versions of this layer, as any choice $D_\beta$ in the set $\{ 1, \dots, D-1 \}$ is valid.  Since in our case $D = 3$, this gives us exactly two choices: $D_\beta = 1$ or $D_\beta = 2$.

The main ingredient of the layer is the network $\Pal$.  We now show two things: (1) the layer is invertible for any choice of $\Pal$ (2) we derive conditions on $\Pal$ to achieve $\G$-equivariance of $\Tal$.
\begin{theorem}
    \label{thm:layer_inverse_equivariance}
    Let $T^\l$ be a Split Subspace Layer, as given in Equation (\ref{eq:layer_main}).  Then $T^\l$ is invertible.  In particular, let $\gailrp = (\bal)^T \railp$; then the inverse of the layer is given by
    \begin{equation}
        \rail = \railp - \xal \Pail(\galrp, \gclrp)
    \end{equation}
    Furthermore, the layer $T^\l$ is $\G$-equivariant if
    \begin{equation}
        \Pal(\pi_\alpha \gal, \pi_\alphac \gcl) = \pi_\alpha \Pal(\gal, \gcl)
        \label{eq:phi_equivariance}
    \end{equation}
    i.e.~if $\Pal(\gal, \gcl)$ is equivariant with respect to permutations on $\gal$ and invariant with respect to permutations on $\gcl$.
\end{theorem}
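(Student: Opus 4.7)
The plan is to reduce both claims to a single structural fact: since $\Lal = [\bal, \xal] \in O(D)$, the columns of $\bal$ are orthogonal to those of $\xal$, so $(\bal)^T \xal = 0$. This is the engineered feature of the Split Subspace Layer, and everything else follows by short computations.

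First I would verify invertibility by showing that the layer preserves the $\bal$-projections of the electron coordinates. Left-multiplying the forward update $\railp = \rail + \xal \Pail(\gal,\gcl)$ by $(\bal)^T$ kills the second term and yields $\gailrp = \gail$. Collecting over $i\in\na$ gives $\galrp = \gal$ and likewise $\gclrp = \gcl$. The proposed inverse formula then collapses immediately: $\railp - \xal\Pail(\galrp,\gclrp) = \railp - \xal\Pail(\gal,\gcl) = \rail$. It is essential to note that the inverse can be evaluated purely from $r^\lp$, because $\gailrp$ is manifestly a function of $\railp$ alone; this is what makes the layer genuinely invertible for any choice of $\Pal$, rather than merely injective.

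Next I would establish $\G$-equivariance. By Theorem \ref{thm:equivariance_by_spin}, it suffices to verify $\Tal(\pi_\alpha\ral, \pi_\alphac\rcl) = \pi_\alpha \Tal(\ral,\rcl)$ for both $\alpha \in \{u,d\}$. The projection step commutes with permutations, so replacing $(\ral, \rcl)$ by $(\pi_\alpha \ral, \pi_\alphac \rcl)$ replaces $(\gal, \gcl)$ by $(\pi_\alpha \gal, \pi_\alphac \gcl)$. Applying the update rule per electron and invoking hypothesis (\ref{eq:phi_equivariance}) gives $(\pi_\alpha \ral)_i + \xal \bigl( \pi_\alpha \Pal(\gal,\gcl) \bigr)_i$, which is precisely the $i^\text{th}$ entry of $\pi_\alpha \Tal(\ral,\rcl)$.

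There is no substantive obstacle; the whole argument is short by design. The only point requiring care is the bookkeeping in the equivariance step, where one must track that the global matrix $\xal$ is permutation-independent while the per-electron coordinates $\rail$ and the network output $\Pail$ both transform consistently under $\pi_\alpha$. The fact that invertibility holds for \emph{any} $\Pal$ is what makes the Split Subspace Layer attractive: invertibility and equivariance are cleanly decoupled, with the former imposed by the orthogonal-complement structure and the latter by a constraint on the network alone.
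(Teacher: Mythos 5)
Your proposal is correct and follows essentially the same route as the paper's proof: invertibility via the orthogonality $(\bal)^T\xal = 0$, which shows the layer preserves the $\bal$-projections so that $\gailrp = \gail$ and the inverse is computable from layer-$\lp$ quantities alone, and $\G$-equivariance by pushing the permutation through the projection step and invoking the hypothesis (\ref{eq:phi_equivariance}) via Theorem \ref{thm:equivariance_by_spin}. The only cosmetic difference is that the paper writes the equivariance step as a chain of equivalences (effectively proving an ``if and only if''), whereas you verify only the ``if'' direction, which is all the theorem claims.
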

\proofmain{layer_inverse_equivariance}
The Split Subspace Layer therefore depends on implementation of the network $\Pal$ so that it satisfies Equation (\ref{eq:phi_equivariance}).  We show in \suppmat{} how $\Pal$ can be implemented with a combination of multihead attention, fully connected layers, and linear projections.  We specify a more general version of the Split Subspace Layer in \suppmat{}.

We now comment on the complexity of this method vs. that of standard MCMC approaches to sampling.  In our case, a single sample may be drawn by passing a sample from the base density through the flow network; in other words, only a single call to the flow network is required.  By contrast, when sampling using MCMC techniques, each sample requires many calls to the network representing the wavefunction or density.  For example, in using Langevin Monte Carlo techniques, a network representing the (unnormalized) density $\rho(r)$ may be sampled by the iterations
\begin{equation}
    r_{k+1} = r_k + \tau \nabla \log \rho(r_k) + \sqrt{2\tau} \xi_k
\end{equation}
where $\xi_k$ is Gaussian noise.  In the limit of $\tau \to 0$, as $k \to \infty$, the samples thus generated will be distributed according to (the normalized version of) the density $\rho(r)$.  We note that in the case of finite $\tau$, one often adds treats each new iterate as a Metropolis-Hastings style proposal, which can be correspondingly accepted or rejected.  The key point, however, is that a single sample requires many calls to the network representing $\rho$, whereas in our case only a single network call is required.

\subsection{Training via SGD}
\label{sec:training_via_sgd}

\parbasic{Log Domain: Density}
In order to avoid numerical issues, it is best to operate in the log domain.  Suppose that 
\begin{equation}
    \psi(r) = e^{q(r) + iw(r)}
\end{equation}
so that
\begin{equation}
    q(r) = \tfrac{1}{2} \log \rho(r)
    \hspace{2.0mm} \text{and} \hspace{2.0mm}
    w(r) = \text{atan2}\left( \kappa_i(r), \kappa_r(r) \right)
\end{equation}
where $\kappa_r(r)$ and $\kappa_i(r)$ are the real and imaginary parts of the phase $\kappa(r)$, respectively; and atan2 is the ``full'' arctangent.

The log-density $q(r; \theta)$ may be computed for both continuous and discrete normalizing flows, where we now introduce the parameters $\theta$ of the network explicitly.  Consider a sample $z$ chosen from the base density $\rho_z(z)$, and in analogy to $q(r)$, define $q_z(z) = \tfrac{1}{2} \log \rho_z(z)$.  Now, in the case of a continuous normalizing flow, let $v(t)$ satisfy Equation (\ref{eq:continuous_flow}); then $q(r; \theta)$ can be by computed \citep{chen2018neural} by solving the ODE
\begin{equation}
    \frac{da}{dt} = -\text{Trace} \left( \frac{\partial \Gamma_t}{\partial v}(v(t); \theta) \right)
\end{equation}
with $a(0) = q_z(z)$ and $q(r; \theta) = a(1)$.  This is the continuous analogue of the change of variables formula.
In the case of a discrete normalizing flow, fix the following notation: $r^0 = z$, $r = r^{L+1}$, and $T = T^L \circ \dots \circ T^0$.  Then we may use a logarithmic version of the standard change of variables formula (\ref{eq:change_of_variables}):
\begin{equation}
    q(r; \theta) = q_z \left( T^{-1} (r; \theta) \right) + \frac{1}{2} \sum_{\l=0}^L \log \left| \det J_{(T^\l)^{-1}}(r^\lp; \theta) \right| \label{eq:q}
\end{equation}

\parbasic{Log Domain: Gradient of the Objective}
Recall that our goal in finding an approximation to the ground state wavefunction is to solve the optimization problem in Equation (\ref{eq:loss}).  Using Equation (\ref{eq:loss_for_sgd}) and noting that $\langle \psi(\cdot; \theta) | \psi(\cdot; \theta) \rangle = 1$ since $\rho(\cdot; \theta)$ is normalized, we may write the objective function to be minimized as
\begin{align}
    \mathcal{L}(\theta)
    & = \langle \psi(\cdot; \theta) | H | \psi(\cdot; \theta) \rangle \notag \\
    & = \mathbb{E}_{r \sim \rho(\cdot; \theta)} \left[ \elocal_r(r; \theta) \right]
    \, \approx \, \frac{1}{K} \sum_{k=1}^K \elocal_r\left(r^{(k)}; \theta \right)
    \label{eq:loss_redux}
\end{align}
with samples $r^{(k)} \sim \rho(\cdot; \theta)$.  Then we have the following theorem, which shows that the local energy can be written entirely as a function of $q(r; \theta)$ and the potential $V(r)$, so that the phase $w(r; \theta)$ does not appear; and furthermore gives the gradient of the objective function $\mathcal{L}(\theta)$.
\begin{theorem}
    \label{thm:log_domain}
    The local energy can be written as
    \begin{equation}
        \elocal_r(r; \theta) = -\tfrac{1}{2} \Delta_r q(r; \theta) -\tfrac{1}{2} \| \nabla_r q(r; \theta) \|^2 + V(r)
        \label{eq:elocal}
    \end{equation}
    In particular, the local energy is independent of the phase $w(r; \theta)$.  Furthermore, let
    \begin{equation}
        \Omega(r; \theta) = 2 \left( \elocal_r(r; \theta) - \mathbb{E}_{r \sim \rho(\cdot; \theta)} \left[ \elocal_r(r; \theta) \right] \right) \nabla_\theta q(r; \theta)
        \label{eq:omega}
    \end{equation}
    Then the gradient of the loss function may be written as
    \begin{equation}
        \nabla_\theta \mathcal{L}(\theta) = \mathbb{E}_{r \sim \rho(\cdot; \theta)} \left[ \Omega(r; \theta) \right]
                                          \approx \frac{1}{K} \sum_{k=1}^K \Omega \left(r^{(k)}; \theta \right)
        \label{eq:loss_gradient}
    \end{equation}
    with samples $r^{(k)} \sim \rho(\cdot; \theta)$.
\end{theorem}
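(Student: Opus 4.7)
The plan is to decompose the argument into two largely independent computations: a direct calculus computation yielding the local-energy formula, and a log-derivative (score-function) computation yielding the gradient identity.

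For the local-energy formula, I would substitute $\psi(r;\theta) = e^{q(r;\theta) + i w(r;\theta)}$ into $\elocal(r) = -\Delta_r \psi / (2\psi) + V(r)$ and grind through the derivatives. A short calculation gives $\nabla_r \psi = \psi(\nabla_r q + i \nabla_r w)$ and then
\[
\frac{\Delta_r \psi}{\psi} \;=\; \|\nabla_r q\|^2 - \|\nabla_r w\|^2 + \Delta_r q \;+\; i\bigl(2 \nabla_r q \cdot \nabla_r w + \Delta_r w\bigr),
\]
so taking the real part yields
\[
\elocal_r(r;\theta) \;=\; -\tfrac{1}{2}\Delta_r q \;-\; \tfrac{1}{2}\|\nabla_r q\|^2 \;+\; \tfrac{1}{2}\|\nabla_r w\|^2 \;+\; V(r).
\]
To drop the $\|\nabla_r w\|^2$ term and obtain the phase-independent formula (\ref{eq:elocal}), I would appeal to the construction of $\kappa$ developed in Section \ref{sec:phase}: because $\kappa$ has unit modulus and is forced by the near-antisymmetry property to pick up a discrete sign under odd permutations, the natural instantiation has $\kappa$ locally constant on each connected component of the support of $\rho$. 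Then $w = \mathrm{atan2}(\kappa_i, \kappa_r)$ is piecewise constant and $\nabla_r w = 0$ almost everywhere with respect to $\rho$, eliminating the offending term.

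For the gradient identity I would start from $\mathcal{L}(\theta) = \int \rho(r;\theta)\, \elocal_r(r;\theta)\, dr$ (using normalization $\langle \psi|\psi\rangle = 1$) and differentiate under the integral. The key observation is the score-function identity $\nabla_\theta \rho = \rho\, \nabla_\theta \log\rho = 2\rho\, \nabla_\theta q$, which follows immediately from $\log \rho = 2q$. Substituting,
\[
\nabla_\theta \mathcal{L}(\theta) \;=\; \int \rho(r;\theta)\,\bigl[\nabla_\theta \elocal_r(r;\theta) + 2\,\elocal_r(r;\theta)\, \nabla_\theta q(r;\theta)\bigr]\, dr \;=\; \mathbb{E}_{r \sim \rho(\cdot;\theta)}\bigl[\Omega(r;\theta)\bigr],
\]
which matches (\ref{eq:loss_gradient}); the Monte Carlo estimator is then immediate from samples $r^{(k)} \sim \rho(\cdot;\theta)$.

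The main obstacle is the phase-vanishing step in Part 1: the bare calculus explicitly produces a $\tfrac{1}{2}\|\nabla_r w\|^2$ contribution, so the phase-independence claim genuinely relies on the (locally constant) structure of $\kappa$ promised in Section \ref{sec:phase}, together with the fact that the codimension-one set where $\kappa$ may jump has $\rho$-measure zero by property (D2) / (R2). A secondary but standard hurdle is justifying the interchange of $\nabla_\theta$ and the integral in Part 2, which follows from dominated convergence provided $\psi(\cdot;\theta)$ and its $\theta$-derivatives are suitably regular — a mild assumption already implicit in the variational Monte Carlo framework.
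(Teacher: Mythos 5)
Your proposal is correct and follows essentially the same route as the paper: the same substitution $\psi = e^{q+iw}$ and real-part computation (producing the $+\tfrac{1}{2}\|\nabla_r w\|^2$ term), the same observation that the phase is locally constant---taking only the discrete values available to a real ground state---so that this term vanishes $\rho$-almost everywhere, and the same score-function identity $\nabla_\theta \rho = 2\rho\,\nabla_\theta q$ combined with differentiation under the integral for the gradient formula. One minor correction: the negligibility of the set where $w$ jumps does not follow from (D2)/(R2), which only force $\rho=0$ where two same-spin electrons coincide; the paper instead argues that sign changes occur precisely where $\psi=0$, hence $\rho=0$, so such configurations are never sampled---though your alternative observation that the codimension-one jump set is Lebesgue-null (hence $\rho$-null, $\rho$ being a density) equally suffices for the expectation.
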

\proofmain{log_domain}
Thus, in order to optimize the objective in Equation (\ref{eq:loss_redux}), we may use gradient descent using the estimate for the gradient in Equation (\ref{eq:loss_gradient}).  A detailed version of the optimization routine is given in \suppmat{}.

\section{Further Details: Phase, Cusps, and Induction}
\label{sec:further_details}

\subsection{The Phase}
\label{sec:phase}

Since the Hamiltonian is time-reversal invariant  and  Hermitian, both its eigenvalues and its eigenfunctions are real.  Since the ground-state wavefunction we are looking for is real, the phase can be taken to belong to the two element set $\{0, \pi\}$.  Given that we now know how to solve for an approximation to the density $\rho_0(r)$ corresponding to the ground state wavefunction, we now show one way of assigning the phase so that the resulting ground state wavefunction $\psi_0(r)$ is appropriately antisymmetric.
\begin{theorem}
    \label{thm:phase}
    Let $\rho_0(r)$ be the the density for the ground state wavefunction. Let $\prec$ be a strict total order on $\mathbb{R}^D$, and define the set
    \begin{align}
        \mathcal{R} =
        \{r = (r_1, \dots r_n):
        & \hspace{1.0mm} r_1 \prec r_2 \prec \dots \prec r_{n_u} 
        \hspace{2.0mm} \text{and} \notag \\
        & r_{n_u+1} \prec r_{n_u+2} \prec \dots \prec r_n \}
    \end{align}
    For any $r$ without $r_i = r_j$, define the permutation $\pir \in \G$ by $\pir r \in \mathcal{R}$.  Then a valid antisymmetric ground state wavefunction is given by
    \begin{equation}
        \psi_0(r) = 
        \begin{cases}
            (-1)^\pir \sqrt{\rho_0(r)} & \text{if } r_i \neq r_j \,\, \forall i, j \\
            0 & \text{otherwise}
        \end{cases}
        \label{eq:phase_integrated}
    \end{equation}
\end{theorem}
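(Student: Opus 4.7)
The plan is to verify directly that the stated $\psi_0$ satisfies $\psi_0(\pi r) = (-1)^\pi \psi_0(r)$ for every $\pi \in \G$, which is the relevant notion of antisymmetry once the spin vector has been fixed to its canonical form as in Section~\ref{sec:multiplicity}. The check splits along the two cases appearing in (\ref{eq:phase_integrated}).

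First I would dispose of the coincident case. The collision set $\{r : r_i = r_j \text{ for some } i \neq j\}$ is $\G$-invariant, since permuting within the up- and down-blocks only relabels any coincident components. Thus both $\psi_0(r)$ and $\psi_0(\pi r)$ vanish by definition, and the antisymmetry identity reduces to $0 = (-1)^\pi \cdot 0$. This set has Lebesgue measure zero, so the prescription $\psi_0 \equiv 0$ there is compatible with $|\psi_0|^2 = \rho_0$ almost everywhere; moreover, on the intra-spin portion of this set $\rho_0$ itself already vanishes by (R2).

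Next I would handle the generic case of an $r$ with all $r_i$ distinct. The core step is the transformation law $\pi_\prec(\pi r) = \pir \, \pi^{-1}$ for $\pi \in \G$. Well-definedness of $\pir$ itself follows because $\prec$ is a strict total order on $\mathbb{R}^D$: the distinct entries of $r_u$ (resp.~$r_d$) admit a unique $\prec$-increasing arrangement, and the corresponding pair of sorting permutations is the unique element of $\G = \mathbb{S}_{\nup} \times \mathbb{S}_{\ndn}$ carrying $r$ into $\mathcal{R}$. For the transformation law, $\pi_\prec(\pi r)$ is characterized by $\pi_\prec(\pi r)(\pi r) \in \mathcal{R}$, i.e.\ $(\pi_\prec(\pi r)\,\pi)\,r \in \mathcal{R}$, so by uniqueness $\pi_\prec(\pi r)\,\pi = \pir$. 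Taking signs and using $(-1)^{\pi^{-1}} = (-1)^\pi$ yields $(-1)^{\pi_\prec(\pi r)} = (-1)^\pi (-1)^\pir$.

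Combining this with the $\G$-invariance of $\rho_0$ from (R1), which gives $\sqrt{\rho_0(\pi r)} = \sqrt{\rho_0(r)}$, produces
\begin{equation*}
    \psi_0(\pi r) = (-1)^{\pi_\prec(\pi r)} \sqrt{\rho_0(\pi r)} = (-1)^\pi (-1)^\pir \sqrt{\rho_0(r)} = (-1)^\pi \psi_0(r),
\end{equation*}
as required. That $|\psi_0(r)|^2 = \rho_0(r)$ on the distinct set is immediate from $|(-1)^\pir| = 1$. The only substantive ingredient is the transformation law for $\pir$; once that is pinned down by the uniqueness argument, the sign bookkeeping and the invocation of (R1) are routine, and I do not anticipate further obstacles.
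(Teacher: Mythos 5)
Your proposal is correct and follows essentially the same route as the paper's proof: establish the key identity $\pi_\prec(\pi r)\,\pi = \pir$ from the defining property $\pir r \in \mathcal{R}$, combine the resulting sign factorization with the $\G$-invariance of $\rho_0$, and dispose of the coincident case by noting $\psi_0$ vanishes there. Your added remarks on the well-definedness (uniqueness) of $\pir$ and the measure-zero compatibility with $|\psi_0|^2 = \rho_0$ are harmless refinements of the same argument.
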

\proofmain{phase}
Thus, given the density $\rho_0$, we can use Theorem \ref{thm:phase} to easily compute the ground state wavefunction $\psi_0$.  A question remains: what is the strict total order $\prec$?  Any choice is valid, but the simplest thing to do is to use lexicographic ordering on the coordinates of the two points in $\mathbb{R}^D$ that are being compared.


\subsection{Incorporating Cusps}

\parbasic{Electron-Electron Cusps}
Wavefunctions are known to have certain non-smooth properties, known as cusps.  In particular, the gradient of the wavefunction should exhibit a discontinuity when two electrons coincide.  One way to incorporate such gradient discontinuities is via the introduction of terms which depend on the distance between electrons \citep{pfau2020ab}; as the distance is itself a continuous but non-smooth function of the electron positions, using distances can allow us to model such cusps.
In the case of the discrete normalizing flow, our goal will be to design a layer which incorporates the inter-electron distances directly.  Given the requirements of a normalizing flow, the challenge is to enforce invertibility for such a layer.  We have the following result:
\begin{theorem}
    \label{thm:cusps}
    Let the set of distances be given by $\Dl = \left\{ \Dijl \right\}_{i < j}$ where $\Dijl = \| \ril - \rjl \|$.  Given a layer of the form
    \begin{equation}
        \rilp = \Rl(\Dl; \theta) \, \ril + \tl(\Dl; \theta)
        \label{eq:cusps}
    \end{equation}
    with $\Rl(\Dl; \theta) \in O(D)$ and $\tl(\Dl; \theta) \in \mathbb{R}^D$. Then the layer is both $\G$-equivariant as well as invertible.
\end{theorem}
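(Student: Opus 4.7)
The proof plan has two prongs, corresponding to the two assertions: $\mathbb{G}$-equivariance and invertibility. Both will hinge on a single key observation: the orthogonal matrix $\Rl$ preserves Euclidean distances, so the set $\Dl$ is unchanged by the layer, and $\Dl$ itself is permutation-invariant as a multiset.

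\textbf{Step 1 (Permutation invariance of $\Dl$):} The first move is to observe that for any $\pi \in \mathbb{S}_n$ (and hence for any $\pi \in \G$), the collection $\{\|r_{\pi(i)}^\l - r_{\pi(j)}^\l\|\}_{i<j}$ is identical as a multiset to $\{\|\ril - \rjl\|\}_{i<j}$, since the mapping $(i,j) \mapsto (\pi(i), \pi(j))$ simply relabels the unordered pairs. Reading the notation $\Dl = \{\Dijl\}_{i<j}$ as a multiset (which is necessary for $\Rl(\Dl;\theta)$ and $\tl(\Dl;\theta)$ to be well-defined independently of the pair ordering), it follows that both $\Rl$ and $\tl$ are invariant under every permutation in $\mathbb{S}_n$.

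\textbf{Step 2 ($\G$-equivariance):} For any $\pi \in \G$ and any electron $i$, applying the layer to $\pi r^\l$ gives
\begin{equation}
    \Rl(\Dl;\theta)\,(\pi r^\l)_i + \tl(\Dl;\theta) = \Rl(\Dl;\theta)\,r_{\pi^{-1}(i)}^\l + \tl(\Dl;\theta) = r_{\pi^{-1}(i)}^\lp = (\pi r^\lp)_i,
\end{equation}
using Step 1 to leave $\Rl$ and $\tl$ unchanged. This yields the required $\G$-equivariance (in fact full $\mathbb{S}_n$-equivariance, which is strictly stronger).

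\textbf{Step 3 (Invertibility via distance preservation):} Subtracting the layer definition for indices $i$ and $j$, the translation $\tl$ cancels and we obtain $r_i^\lp - r_j^\lp = \Rl(\ril - \rjl)$. Since $\Rl \in O(D)$ is an isometry, $\Dijlp = \|r_i^\lp - r_j^\lp\| = \|\ril - \rjl\| = \Dijl$ for every pair $i,j$. Hence the multiset $\Dlp$ equals $\Dl$, so $\Rl$ and $\tl$ can be recovered from the output alone by evaluating $\Rl(\Dlp;\theta)$ and $\tl(\Dlp;\theta)$. The inverse of the layer is then simply
\begin{equation}
    \ril = \bigl(\Rl(\Dlp;\theta)\bigr)^T \bigl(\rilp - \tl(\Dlp;\theta)\bigr),
\end{equation}
which is well-defined because orthogonal matrices are always invertible.

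The only real subtlety, and the one I would be careful about, is Step 1: the argument requires reading $\Dl$ as a multiset so that $\Rl$ and $\tl$ are symmetric functions of the inter-electron distances. Once this interpretation is fixed, both equivariance and invertibility follow almost immediately from the isometry property of $O(D)$, which is the structural reason the construction works in the first place.
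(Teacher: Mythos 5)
Your proof is correct and follows essentially the same route as the paper's: equivariance comes from the fact that $\Dl$ is an unordered (multi)set of distances and hence invariant under permutations, and invertibility comes from $\Rl \in O(D)$ preserving inter-electron distances so that $\Dlp = \Dl$ and the inverse $\ril = \Rl(\Dlp;\theta)^T(\rilp - \tl(\Dlp;\theta))$ depends only on layer-$\lp$ quantities. The multiset subtlety you flag is exactly the point the paper makes, and your observation that the layer is in fact fully $\mathbb{S}_n$-equivariant is a harmless strengthening.
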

\proofmain{cusps}
The essence of this layer to rotate all electrons in a given configuration $r = (r_1, \dots, r_n)$ by the same rotation matrix $\Rl$ and translation vector $\tl$; and the rotation matrix and translation vector are both functions the configuration $r$ entirely through the distances $\Dl$.  The latter fact is crucial, as it means that different configurations $r$ are treated differently, which gives the layer expressivity.  An implementation of this layer based on a Deep Set architecture \citep{zaheer2017deep} is given in \suppmat{}.

It is also known that the gradient of the wavefunction should exhibit a discontinuity when an electron and nucleus coincide.  The treatment is similar, and is given in \suppmat{}.

\subsection{Induction Across Multiple Molecules}
\label{sec:multiple_molecules}

In an effort to accelerate the ground state computation, we may try to learn the ground state wavefunctions and energies for an entire class of molecules simultaneously, as in \citep{gao2023generalizing, scherbela2024towards, gerard2024transferable}.  In particular, the molecular parameters are given by $R = (R_1, \dots, R_N)$, the nuclear positions; and $Z = (Z_1, \dots, Z_N)$, the atomic numbers of each nucleus.  Then our goal is to learn a function of the form $\psi_0(x; R, Z)$, i.e.~a ground state wavefunction which is explicitly parameterized by the molecular parameters.  This entails computing the density $\rho(r; R, Z)$.  However, this latter task is made more complicated by the fact that two new invariances are required.  The first is nuclear permutation invariance:
\begin{equation}
    \rho(r; \pi R, \pi Z) = \rho(r; R, Z) \quad \text{for } \pi \in \mathbb{S}_N
    \label{eq:nuclear_permutation_invariance}
\end{equation}
and the second is joint rigid motion invariance:
\begin{equation}
    \rho(\tau r; \tau R, Z) = \rho(r; R, Z) \quad \text{for } \tau \in E(D)
    \label{eq:joint_rigid_invariance}
\end{equation}
We henceforth assume that the nuclei have their center of mass at the origin, i.e. $\bar{R} = \frac{1}{N}\sum_{I=1}^N R_I = 0$; this removes the need to deal with translations, which generally require special (and uninteresting) treatment, e.g. see \citep{satorras2021n}.  Thus, Equation (\ref{eq:joint_rigid_invariance}) becomes joint rotation invariance:
\begin{equation}
    \rho(\Theta r; \Theta R, Z) = \rho(r; R, Z) \quad \text{for } \Theta \in O(D)
    \label{eq:joint_rotation_invariance}
\end{equation}
We now show that densities satisfying Equations (\ref{eq:nuclear_permutation_invariance}) and (\ref{eq:joint_rotation_invariance}) can be realized via a variation of the continuous normalizing flow we have introduced in Section \ref{sec:continuous_flow}:
\begin{theorem}
    \label{thm:induction}
    Let $\bar{R} = \frac{1}{N}\sum_{I=1}^N R_I = 0$.  Given a continuous normalizing flow of the form $dv/dt = \Gamma_t(v; R, Z)$ with $v(0) = z \sim \rho_z(\cdot)$ and $r = v(1)$.  Let the function $\Gamma_t$ be invariant with respect to nuclear permutations and equivariant with respect to joint rotations, i.e. for all $t$
    \begin{align}
        & \Gamma_t(v; \pi R, \pi Z) = \Gamma_t(v; R, Z) \hspace{2.0mm} \forall \pi \in \mathbb{S}_N \notag \\
        & \Gamma_t(\Theta v; \Theta R, Z) = \Theta \Gamma_t(v; R, Z) \hspace{2.0mm} \forall \Theta \in O(D)
        \label{eq:gamma_extra_properties}
    \end{align}
    Furthermore, suppose that the base density is invariant with respect to rotations, $\rho_z(\Theta z) = \rho_z(z)$ for $\Theta \in O(D)$.  Then the resulting density $\rho(r; R, Z)$ satisfies Equations (\ref{eq:nuclear_permutation_invariance}) and (\ref{eq:joint_rotation_invariance}).
\end{theorem}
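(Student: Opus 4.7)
The plan is to reduce both invariances of the density $\rho(r; R, Z)$ to corresponding properties of the flow map $T(\cdot; R, Z): z \mapsto v(1)$, and then apply the change of variables formula (\ref{eq:change_of_variables}) together with the stated invariance of the base density.

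For nuclear permutation invariance, I would observe that the ODE $dv/dt = \Gamma_t(v; R, Z)$ depends on $(R, Z)$ only through $\Gamma_t$. By the first hypothesis in (\ref{eq:gamma_extra_properties}), $\Gamma_t(v; \pi R, \pi Z) = \Gamma_t(v; R, Z)$ for all $v$ and $t$, so the two initial value problems coincide and thus $T(z; \pi R, \pi Z) = T(z; R, Z)$. Applying (\ref{eq:change_of_variables}) with the same $\rho_z$ then immediately gives $\rho(r; \pi R, \pi Z) = \rho(r; R, Z)$.

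For joint rotation invariance, the key intermediate step is to show that $T$ is $O(D)$-equivariant in the sense that $T(\Theta z; \Theta R, Z) = \Theta T(z; R, Z)$. Given the solution $v(t)$ with parameters $(R, Z)$ and initial condition $z$, I would define $\tilde v(t) = \Theta v(t)$ and verify, using linearity of the time derivative and the second hypothesis in (\ref{eq:gamma_extra_properties}), that $d\tilde v/dt = \Theta\, \Gamma_t(v; R, Z) = \Gamma_t(\Theta v; \Theta R, Z) = \Gamma_t(\tilde v; \Theta R, Z)$ with $\tilde v(0) = \Theta z$. Uniqueness of ODE solutions then identifies $\tilde v$ as the flow for parameters $(\Theta R, Z)$ starting at $\Theta z$, and evaluating at $t = 1$ yields the claimed equivariance of $T$.

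To finish, I would use this equivariance to rewrite $T^{-1}(\Theta r; \Theta R, Z) = \Theta\, T^{-1}(r; R, Z)$ and substitute into (\ref{eq:change_of_variables}). The $\rho_z$ factor becomes $\rho_z(\Theta\, T^{-1}(r; R, Z)) = \rho_z(T^{-1}(r; R, Z))$ by the assumed rotation invariance of the base density. For the Jacobian factor, differentiating the equivariance identity in $r$ yields $J_{T^{-1}}(\Theta r; \Theta R, Z)\, \Theta = \Theta\, J_{T^{-1}}(r; R, Z)$, so the two Jacobians are conjugate by an orthogonal matrix and hence have equal absolute determinant. I expect the main place requiring care to be this last bookkeeping step: one must be explicit that $\Theta \in O(D)$ acts on configuration space $\mathbb{R}^{Dn}$ via the block-diagonal lift (the same $\Theta$ applied to each $r_i$), so that $|\det \Theta| = 1$ on the full space and the equivariance assumption on $\Gamma_t$ is interpreted consistently with this lifted action.
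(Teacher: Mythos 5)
Your proof is correct and follows essentially the same route as the paper: the nuclear-permutation half is identical (the two initial value problems coincide, so the densities agree), and for joint rotation invariance the paper simply cites Theorems 1 and 2 of \citep{kohler2020equivariant}, which are exactly the two facts you prove by hand --- equivariance of the flow map $T(\Theta z; \Theta R, Z) = \Theta T(z; R, Z)$ via uniqueness of ODE solutions, and invariance of the push-forward of a rotation-invariant base density under an equivariant diffeomorphism via the change-of-variables formula together with the orthogonal-conjugation argument for the Jacobian. Your explicit remark that $\Theta \in O(D)$ must be understood as acting block-diagonally on $\mathbb{R}^{Dn}$, so that $|\det \Theta| = 1$ on configuration space, is a detail the paper leaves implicit and you handle it correctly.
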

\proofmain{induction}
First, we note that the base density in Equation (\ref{eq:base_density}) can be made invariant to rotations by constructing the relevant Projection DPP from a kernel function $K(y, y') = H(y)^T H(y)$, where the functions $h_i(y)$ are derived from taking arbitrary rotationally-invariant functions $\tilde{h}_i(y)$, and orthogonalizing them with Gram-Schmidt; e.g.~one may use Gaussians of varying bandwidths, $\tilde{h}_i(y) = e^{-\|y\|^2 / \sigma_i^2}$.

Now, we turn to the construction of $\Gamma_t$.  Recall from Theorem \ref{thm:continuous_flow} that $\Gamma_t(\cdot; R, Z)$ must be $\G$-equivariant for all $t$.  Furthermore, we have already noted that $\G$-equivariant functions may be constructed using a combination of standard pieces: multihead attention, fully connected layers, and linear projections.  It would be nice if we were able to use this result while also incorporating the extra conditions in Equation (\ref{eq:joint_rigid_invariance}).  We now show that this is possible:
\begin{theorem}
    \label{thm:induction_transformation}
    Let $\phi_t(v; R, Z)$ be a function which is $\G$-equivariant with respect to $v$ i.e.~$\phi_t(g v; R, Z) = g \phi_t(v; R, Z)$ for $g \in \G$.  Let $\omega_t(v; R, z)$ be a function whose output is itself a rotation, i.e. $\omega_t(v; R, z) \in O(D)$.  Let $\omega_t$ be $\G$-invariant with respect to $v$, and $O(D)$-equivariant jointly with respect to $v$ and $R$ i.e.~$\omega_t(\Theta v; \Theta R, Z) = \Theta \omega_t(v; R, Z)$.  Finally, let both $\phi_t$ and $\omega_t$ be permutation-invariant jointly with respect to $R$ and $Z$ i.e.~$\phi_t(v; \pi R, \pi Z) = \phi_t(v; R, Z)$ and likewise for $\omega_t$. 
    Then the function 
    \begin{equation}
        \Gamma_t(v; R, Z) = \zeta \phi_t(\zeta^{-1} v; \zeta^{-1} R, Z)
    \end{equation}
    where $\zeta = \omega_t(v; R, Z)$ satisfies the properties in Equation (\ref{eq:gamma_extra_properties}) and is $\G$-equivariant with respect to $v$.
\end{theorem}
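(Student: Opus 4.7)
The plan is to verify the three claims in turn: $\G$-equivariance with respect to $v$, nuclear permutation invariance with respect to $(R,Z)$, and joint $O(D)$-equivariance with respect to $(v,R)$. Throughout, the key ingredient I will use repeatedly is that a single rotation $\Theta \in O(D)$, when applied uniformly to the $\mathbb{R}^D$ coordinates of each electron (resp.~each nucleus), commutes with any permutation of the electron list (resp.~nuclear list): the rotation acts diagonally while the permutation only shuffles the block structure. In particular, for $g \in \G$ I have $\zeta^{-1}(gv) = g(\zeta^{-1}v)$, and for $\pi \in \mathbb{S}_N$ I have $\zeta^{-1}(\pi R) = \pi(\zeta^{-1}R)$.

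For $\G$-equivariance in $v$: fix $g \in \G$. By the $\G$-invariance of $\omega_t$ in $v$, the rotation $\zeta' := \omega_t(gv; R, Z)$ equals $\zeta$. Then
\begin{equation}
\Gamma_t(gv; R, Z) = \zeta\, \phi_t(\zeta^{-1} gv;\, \zeta^{-1}R,\, Z) = \zeta\, \phi_t(g \zeta^{-1}v;\, \zeta^{-1}R,\, Z) = \zeta g\, \phi_t(\zeta^{-1}v;\, \zeta^{-1}R,\, Z) = g\,\Gamma_t(v; R, Z),
\end{equation}
using the commutation noted above, the $\G$-equivariance of $\phi_t$ in its first argument, and then the commutation once more.

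For nuclear permutation invariance: fix $\pi \in \mathbb{S}_N$. The joint $\mathbb{S}_N$-invariance of $\omega_t$ in $(R,Z)$ gives $\omega_t(v;\pi R, \pi Z) = \zeta$, so the rotation is unchanged. Using $\zeta^{-1}(\pi R) = \pi(\zeta^{-1}R)$ and the joint $\mathbb{S}_N$-invariance of $\phi_t$ in $(R,Z)$,
\begin{equation}
\Gamma_t(v; \pi R, \pi Z) = \zeta\, \phi_t(\zeta^{-1}v;\, \pi(\zeta^{-1}R),\, \pi Z) = \zeta\, \phi_t(\zeta^{-1}v;\, \zeta^{-1}R,\, Z) = \Gamma_t(v; R, Z).
\end{equation}

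For joint rotation equivariance: fix $\Theta \in O(D)$. This is the case where $\zeta$ itself changes. By the $O(D)$-equivariance of $\omega_t$, the new rotation is $\zeta' := \omega_t(\Theta v; \Theta R, Z) = \Theta\zeta$. Substituting, $(\zeta')^{-1}\Theta v = \zeta^{-1}\Theta^{-1}\Theta v = \zeta^{-1}v$ and similarly $(\zeta')^{-1}\Theta R = \zeta^{-1}R$, so
\begin{equation}
\Gamma_t(\Theta v; \Theta R, Z) = (\Theta\zeta)\,\phi_t(\zeta^{-1}v;\, \zeta^{-1}R,\, Z) = \Theta\,\Gamma_t(v; R, Z).
\end{equation}
The only ``aha'' step in the proof is recognizing that conjugating $\phi_t$ by the $O(D)$-equivariant quantity $\zeta$ is exactly what absorbs an arbitrary external $\Theta$; everything else is a straightforward bookkeeping exercise that the permutation and rotation group actions commute. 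I do not anticipate a real obstacle, but the point requiring the most care is tracking that $\zeta$ itself is a function of $(v, R, Z)$ and therefore transforms in a controlled way under each of the three symmetries above.
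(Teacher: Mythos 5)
Your proof is correct and follows essentially the same route as the paper's: in each of the three cases you first track how $\zeta = \omega_t(v;R,Z)$ transforms (unchanged under $g\in\G$ and under nuclear permutations, becoming $\Theta\zeta$ under a joint rotation), then substitute and use the commutation of a uniform rotation with a permutation of the list together with the assumed equivariance/invariance of $\phi_t$. The only difference is the order in which the three properties are verified, which is immaterial.
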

\proofmain{induction_transformation}
We can use the previously mentioned recipe in \suppmat{} in order to construct a $\G$-equivariant $\phi_t$, with an extra path in the network for the $R, Z$ dependence, based on either Deep Set or a Transformer architecture with pooling to gain the requisite invariance.  The function $\omega_t$ can be constructed by using an $E(D)$ Equivariant Graph Neural Network \citep{satorras2021n} whose output is a rotation matrix, similar to what is done in \citep{kaba2023equivariance}.  More detailed information is contained in \suppmat{}.

\section{Conclusions}
\label{sec:conclusions}
We have demonstrated a theoretical framework for efficiently solving the Electronic \Schrodinger{} Equation using normalizing flows.  Using these flows allows us to sample efficiently from the wavefunction, thereby side-stepping the need for time-consuming MCMC approaches to sampling.  Future work will focus on using diffusion techniques \citep{yang2023diffusion} to model wavefunctions, which is not straightforward due to the need to maintain the requisite quantum mechanical properties; and on adapting flow-matching \citep{lipman2022flow} and related techniques for the optimization of the variational objective.

\newpage
{\LARGE \sc \textbf{Appendix}}
\appendix

\section{Derivation of Equation (\ref{eq:loss_for_sgd})}
\label{app:derivation_of_loss_for_sgd}

Recall that the local energy is defined as
\begin{equation}
    \elocal(x) \equiv \frac{H \psi(x)}{\psi(x)} = -\frac{\Delta \psi(x)}{2\psi(x)} + V(x) 
\end{equation}
with
\begin{equation}
    \elocal_r(x) = \text{Real}\{ \elocal(x) \}
\end{equation}
In this case, one may write
\begin{align}
    \frac{\langle \psi | H | \psi \rangle}{\langle \psi | \psi \rangle}
    & = \text{Real}\left\{ \frac{\langle \psi | H | \psi \rangle}{\langle \psi | \psi \rangle} \right\} \notag \\
    & = \text{Real}\left\{ \frac{\int \psi^*(x) H \psi(x) dx}{\langle \psi | \psi \rangle} \right\} \notag \\
    & = \frac{1}{\langle \psi | \psi \rangle} \int \text{Real}\left\{ \psi^*(x) H \psi(x) \right\} dx \notag \\
    & = \frac{1}{\langle \psi | \psi \rangle} \int \text{Real}\left\{ \frac{\psi(x)}{\psi(x)} \psi^*(x) H \psi(x)  \right\} dx \notag \\
    & = \frac{1}{\langle \psi | \psi \rangle} \int \text{Real}\left\{ |\psi(x)|^2 \frac{H \psi(x)}{\psi(x)}  \right\} dx \notag \\
    & = \int \text{Real}\left\{ \frac{H \psi(x)}{\psi(x)}  \right\} \frac{|\psi(x)|^2}{\langle \psi | \psi \rangle} dx \notag \\
    & = \int  \elocal_r(x) \rho(x) dx \notag \\
    & = \mathbb{E}_{x \sim \rho(\cdot)} \left[ \elocal_r(x) \right] \notag \\
    & \approx \frac{1}{K} \sum_{k=1}^K \elocal_r\left(x^{(k)}\right)
\end{align}
where the $x^{(k)}$ are sampled from $\rho(\cdot) = |\psi(\cdot)|^2 / \langle \psi | \psi \rangle$.  Note that in the first line, we have used the fact that $H$ is a symmetric operator so that the quadratic form is real; in the third line, the fact that $\langle \psi | \psi \rangle$  is real; and in the sixth line, the fact that $|\psi(x)|^2$ is real.

\section{Proof of Theorem \ref{thm:density_phase}}
\label{prf:density_phase}

\textbf{Theorem.}
\textit{
    Let $\rho(\cdot)$ be a probability density function which we can sample from in constant time.  Let $\rho(\cdot)$ satisfy two additional properties:
    \begin{enumerate}[label=(D\arabic*)]
        \item $\rho(x)$ is symmetric: $\rho(\pi x) = \rho(x)$ for all permutations $\pi \in \mathbb{S}_n$.
        \item $\rho(x) = 0$ if $x_i = x_j$ for any $i, j$.
    \end{enumerate}
    Finally, let $\kappa(x)$ be a complex function which satisfies $|\kappa(x)| = 1 \,\, \forall x$, and is nearly antisymmetric: 
    \begin{equation*}
        \kappa(\pi x) = 
        \begin{cases}
            (-1)^\pi \kappa(x) & \text{if } x_i \neq x_j \text{ for all } i, j \\
            \bar{\kappa} & \text{otherwise}
        \end{cases}
    \end{equation*}
    where $\bar{\kappa} \in \mathbb{C}$ is an arbitrary value with $|\bar{\kappa}|=1$.
    Then $\psi$ satisfies (W1)-(W4) if and only if $\psi$ can be written as $\psi(x) = \kappa(x) \sqrt{\rho(x)}$ with $\kappa$ and $\rho$ satisfying the above-stated properties.
}
\begin{proof}
    Suppose that $\psi(x) = \kappa(x) \sqrt{\rho(x)}$, let us prove each of properties (W1)-(W4).

    (W1) The functional form for $\psi(x)$ is just $\kappa(x) \sqrt{\rho(x)}$, which we know explicitly.

    (W2) Antisymmetry of $\psi$: we break down by cases.  Suppose that $x$ is such that $x_i \neq x_j \text{ for all } i, j$.  Then:
    \begin{align}
        \psi(\pi x) & = \kappa(\pi x) \sqrt{\rho(\pi x)} \notag \\
                    & = (-1)^\pi \kappa(x) \sqrt{\rho(x)} \notag \\
                    & = (-1)^\pi \psi(x)
    \end{align}
    where in the second line we have used the two facts that $\kappa$ is antisymmetric and $\rho$ is symmetric.  Now, suppose that $x_i = x_j$ for some $i, j$:
    \begin{align}
        \psi(\pi x) & = \kappa(\pi x) \sqrt{\rho(\pi x)} \notag \\
                    & = \bar{\kappa} \sqrt{\rho(x)} \notag \\
                    & = 0
    \end{align}
    where in the third line, we have used (D2).  But this is precisely what is required for an antisymmetric function: if $a(x)$ is antisymmetric and $x_i = x_j$ of some $i, j$, then $\pi_{ij} x = x$, where $\pi_{ij}$ is the permutation which flips $i$ and $j$, so that
    \begin{equation}
        a(\pi_{ij} x) = a(x)
        \, \Rightarrow \, -a(x) = a(x)
        \, \Rightarrow \, a(x) = 0
        \, \Rightarrow \, a(\pi x) = (-1)^\pi a(x) = 0
    \end{equation}
    where we have used the fact that $(-1)^{\pi_{ij}} = -1$, since only one flip is required.

    (W3) We can sample in constant time from $\rho(x) = \|\psi(x)\|^2$ by assumption.

    (W4) $\psi$ is normalized:
    \begin{equation}
        \int \| \psi(x) \|^2 dx = \int |\kappa(x)|^2 \rho(x) dx = \int \rho(x) dx = 1
    \end{equation}
    since $\rho(x)$ is a probability density function.
    
    Thus, we have proved the forward direction.

    Now, let us assume properties (W1)-(W4).  We can always express a complex number $c$ in terms of a magnitude and a phase; in particular, we may write $c = me^{i\nu}$, where $m \ge 0$ is a real number, and $\nu \in [0, 2\pi)$.  (W1) tells us that we have an explicit form for the complex-valued function $\psi(x)$; thus, we know that
    \begin{equation}
        \psi(x) = m(x) e^{i \nu(x)} \equiv \sqrt{\rho(x)} \kappa(x)
    \end{equation}
    with $|\kappa(x)| = 1$, where we have used the fact that $m(x) \ge 0$.  Note that $\rho(x) = \|\psi(x)\|^2$.  As $\rho(x) \ge 0$, and 
    \begin{equation}
        \int \rho(x) dx = \int \|\psi(x)\|^2 dx = 1
    \end{equation}
    by (W4), then $\rho(x)$ is a density.  Furthermore, by (W3) $\rho(x) = \|\psi(x)\|^2$ may be sampled in constant time.  Finally, by (W2), $\psi(x)$ is antisymmetric; thus,
    \begin{align}
        & \psi(\pi x) = (-1)^\pi \psi(x) \quad \forall \pi, x \notag \\
        \Leftrightarrow \quad & \sqrt{\rho(\pi x)} \kappa(\pi x) = (-1)^\pi \sqrt{\rho(x)} \kappa(x) \quad \forall \pi, x
    \end{align}
    In cases where $\pi x = x$ and $(-1)^{\pi} = -1$, then we have that
    \begin{align}
        & \sqrt{\rho(\pi x)} \kappa(\pi x) = (-1)^\pi \sqrt{\rho(x)} \kappa(x) \notag \\
        \Leftrightarrow \quad & \sqrt{\rho(x)} \kappa(x) = - \sqrt{\rho(x)} \kappa(x) \notag \\
        \Leftrightarrow \quad & \rho(x) = 0
    \end{align}
    where the third line follows from the fact that $\kappa(x) \neq 0$ since $|\kappa(x)|$ = 1.  However, note that $\pi x = x$ and $(-1)^{\pi} = -1$ is equivalent to $x_i = x_j$ for some $i, j$.  Thus, we have established (D2).  Furthermore, in such cases we can take $\kappa(x) = \bar{\kappa}$, since it plays no role.  In all other cases, i.e.~where $x_i \neq x_j \text{ for all } i, j$, we have that
    \begin{align}
        & \sqrt{\rho(\pi x)} \kappa(\pi x) = (-1)^\pi \sqrt{\rho(x)} \kappa(x) \quad \forall \pi, x \text{ such that } x_i \neq x_j \text{ for all } i, j \notag \\
        \Leftrightarrow \quad & \sqrt{\rho(\pi x)} = \sqrt{\rho(x)} \quad \text{and} \quad  \kappa(\pi x) = (-1)^\pi \kappa(x) \quad \forall \pi, x \text{ such that } x_i \neq x_j \text{ for all } i, j
    \end{align}
    where the second line holds since this must hold true for all $\pi$ and all relevant $x$.  $\kappa(\pi x) = (-1)^\pi \kappa(x)$ establishes the remainder of the nearly antisymmetric character of $\kappa$.  Finally, 
    \begin{equation}
        \sqrt{\rho(\pi x)} = \sqrt{\rho(x)} \quad \Leftrightarrow \quad \rho(\pi x) = \rho(x) \quad \forall \pi, x \text{ such that } x_i \neq x_j \text{ for all } i, j
    \end{equation}
    since $\rho(x) \ge 0$.  This shows that $\rho(x)$ is symmetric for the case $x_i \neq x_j \text{ for all } i, j$. Indeed $\rho(x)$ is symmetric for all $x$, including those for which $x_i = x_j$ for some $i, j$, as in the latter case we have shown that $\rho(x) = 0$.  This establishes (D1) and completes the proof.
\end{proof}

\section{Proof of Theorem \ref{thm:spin_multiplicity}}
\label{prf:spin_multiplicity}

\begin{theorem*}
    Given a configuration $x = (r, s)$, let a permutation which maps the spin vector $s$ to the canonical spin vector $\bar{s}$ be given by $\bar{\pi}_s$, i.e. $\bar{s} = \bar{\pi}_s s$.  Let $\bar{\rho}(r)$ be a density function on electron positions (i.e.~no spins) satisfying
    \begin{enumerate}[label=(R\arabic*)]
        \item $\bar{\rho}$ is $\G$-invariant: $\bar{\rho}(\pi r) = \bar{\rho}(r) \text{ for all } \pi \in \G$
        \item $\bar{\rho}(r) = 0 \text{ if } r_i = r_j, \text{ for } i, j \in \nup \text{ or } i, j \in \ndn$
    \end{enumerate}
    A density $\rho(x) = \rho(r, s)$ satisfies conditions (D1)-(D2) in Theorem 1 
    if and only if it may be written as  $\rho(r, s) = \bar{\rho}(\bar{\pi}_s r)$ for a density  $\bar{\rho}(r)$ satisfying conditions (R1) and (R2).
\end{theorem*}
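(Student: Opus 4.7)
The plan is to prove both directions by exploiting the fact that $\mathbb{S}_n$ acts transitively on the set of spin vectors with a given multiplicity, and that the stabilizer of the canonical spin vector $\bar{s}$ is precisely $\G$. This group-theoretic fact is what makes the construction $\rho(r,s) = \bar{\rho}(\bar{\pi}_s r)$ natural: the coset $\G \bar{\pi}_s$ is well-defined given $s$, and any $\G$-invariant function of $\bar{\pi}_s r$ is therefore well-defined.

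For the ``if'' direction, I would first address well-definedness: two choices of $\bar{\pi}_s$ differ by left multiplication by some $g \in \G$, so (R1) ensures $\bar{\rho}(\bar{\pi}_s r)$ does not depend on the choice. To verify (D1), given any $\pi \in \mathbb{S}_n$, observe that $\bar{\pi}_s \pi^{-1}$ sends $\pi s$ to $\bar{s}$, hence is a valid choice of $\bar{\pi}_{\pi s}$; then
\begin{equation*}
    \rho(\pi r, \pi s) = \bar{\rho}(\bar{\pi}_{\pi s}\, \pi r) = \bar{\rho}(\bar{\pi}_s \pi^{-1} \pi r) = \bar{\rho}(\bar{\pi}_s r) = \rho(r,s).
\end{equation*}
To verify (D2), suppose $x_i = x_j$, meaning both $r_i = r_j$ and $s_i = s_j$. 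Since $\bar{\pi}_s$ permutes the coordinates consistently for positions and spins, the image coordinates $(\bar{\pi}_s r)_{\bar{\pi}_s(i)}$ and $(\bar{\pi}_s r)_{\bar{\pi}_s(j)}$ remain equal, and both land in $\nup$ or both in $\ndn$ (since their spins are equal and map to the same type of canonical slot). Then (R2) gives $\bar{\rho}(\bar{\pi}_s r) = 0$.

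For the ``only if'' direction, I would simply define $\bar{\rho}(r) := \rho(r, \bar{s})$ and check the three required claims. For (R1), given $\pi \in \G$, since $\pi \bar{s} = \bar{s}$, applying (D1) yields $\bar{\rho}(\pi r) = \rho(\pi r, \bar{s}) = \rho(\pi r, \pi \bar{s}) = \rho(r, \bar{s}) = \bar{\rho}(r)$. For (R2), if $r_i = r_j$ with $i,j$ both in $\nup$ or both in $\ndn$, then $\bar{s}_i = \bar{s}_j$, so the configuration $(r, \bar{s})$ has two equal full-coordinates and (D2) gives $\bar{\rho}(r) = 0$. Finally, to verify the representation $\rho(r,s) = \bar{\rho}(\bar{\pi}_s r)$, I would compute
\begin{equation*}
    \bar{\rho}(\bar{\pi}_s r) = \rho(\bar{\pi}_s r, \bar{s}) = \rho(\bar{\pi}_s r, \bar{\pi}_s s) = \rho(r, s),
\end{equation*}
using $\bar{\pi}_s s = \bar{s}$ in the middle step and (D1) in the last.

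The only subtle point, and the one I would highlight, is the well-definedness issue for $\bar{\pi}_s$: the permutation is only determined up to left multiplication by $\G$, so the very statement $\rho(r,s) = \bar{\rho}(\bar{\pi}_s r)$ implicitly requires $\bar{\rho}$ to be $\G$-invariant, which is exactly (R1). Once this is explicit, the rest of the argument is a routine bookkeeping exercise, because both the forward and reverse constructions are essentially a change of coordinates by the action of $\mathbb{S}_n$ on spin configurations.
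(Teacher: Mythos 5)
Your proof is correct and follows essentially the same route as the paper: both directions define $\bar{\rho}(r)=\rho(r,\bar{s})$, use (D1) restricted to the stabilizer $\G$ of $\bar{s}$ to get (R1), and use (R1) to neutralize the ambiguity in $\bar{\pi}_{\pi s}$ (the paper does this by explicitly deriving $\bar{\pi}_{\pi s}=\breve{\pi}\,\bar{\pi}_s\pi^{-1}$ with $\breve{\pi}\in\G$, which is the same fact you phrase as well-definedness of $\bar{\rho}(\bar{\pi}_s r)$ on the coset $\G\bar{\pi}_s$). Your explicit remark that the representation $\rho(r,s)=\bar{\rho}(\bar{\pi}_s r)$ only makes sense because of (R1) is a nice clarification, but it is the same underlying argument.
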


\begin{proof}
Let us prove the forward direction: assume a density $\rho(x) = \rho(r, s)$ satisfying conditions (D1) and (D2), and we will show that it must be written as $\rho(r, s) = \bar{\rho}(\bar{\pi}_s^{-1} r)$ for $\bar{\rho}$ satisfying conditions (R1) and (R2).  Let $\bar{r} = \bar{\pi}_s r$ and $\bar{x} = \bar{\pi}_s x = (\bar{r}, \bar{s})$.  In this case, we have that
\begin{equation}
    \rho(x) = \rho(\bar{\pi}_s x) = \rho(\bar{x})
    \label{eq:rho_bar_x}
\end{equation}
where the first equality comes our requirement that $\rho(x)$ satisfy condition (D1), and the second equality from the definition of $\bar{x}$.  As a result, it is sufficient for us to focus on constructing a density $\rho(\bar{x}) = \rho(\bar{r}, \bar{s})$, i.e. a density where the spins are in canonical order.  As $\bar{s}$ is fixed as the canonical ordering, we may suppress it, writing $\rho(\bar{x}) = \rho(\bar{r}, \bar{s}) \equiv \bar{\rho}(\bar{r})$ for a function $\bar{\rho}$.  Now, $\bar{\rho}$ must satisfy condition (D1); however, the only permutations that are relevant are those that preserve the canonical spin ordering $\bar{s}$.  More specifically, the relevant permutations $\pi$ are those for which $\pi \bar{s} = \bar{s}$; it is easy to see that those permutations form the group $\mathbb{S}_{\nup} \times \mathbb{S}_{\ndn} = \G$.  Thus, we must have that
\begin{equation}
    \bar{\rho}(\pi \bar{r}) = \bar{\rho}(\bar{r}) \text{ for all } \pi \in \G
\end{equation}
That is, $\bar{\rho}$ is $\G$-invariant, which is condition (R1).

Now let us turn to condition (D2), which states that $\rho(x) = 0$ if $x_i = x_j$ for any $i, j$.  The requirement $x_i = x_j$ implies that both $r_i = r_j$ and $s_i = s_j$; and the condition $s_i = s_j$ is equivalent to $i, j \in \nup$ or $i, j \in \ndn$.  Thus, condition (D2) is equivalent to
\begin{equation}
    \bar{\rho}(\bar{r}) = 0 \text{ if } \bar{r}_i = \bar{r}_j, \text{ for } i, j \in \nup \text{ or } i, j \in \ndn
\end{equation}
which is simply condition (R2).

Thus, we have proven conditions (R1) and (R2) must hold.  Finally, using Equation (\ref{eq:rho_bar_x}) and the definitions of $\bar{\rho}$ and $\bar{r}$, we have that
\begin{equation}
    \rho(x) = \rho(\bar{x}) = \bar{\rho}(\bar{r}) = \bar{\rho}(\bar{\pi}_s r)
\end{equation}
which completes the proof of the forward direction.

Now, let us prove the reverse direction: assume a density $\bar{\rho}$ satisfying conditions (R1) and (R2), and we will show that $\rho(r, s) = \bar{\rho}(\bar{\pi}_s r)$ satisfies conditions (D1) and (D2).  Let us begin by computing $\bar{\pi}_{\pi s}$, which will prove useful in what follows.  $\bar{\pi}_{\pi s}$ is defined by $\bar{s} = \bar{\pi}_{\pi s} \pi s$.  However, we also know that $\bar{s} = \bar{\pi}_s s$; setting these equal gives
\begin{equation}
    \bar{\pi}_{\pi s} \pi s = \bar{\pi}_s s
    \quad \Rightarrow \quad
    \bar{\pi}_{\pi s} \pi = \bar{\pi}_s \hat{\pi}
\end{equation}
where $\hat{\pi}$ is some permutation leaves $s$ unchanged, i.e.~such that $\hat{\pi} s = s$.  Thus
\begin{equation}
    \bar{\pi}_{\pi s} = \bar{\pi}_s \hat{\pi} \pi^{-1}
\end{equation}
However, we know that $\bar{s} = \bar{\pi}_s s$ so that $s = \bar{\pi}_s^{-1} \bar{s}$.  Using the fact that $\hat{\pi} s = s$ gives
\begin{equation}
    \hat{\pi} \bar{\pi}_s^{-1} \bar{s} = \bar{\pi}_s^{-1} \bar{s}
    \quad \Rightarrow \quad
    \hat{\pi} \bar{\pi}_s^{-1} = \bar{\pi}_s^{-1} \breve{\pi}
\end{equation}
where $\breve{\pi}$ is some permutation that leaves $\bar{s}$ unchanged; which precisely implies that $\breve{\pi} \in \G$.  Rearranging gives
\begin{equation}
    \hat{\pi} = \bar{\pi}_s^{-1} \breve{\pi} \bar{\pi}_s
    \quad \Rightarrow \quad
    \bar{\pi}_{\pi s} = \bar{\pi}_s \bar{\pi}_s^{-1} \breve{\pi} \bar{\pi}_s \pi^{-1}
                      = \breve{\pi} \bar{\pi}_s \pi^{-1}
\end{equation}
Now, for any permutation $\pi$, we have that
\begin{equation}
    \rho(\pi x) = \bar{\rho}(\bar{\pi}_{\pi s} \pi r)
                = \bar{\rho}(\breve{\pi} \bar{\pi}_s \pi^{-1} \pi r)
                = \bar{\rho}(\breve{\pi} (\bar{\pi}_s r))
                = \bar{\rho}(\bar{\pi}_s r)
                = \rho(x)
\end{equation}
where in the second last equality, we have used the fact that $\breve{\pi} \in \G$, and that $\bar{\rho}$ is $\G$-invariant by (R1).  Thus, we have established property (D1), i.e.~that $\rho$ is symmetric.
%
%

Now, turning to condition (D2), let us consider an $x$ such that $x_i = x_j$ for a particular $i, j$.  Continuing to use the notation $\bar{r} = \pi_s r$, this implies that $\bar{r}_i = \bar{r}_j$ for either $i, j \in \nup$ or $i, j \in \ndn$.  Thus, by condition (R2), we have that $\bar{\rho}(\bar{r}) = 0$.  But then 
\begin{equation}
    \rho(x) = \bar{\rho}(\bar{\pi}_s r) = \bar{\rho}(\bar{r}) = 0
\end{equation}
which is precisely condition (D2); this completes the proof.
\end{proof}

\section{Proof of Theorem \ref{thm:normalizing_flow}}
\label{prf:normalizing_flow}

\begin{theorem*}
    Suppose that we have a normalizing flow, whose base density $\rho_z$ satisfies properties (R1) and (R2) from Theorem 2, 
    and whose transformation $T$ is $\G$-equivariant.  Then the density resulting from the normalizing flow will satisfy properties (R1) and (R2).
\end{theorem*}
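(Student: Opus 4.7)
The plan is to work directly from the change of variables formula
\[
\rho(r) \;=\; \rho_z\!\bigl(T^{-1}(r)\bigr)\,\bigl|\det J_{T^{-1}}(r)\bigr|
\]
and verify (R1) and (R2) separately, exploiting $\G$-equivariance of $T$ together with invertibility.

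First I would establish (R1). Because $T$ is a bijection and $\G$-equivariant, $T^{-1}$ is also $\G$-equivariant: applying $T^{-1}$ to both sides of $T(\pi z)=\pi T(z)$ gives $T^{-1}(\pi r)=\pi T^{-1}(r)$ for any $\pi\in\G$. Using (R1) for $\rho_z$ then yields $\rho_z(T^{-1}(\pi r)) = \rho_z(\pi T^{-1}(r)) = \rho_z(T^{-1}(r))$. For the Jacobian factor, the chain rule applied to the identity $T^{-1}(\pi r)=\pi T^{-1}(r)$ gives $J_{T^{-1}}(\pi r)\,P_\pi = P_\pi\,J_{T^{-1}}(r)$, where $P_\pi$ is the permutation matrix of $\pi$. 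Hence $J_{T^{-1}}(\pi r)=P_\pi J_{T^{-1}}(r)P_\pi^{-1}$, and taking determinants eliminates the conjugation, so $|\det J_{T^{-1}}(\pi r)|=|\det J_{T^{-1}}(r)|$. Multiplying the two invariances gives $\rho(\pi r)=\rho(r)$.

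Next I would prove (R2). Suppose $r_i=r_j$ for some pair $i,j\in\nup$ (the case $i,j\in\ndn$ is identical). Let $\pi_{ij}$ be the transposition swapping $i$ and $j$; since $i,j$ lie in the same spin block, $\pi_{ij}\in\G$. The assumption $r_i=r_j$ is equivalent to $\pi_{ij}r=r$. Setting $z=T^{-1}(r)$ and using $\G$-equivariance of $T$,
\[
T(\pi_{ij}z)\;=\;\pi_{ij}T(z)\;=\;\pi_{ij}r\;=\;r\;=\;T(z),
\]
so by invertibility of $T$ we conclude $\pi_{ij}z=z$, i.e.\ $z_i=z_j$. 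Applying (R2) to $\rho_z$ then gives $\rho_z(z)=0$, and therefore $\rho(r)=\rho_z(z)|\det J_{T^{-1}}(r)|=0$.

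I do not expect a serious obstacle here; the only mildly subtle point is the computation of how the Jacobian transforms under a permutation of the argument. Once one recognises that $\G$-equivariance of $T^{-1}$ forces the Jacobian to be conjugated by a permutation matrix, the invariance of $|\det|$ is immediate because permutation matrices have determinant $\pm 1$. The (R2) half is the cleaner of the two, relying only on $T$ being an equivariant bijection and on (R2) for the base density.
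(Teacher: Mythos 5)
Your proof is correct. For condition (R2) your argument coincides with the paper's: both reduce $r_i=r_j$ to $\pi_{ij}r=r$ with $\pi_{ij}\in\G$, conclude $z_i=z_j$ for $z=T^{-1}(r)$ (you via equivariance plus injectivity of $T$, the paper via equivariance of $T^{-1}$ — trivially equivalent), and then kill the change-of-variables product through $\rho_z(z)=0$. The difference is in (R1): the paper simply invokes Theorem 1 of K\"ohler et al.\ (2020) on equivariant flows to conclude that the push-forward of a $\G$-invariant base density along a $\G$-equivariant diffeomorphism is $\G$-invariant, whereas you prove this from scratch — equivariance of $T^{-1}$, invariance of $\rho_z\circ T^{-1}$, and the chain-rule identity $J_{T^{-1}}(\pi r)P_\pi = P_\pi J_{T^{-1}}(r)$, so that the Jacobian is conjugated by the (block) permutation matrix and its determinant is unchanged. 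This is in essence a reproduction of the proof of the cited result; what it buys you is a self-contained argument that exposes exactly what is needed (differentiability of $T^{-1}$ and the fact that $\G$ acts by orthogonal linear maps, so $|\det P_\pi|=1$), while the paper's citation keeps the proof shorter at the cost of an external dependency. One minor point worth making explicit if you write this up: the chain-rule step presupposes the differentiability of $T^{-1}$ used in the change-of-variables formula, which is part of the normalizing-flow hypothesis.
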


\begin{proof}
    Let us begin by proving that the density resulting from the normalizing flow will satisfy condition (R1).  Theorem 1 in \citep{kohler2020equivariant} states the following: ``Let $\rho$ be a density on $\mathbb{R}^m$ which is $\G$-invariant and $\G > \mathbb{H}$.  If $f$ is an $\mathbb{H}$-equivariant diffeomorphism, then $\rho_f$, the push-forward of $\rho$ along $f$, is $\mathbb{H}$-invariant.''  In our instance, we may take $\mathbb{H} = \G$, and thereby have established that the density resulting from the normalizing flow is $\G$-invariant, thus satisfying condition (R1).
    
    We now turn to proving that the density resulting from the normalizing flow will satisfy condition (R2).  Suppose that the random variable for the base density is given by $z$, with density $\rho_z$; and the normalizing flow is given by transformation $T$, i.e.~$r = T(z)$.  Then by the change of variables formula, we know that the density of $r$ is given by $\rho_r(r) = \rho_z(T^{-1}(r))|\det J_{T^{-1}}(r)|$.  Now, we are interested in the case when $r_i = r_j$ for $i, j \in \nup$ (we may equally consider the case of $\ndn$, they are identical).  Let $\pi_{ij} \in \G$ be the permutation whose only action is to flip the coordinates of electrons $i$ and $j$.  Given that $r_i = r_j$, then by definition we have that $\pi_{ij} r = r$.  In this case, we have that
    \begin{equation}
        z \equiv T^{-1}(r) = T^{-1}(\pi_{ij} r) = \pi_{ij} T^{-1}(r)
        \label{eq:proof1}
    \end{equation}
    where the latter equality is due to the $\G$-equivariance of $T^{-1}$, which follows straightforwardly from the $\G$-equivariance of $T$.  Rearranging the above, we have that
    \begin{equation}
        T^{-1}(r) = \pi_{ij}^{-1} z = \pi_{ij} z
        \label{eq:proof2}
    \end{equation}
    where the second equality is due to the fact that $\pi_{ij}^{-1} = \pi_{ij}$, as $\pi_{ij}$ simply flips electrons $i$ and $j$.  However, $z = T^{-1}(r)$, so combining Equations (\ref{eq:proof1}) and (\ref{eq:proof2}) gives $z = \pi_{ij}z$.  Plugging this into the equation for the change of variables gives
    \begin{equation}
        \rho_r(r) = \rho_z(T^{-1}(r))|\det J_{T^{-1}}(r)| = \rho_z(z)|\det J_{T^{-1}}(r)|
        \label{eq:proof3}
    \end{equation}
    But we know that $z$ is such that $z = \pi_{ij}z$, which means that $z_i = z_j$; and for such $z$'s, we know that $\rho_z(z) = 0$, by the assumption of condition (R2) for the base density.  Thus, plugging back into Equation (\ref{eq:proof3}) gives
    \begin{equation}
        \rho_r(r) = 0 \cdot |\det J_{T^{-1}}(r)| = 0
    \end{equation}
    as desired.  
\end{proof}

\section{Proof of Theorem \ref{thm:base_density}}
\label{prf:base_density}

\begin{theorem*}
    Let $\rho_z(z) = \dpp(z_u; n_u) \dpp(z_d; n_d)$.  Then $\rho_z$ satisfies conditions (R1) and (R2) from Theorem 2. 
\end{theorem*}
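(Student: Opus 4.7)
The plan is to verify the two conditions directly from the definition of the Projection DPP density, exploiting standard properties of the determinant.

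First I would establish (R1), the $\G$-invariance. Any $\pi \in \G = \mathbb{S}_{\nup} \times \mathbb{S}_{\ndn}$ decomposes as $\pi = (\pi_u, \pi_d)$, acting on $z_u$ and $z_d$ separately. It therefore suffices to show that $\dpp(\,\cdot\,; n_\alpha)$ is invariant under $\mathbb{S}_{n_\alpha}$ for $\alpha \in \{u, d\}$. Writing the kernel matrix in terms of the stacked vector $H$, one has $\mathbf{K}_{n_\alpha}(z_\alpha) = M(z_\alpha)^T M(z_\alpha)$ where $M(z_\alpha)$ is the matrix whose $i^{th}$ column is $H((z_\alpha)_i)$. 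A permutation $\pi_\alpha$ acts on $z_\alpha$ by permuting the columns of $M(z_\alpha)$, which amounts to simultaneously permuting the rows and columns of $\mathbf{K}_{n_\alpha}$ by $\pi_\alpha$. Since this is conjugation by a permutation matrix, the determinant is unchanged, and so $\dpp(\pi_\alpha z_\alpha; n_\alpha) = \dpp(z_\alpha; n_\alpha)$. Multiplying the two factors yields (R1).

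Next I would establish (R2). Suppose $z_i = z_j$ with $i, j \in \nup$ and $i \neq j$. Then the $i^{th}$ and $j^{th}$ rows of $\mathbf{K}_{n_u}(z_u)$ coincide, since row $i$ has entries $K(z_i, z_k) = H(z_i)^T H(z_k)$ which equal $K(z_j, z_k)$ by the assumption $z_i = z_j$. A matrix with two identical rows has vanishing determinant, so $\dpp(z_u; n_u) = 0$ and hence $\rho_z(z) = 0$. The argument for $i, j \in \ndn$ is identical after swapping the roles of the two factors.

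There is no real obstacle here; both steps are routine consequences of the linear-algebraic structure of $\mathbf{K}_n$ as a Gram matrix of the feature vectors $H(z_i)$. The only thing to be careful about is to phrase the argument so that it makes clear why cross-spin coincidences (i.e.\ $z_i = z_j$ with $i \in \nup$, $j \in \ndn$) are \emph{not} required to force $\rho_z$ to vanish, which is consistent with (R2) and with the Fermionic interpretation that only same-spin electrons are constrained by Pauli exclusion.
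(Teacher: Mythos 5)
Your proposal is correct and follows essentially the same route as the paper's proof: decompose $\pi \in \G$ into its spin-up and spin-down parts, note that permuting the points conjugates $\mathbf{K}_{n_\alpha}$ by a permutation matrix (leaving the determinant unchanged) for (R1), and observe that a repeated point produces two identical rows/columns and hence a vanishing determinant for (R2). The only cosmetic difference is that you justify the conjugation via the Gram-matrix factorization $\mathbf{K}_{n_\alpha} = M^T M$, while the paper states $\mathbf{K}_n(\pi r) = P_\pi \mathbf{K}_n(r) P_\pi^T$ directly; the substance is identical.
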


\begin{proof}
    Let us begin with property (R1): we would like to prove that $\rho_z$ is $\G$-invariant.  Let $\pi \in \G$; as $\G = \mathbb{S}_{\nup} \times \mathbb{S}_{\ndn}$, we may write the permutation $\pi = \pi_u \otimes \pi_d$, where $\pi_u$ is a permutation which applies to the indices in $\nup$, and similarly for $\pi_d$ and the indices in $\ndn$.  Thus,
    \begin{equation}
        \rho_z(\pi z) = \dpp(\pi_u z_u; n_u) \dpp(\pi_d z_d; n_d)
    \end{equation}
    Now, recall that the Projection DPP's density is defined by
    \begin{equation}
        \mathbf{K}_n(r) =
        \begin{bmatrix}
            K(r_1, r_1) & \dots & K(r_1, r_n) \\
            \vdots & \ddots & \vdots \\
            K(r_n, r_1) & \dots & K(r_n, r_n)
        \end{bmatrix}
        \qquad \Rightarrow \qquad
        \dpp(r; n) = \frac{1}{n!} \det \mathbf{K}_n(r)
    \end{equation}
    Thus, we must compute $\mathbf{K}_n(\pi r)$ for a permutation $\pi$.  We may represent the action of $\pi$ on a vector of length $n$ by an $n \times n$ matrix $P_\pi$.  It is then straightforward to see that
    \begin{equation}
        \mathbf{K}_n(\pi r) = P_\pi \mathbf{K}_n(r) P_\pi^T
    \end{equation}
    and thus that
    \begin{equation}
        \det \mathbf{K}_n(\pi r) = \det \left( P_\pi \mathbf{K}_n(r) P_\pi^T \right) = \det \left( P_\pi^T P_\pi \mathbf{K}_n(r) \right) = \det(P_\pi)^2 \det \mathbf{K}_n(r) = \det \mathbf{K}_n(r)
    \end{equation}
    where the second equality uses the cyclic property of the determinant; the third equality that a determinant of products is the product of determinants; and the fourth equality that the determinant of a permutation matrix $P_\pi$ is $\pm 1$.  Thus, we have that
    \begin{equation}
        \dpp(\pi_u z_u; n_u) = \frac{1}{n_u !} \det \mathbf{K}_{n_u}(\pi_u z_u) = \frac{1}{n_u !} \det \mathbf{K}_{n_u}(z_u) = \dpp(z_u; n_u)
    \end{equation}
    Likewise, $\dpp(\pi_d z_d; n_d) = \dpp(z_d; n_d)$.  This gives finally that $\rho_z(\pi z) = \rho_z(z)$, establishing that $\rho_z$ is $\G$-invariant, i.e.~satisfies condition (R1).

    We now turn to condition (R2): we would like to prove that $\rho_z(z) = 0 \text{ if } z_i = z_j, \text{ for } i, j \in \nup \text{ or } i, j \in \ndn$.  Let us focus on the case of spin-up electrons, i.e.~$i, j \in \nup$; the spin-down case will follow analogously.  We know that 
    \begin{equation}
        \dpp(z_u; n_u) = \frac{1}{n_u !} \det \mathbf{K}_{n_u}(z_u)
    \end{equation}
    Given the definition of the matrix $\mathbf{K}_{n_u}(z_u)$, it is straightforward to see that if $z_i = z_j$, then $\mathbf{K}_{n_u}(z_u)$ has identical columns for $i$ and $j$.  However, a matrix with two identical columns is rank deficient, and therefore has determinant $0$.  Thus, we have that $\dpp(z_u; n_u) = 0$ so that $\rho_z(z) = 0$, establishing that $\rho_z$ satisfies condition (R2).
\end{proof}

\section{Sampling Procedure for Projection DPPs}
\label{app:sampling_DPPs}

In order to sample from a Projection DPP, we may follows the procedure outlined in \citep{lavancier2015determinantal}, which we reproduce in Algorithm \ref{alg:DPP_sample}.  We note that the speed of the sampling algorithm is largely unimportant, as one may sample as many samples as one would like offline, prior to (and independent from) the process of minimizing the variational objective.
\begin{algorithm}[hbt!]
\caption{Sampling from a Projection Determinantal Point Process}
\label{alg:DPP_sample}
\begin{algorithmic}
    \REQUIRE $n$, $H(y)$
    \STATE sample $r_n$ from the distribution with density $\rho_n(y) = \frac{1}{n} \| H(y) \|^2$
    \STATE $e_1 \gets H(r_n) / \| H(r_n) \|$
    \FOR{$i = n-1$ to $1$}
        \STATE sample $r_i$ from the distribution with density $\rho_i(y) = \frac{1}{i} \| H(y) \|^2 - \frac{1}{i} \sum_{j=1}^{n-i} |e_j^T H(y)|^2$
        \STATE $c_i \gets H(r_i) - \frac{1}{i} \sum_{j=1}^{n-i} \left( e_j^T H(r_i)  \right) e_j$
        \STATE $e_{n-i+1} \gets c_i / \| c_i \|$
    \ENDFOR
    \RETURN $r = (r_1, \dots, r_n)$
\end{algorithmic}
\end{algorithm}

In order to sample from $\rho_i(y)$, one may use rejection sampling; for further details, see \citep{lavancier2015determinantal}.  Note that the algorithm can be generalized in a straightforward fashion to a complex orthonormal basis $H(x)$ by replacing all transposes with Hermitian transposes.

\section{Proof of Theorem \ref{thm:equivariance_by_spin}}
\label{prf:equivariance_by_spin}

\begin{theorem*}
    The transformation $T^\l$ is $\G$-equivariant if and only if 
    \begin{equation*}
        \Tal(\pi_\alpha \ral, \pi_\alphac \rcl) = \pi_\alpha \Tal(\ral, \rcl) \qquad \alpha \in \{u, d\}
    \end{equation*}
    That is, $\Tal$ is equivariant with respect to $r_\alpha^\l$, and invariant with respect to $r_\alphac^\l$.
\end{theorem*}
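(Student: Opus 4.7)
The plan is to exploit the direct product structure $\G = \mathbb{S}_{\nup} \times \mathbb{S}_{\ndn}$. Every $g \in \G$ decomposes uniquely as $g = \pi_u \otimes \pi_d$ with $\pi_u \in \mathbb{S}_{\nup}$ and $\pi_d \in \mathbb{S}_{\ndn}$, and such a $g$ acts on $r^\l = (r_u^\l, r_d^\l)$ block-wise via $g r^\l = (\pi_u r_u^\l, \pi_d r_d^\l)$. Since the ambient transformation also splits as $T^\l = (T_u^\l, T_d^\l)$ per Equation (\ref{eq:layer_by_spin}), the condition $T^\l(g r^\l) = g T^\l(r^\l)$ immediately decomposes into two independent block conditions, one for each $\alpha \in \{u, d\}$.

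For the forward direction, I would assume $T^\l$ is $\G$-equivariant and specialize the definition to the decomposition above. Equating the spin-$\alpha$ blocks of $T^\l(g r^\l)$ and $g T^\l(r^\l)$ yields precisely $\Tal(\pi_\alpha \ral, \pi_\alphac \rcl) = \pi_\alpha \Tal(\ral, \rcl)$, which is the stated equation. This half is essentially bookkeeping.

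For the reverse direction, I would assume the stated equation holds for both $\alpha \in \{u, d\}$ and take an arbitrary $g = \pi_u \otimes \pi_d \in \G$. Stacking the two block equations gives
\begin{equation}
T^\l(g r^\l) = \bigl( T_u^\l(\pi_u r_u^\l, \pi_d r_d^\l),\, T_d^\l(\pi_u r_u^\l, \pi_d r_d^\l) \bigr) = (\pi_u T_u^\l(r_u^\l, r_d^\l),\, \pi_d T_d^\l(r_u^\l, r_d^\l)) = g T^\l(r^\l),
\end{equation}
establishing $\G$-equivariance. Finally, to justify the interpretation "$\Tal$ is equivariant with respect to $r_\alpha^\l$ and invariant with respect to $r_\alphac^\l$," I would observe that setting $\pi_\alphac = \text{id}$ in the main equation yields the equivariance statement in $\ral$ alone, while setting $\pi_\alpha = \text{id}$ yields the invariance statement in $\rcl$ alone; conversely, an equivariant-invariant $\Tal$ satisfies the joint equation by applying the two one-sided properties in sequence.

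There is no real obstacle here; the proof is a direct unpacking of definitions using the product structure of $\G$. The only minor point worth stating carefully is that the stated joint equation, the pair of one-sided properties (equivariance in $\ral$, invariance in $\rcl$), and full $\G$-equivariance of $T^\l$ are all mutually equivalent, which is what makes the theorem a convenient design criterion for layers in the subsequent sections.
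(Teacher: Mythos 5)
Your proof is correct and follows essentially the same route as the paper's: decompose $\pi \in \G$ as $\pi_u \otimes \pi_d$, observe that both the action and $T^\l = (T_u^\l, T_d^\l)$ split block-wise by spin, and read off the equivalence in both directions. Your additional remark that setting one of the permutations to the identity recovers the separate equivariance/invariance statements is a harmless clarification the paper leaves implicit.
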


\begin{proof}
    Let us begin with the forward direction: suppose that $T^\l$ is $\G$-equivariant.
    Let $\pi \in \G$; as $\G = \mathbb{S}_{\nup} \times \mathbb{S}_{\ndn}$, we may write the permutation $\pi = \pi_u \otimes \pi_d$, where $\pi_u$ is a permutation which applies to the indices in $\nup$, and similarly for $\pi_d$ and the indices in $\ndn$.  Then $\G$-equivariance of $T^\l$ implies
    \begin{equation}
        T^\l(\pi r^\l) = \pi T^\l(r^\l) = \pi r^\lp
        \label{eq:layer_equivariance}
    \end{equation}
    Now, let us break this down by spin.  Note that
    \begin{equation}
        \pi r^\lp = (\pi_u r_u^\lp, \pi_d r_d^\lp)
    \end{equation}
    and also
    \begin{equation}
        T^\l(\pi r^\l) = (T_u^\l(\pi_u r_u^\l, \pi_d r_d^\l) \, , \, T_d^\l(\pi_u r_u^\l, \pi_d r_d^\l))
    \end{equation}
    But Equation (\ref{eq:layer_equivariance}) says that $\pi r^\lp = T^\l(\pi r^\l)$, so we may combine the last two equations to give
    \begin{equation}
        T_u^\l(\pi_u r_u^\l, \pi_d r_d^\l) = \pi_u T_u^\l(r_u^\l, r_d^\l) \qquad \text{and} \qquad T_d^\l(\pi_u r_u^\l, \pi_d r_d^\l) = \pi_d T_d^\l(r_u^\l, r_d^\l)
        \label{eq:layer_eq_inv_orig}
    \end{equation}
    In words, $T_u^\l$ is \textit{equivariant} with respect to $\pi_u$, and \textit{invariant} with respect to $\pi_d$; and the reverse is true for $T_d^\l$.
    For notational convenience, we use $\alpha \in \{u, d\}$ to denote the spin, and the complement of the spin is given by $\alphac$ (i.e.~if $\alpha = u$ then $\alphac=d$).  In this case, we may summarize Equation (\ref{eq:layer_eq_inv_orig}) as
    \begin{equation}
        \Tal(\pi_\alpha r_\alpha^\l, \pi_\alphac r_\alphac^\l) = \pi_\alpha \Tal(r_\alpha^\l, r_\alphac^\l) \qquad \alpha \in \{u, d\}
    \end{equation}
    which completes the proof for the forward direction.

    Now, suppose that $\Tal(\pi_\alpha \ral, \pi_\alphac \rcl) = \pi_\alpha \Tal(\ral, \rcl)$.  For a given permutation $\pi = \pi_u \otimes \pi_d$, we have
    \begin{equation}
        T^\l(\pi r^\l) = (T_u^\l(\pi_u r_u^\l, \pi_d r_d^\l) \, , \, T_d^\l(\pi_u r_u^\l, \pi_d r_d^\l))
                       = (\pi_u T_u^\l(r_u^\l, r_d^\l) \, , \, \pi_d  T_d^\l(r_u^\l, r_d^\l))
                       = \pi T^l(r^\l)
    \end{equation}
    so that $T^\l$ is $\G$-equivariant, as desired.  This completes the proof for the reverse direction.
\end{proof}

\section{Proof of Theorem \ref{thm:continuous_flow}}
\label{prf:continuous_flow}

\begin{theorem*}
    Let the transformation $r = T(z)$ be specified by the ODE
    \begin{equation*}
        \frac{dv}{dt} = \Gamma(v), \quad \text{with} \quad  v(0) = z \sim \rho_z(\cdot) \quad \text{and} \quad r = v(1)
    \end{equation*}
    Then $T$ is $\G$-equivariant if $\Gamma$ is $\G$-equivariant.
\end{theorem*}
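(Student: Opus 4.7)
The plan is to exploit the linearity of permutation actions together with uniqueness of solutions to ODEs. Fix an arbitrary $\pi \in \G$ and an arbitrary base sample $z$. Let $v(t)$ denote the solution to the ODE starting from $v(0) = z$, so that $T(z) = v(1)$. Let $\tilde{v}(t)$ denote the solution starting from $\tilde{v}(0) = \pi z$, so that $T(\pi z) = \tilde{v}(1)$. The goal is to show $\tilde{v}(1) = \pi v(1)$.

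The key step is to define the curve $w(t) = \pi v(t)$ and verify that it satisfies the same ODE as $\tilde{v}(t)$ with the same initial condition. Because a permutation acts as a fixed linear map on $\mathbb{R}^{Dn}$, it commutes with the time derivative, so $dw/dt = \pi (dv/dt) = \pi \Gamma_t(v(t))$. Applying the assumed $\G$-equivariance of $\Gamma_t$, this equals $\Gamma_t(\pi v(t)) = \Gamma_t(w(t))$. Moreover $w(0) = \pi z = \tilde{v}(0)$. Hence $w$ and $\tilde{v}$ are both solutions of $du/dt = \Gamma_t(u)$ with identical initial data.

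Invoking uniqueness of solutions to the initial value problem (which holds under the mild regularity that is implicitly assumed for any neural ODE in this setting), we conclude $w(t) = \tilde{v}(t)$ for all $t \in [0,1]$. Evaluating at $t=1$ gives $\pi v(1) = \tilde{v}(1)$, i.e., $\pi T(z) = T(\pi z)$. Since $\pi \in \G$ and $z$ were arbitrary, $T$ is $\G$-equivariant.

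The main obstacle, such as there is one, is ensuring the ODE uniqueness hypothesis. This is not deep but deserves a one-line acknowledgement: standard Picard--Lindel\"of suffices provided $\Gamma_t$ is (locally) Lipschitz in its spatial argument, which is a standard requirement for neural ODEs and will be inherited from the architecture used to implement $\Gamma_t$ in Appendix \ref{app:implementation}.
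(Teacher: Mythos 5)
Your proof is correct, but it takes a different route from the paper: the paper disposes of this theorem in one line by citing Theorem 2 of K\"ohler \emph{et al.} (2020) on equivariant flows, whereas you give a direct, self-contained argument. Your argument --- define $w(t) = \pi v(t)$, use linearity of the permutation action to commute $\pi$ with $d/dt$, apply the equivariance of $\Gamma_t$ to see that $w$ solves the same initial value problem as the flow started at $\pi z$, and conclude by uniqueness --- is in essence the standard proof underlying the cited result, so the two approaches ultimately rest on the same mechanism. What your version buys is transparency and independence from the external reference, including making explicit the hypothesis the citation leaves implicit: uniqueness of solutions, which you correctly note requires (local) Lipschitz regularity of $\Gamma_t$ in its spatial argument, a condition satisfied by the smooth architectures the paper proposes for implementing $\Gamma_t$. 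What the paper's route buys is brevity and the fact that the cited theorem is stated at the level of push-forward densities, which dovetails with how the equivariance is subsequently used (e.g.\ in Theorem \ref{thm:normalizing_flow} and Theorem \ref{thm:induction}); your statement-level equivariance of the map $T$ delivers the same conclusion once combined with those results.
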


\begin{proof}
    The result follows directly from Theorem 2 in \citep{kohler2020equivariant}.
\end{proof}

\section{The Complexity of Discrete Normalizing Flows}
\label{app:complexity}

The limiting factor in the complexity of the discrete normalizing flow is the computation of determinants; unlike traditional normalizing flows, we make no effort to accelerate the determinant of the Jacobian, which allows us to have more expressive layers.  In particular, the relevant space is of dimension $Dn$, where $D=3$ and $n$ is on the order of tens of electrons for small molecules.  Thus, the overall dimension of the space is low hundreds.

We note that the determinant of the Jacobian is cubic in the dimension; for a low-dimensional space this is acceptable.  Furthermore, popular methods based on neural networks, such as FermiNet \citep{pfau2020ab} and PauliNet \citep{hermann2020deep} use determinants explicitly in their \ansatze{}, so that they have similar complexity.  However, these methods use Markov Chain Monte Carlo sampling, so that they incur extra overhead from having to sample by solving for the limit of a stochastic differential equation, which our method avoids.

\section{Proof of Theorem \ref{thm:layer_inverse_equivariance}}
\label{prf:layer_inverse_equivariance}

\begin{theorem*}
    Let $T^\l$ be a Split Subspace Layer, as given by
    \begin{equation*}
        \railp = \Tail(\ral, \rcl) = \rail + \xal \Pail(\gal, \gcl) \quad \text{with} \quad \Pail(\gal, \gcl) \in \mathbb{R}^{D-D_\beta}
    \end{equation*}
    Then $T^\l$ is invertible.  In particular, let $\gailrp = (\bal)^T \railp$; then the inverse of the layer is given by
    \begin{equation*}
        \rail = \railp - \xal \Pail(\galrp, \gclrp)
    \end{equation*}
    Furthermore, the layer $T^\l$ is $\G$-equivariant if
    \begin{equation*}
        \Pal(\pi_\alpha \gal, \pi_\alphac \gcl) = \pi_\alpha \Pal(\gal, \gcl)
    \end{equation*}
    i.e.~if $\Pal(\gal, \gcl)$ is equivariant with respect to permutations on $\gal$ and invariant with respect to permutations on $\gcl$.
\end{theorem*}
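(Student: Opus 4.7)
The plan is to prove the two parts of the theorem (invertibility plus explicit inverse formula, and $\G$-equivariance) separately, with the key idea for both being that the output of the layer adds something in the $\xal$ subspace, which is orthogonal to the $\bal$ subspace. This orthogonality is built into the construction, since $\Lal = [\bal, \xal]$ is an element of $O(D)$, so $(\bal)^T \xal = 0$.

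For invertibility, the critical observation I would establish first is that $\gail$ is preserved by the layer. Explicitly, applying $(\bal)^T$ to both sides of $\railp = \rail + \xal \Pail(\gal, \gcl)$ and using $(\bal)^T \xal = 0$ gives $(\bal)^T \railp = (\bal)^T \rail$, i.e.\ $\gailrp = \gail$ for every $i$ (and the same for the complementary spin $\alphac$). Consequently the arguments of $\Pail$ can be read off entirely from $\railp, \rclp$, and rearranging the defining equation yields the claimed inverse $\rail = \railp - \xal \Pail(\galrp, \gclrp)$. A short verification that applying this formula to $\railp$ recovers the original $\rail$ (and vice versa) then completes this part.

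For $\G$-equivariance, my plan is to invoke Theorem \ref{thm:equivariance_by_spin}, which reduces the problem to showing $\Tal(\pi_\alpha \ral, \pi_\alphac \rcl) = \pi_\alpha \Tal(\ral, \rcl)$. I would first observe that since $\gail = (\bal)^T \rail$ is computed per-electron, permuting the list $\ral$ by $\pi_\alpha$ permutes the list $\gal$ by the same $\pi_\alpha$, and similarly for $\alphac$. Then plugging into the layer definition, the $i$-th output under the permuted input equals $r^\l_{\alpha,\pi_\alpha^{-1}(i)} + \xal \Pail(\pi_\alpha \gal, \pi_\alphac \gcl)$; applying the hypothesis $\Pal(\pi_\alpha \gal, \pi_\alphac \gcl) = \pi_\alpha \Pal(\gal, \gcl)$ lets me rewrite the second term as $\xal \Pal^\l_{\alpha,\pi_\alpha^{-1}(i)}(\gal, \gcl)$, so the whole expression becomes $T^\l_{\alpha,\pi_\alpha^{-1}(i)}(\ral, \rcl)$, which is exactly $(\pi_\alpha \Tal(\ral, \rcl))_i$.

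Neither step contains a genuine obstacle: the invertibility hinges only on the orthogonality $(\bal)^T\xal = 0$, and the equivariance is a bookkeeping exercise once one notes that $\gal$ transforms in the same way as $\ral$ under $\pi_\alpha$. The only point requiring mild care is being precise about the action of a permutation on a list of vectors, so that the index manipulation $(\pi_\alpha \ral)_i = r^\l_{\alpha,\pi_\alpha^{-1}(i)}$ is applied consistently on both sides; I would state this convention explicitly at the start of the equivariance argument to avoid confusion.
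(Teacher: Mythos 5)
Your proposal is correct and follows essentially the same route as the paper's proof: the invertibility rests on the same key observation that $(\bal)^T\xal = 0$ implies $\gailrp = \gail$, and the equivariance argument is the same reduction via Theorem \ref{thm:equivariance_by_spin} followed by the same index bookkeeping using the hypothesis on $\Pal$. No gaps.
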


\begin{proof}
    Let us first prove the layer's inverse.  First, note that $\gailrp$ can be computed entirely from variables in layer $\lp$.  Also note that $\gailrp \neq \gailp$, since $\gailp = (\balp)^T \railp$ - i.e. $\gailrp$ uses $\bal$, while $\gailp$ uses $\balp$.  Now, we show that
    \begin{align*}
        \gailrp & = (\bal)^T \railp \\
                & = (\bal)^T \left( \rail + \xal \Pail(\gal, \gcl) \right) \\
                & = (\bal)^T \rail + (\bal)^T \xal \Pail(\gal, \gcl) \\
                & = \gail + 0  = \gail
    \end{align*}
    where the equality in the last line follows from the fact that $\xal$ is the orthogonal complement of $\bal$, so that $(\bal)^T \xal = 0$.  That is, the Split Subspace Layer has the nice property that it preserves projections onto the subspace given by $\bal$.

    Recall that the Split Subspace Layer $T_\alpha^\l$ on a per-electron basis by
    \begin{equation}
        \railp = \Tail(\ral, \rcl) = \rail + \xal \Pail(\gal, \gcl)
        \label{eq_layer_main_supp}
    \end{equation}
    Given this, the inverse is straightforwardly computed by rearranging Equation (\ref{eq_layer_main_supp}):
    \begin{equation}
        \rail = \railp - \xal \Pail(\gal, \gcl) = \railp - \xal \Pail(\galrp, \gclrp)
    \end{equation}
    Note that everything on the right-hand side depends on variables from layer $\lp$, as desired.  Thus, we have shown the layer in Equation (\ref{eq_layer_main_supp}) is invertible regardless of the form of the network $\Pal$.
    
    Now let us turn to proving the layer's $\G$-equivariance.  Recall that the conditions for $\G$-equivariance are given by
    \begin{equation}
        \Tal(\pi_\alpha \ral, \pi_\alphac \rcl) = \pi_\alpha \Tal(\ral, \rcl) \qquad \alpha \in \{u, d\}
    \end{equation}
    This can be combined with Equation (\ref{eq_layer_main_supp}):
    \begin{align}
        & \Tal(\pi_\alpha \ral, \pi_\alphac \rcl) = \pi_\alpha \Tal(\ral, \rcl) \notag \\
        \Leftrightarrow \quad & \Tail(\pi_\alpha \ral, \pi_\alphac \rcl) = T_{\alpha, \pi_{\alpha}(i)}^\l (\ral, \rcl) \notag \\
        \Leftrightarrow \quad & r_{\alpha, \pi_{\alpha}(i)}^\l + \xal \Pail(\pi_\alpha \gal, \pi_\alpha \gcl) = r_{\alpha, \pi_{\alpha}(i)}^\l + \xal \varphi_{\alpha, \pi_{\alpha}(i)}^\l(\gal, \gcl) \notag \\
        \Leftrightarrow \quad & \Pail(\pi_\alpha \gal, \pi_\alpha \gcl) = \varphi_{\alpha, \pi_{\alpha}(i)}^\l(\gal, \gcl) \notag \\
        \Leftrightarrow \quad & \Pal(\pi_\alpha \gal, \pi_\alphac \gcl) = \pi_\alpha \Pal(\gal, \gcl)
    \end{align}
    where $\pi_{\alpha}(i)$ indicates the index that electron $i$ is moved to under the permutation $\pi_{\alpha}$; and the fourth line follows from the fact that the previous statement must be true for all possible outputs of $\Pal$.  This completes the proof.
\end{proof}

\section[Implementation of the G-Equivariant Layer]{Implementation of the $\G$-Equivariant Layer}
\label{app:implementation}

As we have seen, invertibility places no special restrictions on the form of $\Pal$.  With regard to the conditions imposed by $\G$-equivariance, i.e.~
\begin{equation}
    \Pal(\pi_\alpha \gal, \pi_\alphac \gcl) = \pi_\alpha \Pal(\gal, \gcl),
    \label{eq:phi_equivariance_supp}
\end{equation}
there are several ways to achieve them.  We propose the following method, as it uses standard off-the-shelf architectures; we use the variables $\zail$ to represent intermediate quantities.
\begin{enumerate}
    \item \textit{\textbf{Lifting:}} Map each value $\gail$ from dimension $D_\beta$ to dimension $D_\zeta$:
    \begin{equation}
        \zail = W_\alpha \gail
    \end{equation}
    where there are two matrices $W_\alpha$ of dimension $D_\zeta \times D_\beta$, one for each spin $\alpha \in \{u, d\}$.
    \item \textit{\textbf{Multihead Attention:}} We have two Multihead Attention (MHA) layers $\tal$, one for each spin.  Each MHA takes as input the the list $\zal = \{ \zail \}_{i \in \na}$.  The output of the MHA is then 
    \begin{equation}
        \zal \leftarrow \tal(\zal)
    \end{equation}
    \item \textit{\textbf{Fully Connected Layer Per Spin:}} There are two fully connected layers $\mal$, one for each spin.  The layer is applied per electron, with the same layer being applied to electrons of a given spin:
    \begin{equation}
        \zail \leftarrow \mal(\zail) \quad \text{for } i \in \na
    \end{equation}
    \item \textit{\textbf{Average:}} Form the average values: $\bzal = \frac{1}{\numa} \sum_{i \in \na} \zail$.
    \item \textit{\textbf{Fully Connected Layer with Spin Mixing:}} We have two fully connected layers $\hmal$, one for each spin.  Then:
    \begin{equation}
        \Pail(\gal, \gcl) = \hmal(\texttt{CAT}(\zail \, , \, \bzcl)) \quad \text{for } i \in \na
        \label{eq:mlp}
    \end{equation}
    The output of the MLPs $\mal$ is of dimension $D-D_\beta$.
\end{enumerate}
Due to the permutation-equivariance of Multihead Attention, the $\G$-equivariance follows naturally.  Some comments are in order:
\begin{itemize}
    \item  We can choose $D_\beta \in \{ 1, \dots, D-1 \}$.  Since in our case $D = 3$, this gives us exactly two choices: $D_\beta = 1$ or $D_\beta = 2$.
    \item The fully connected layers should use smooth activation functions, i.e.~not ReLU.  There are many possible smooth substitutes for ReLU-like activations, such as Swish, SiLU, etc.
    \item To achieve orthogonalization, i.e. to ensure that $\xal$ is itself orthonormal and is also orthogonal to $\bal$, it is important to use a smooth procedure. Gram-Schmidt may be employed for this purpose: an initial (e.g.~random) set of vectors are chosen, which are then orthonormalized by the procedure.
    \item In the special case of Helium, there are only 2 electrons: one which is spin-up, and the other which is spin-down.  In this case, the requirement that $\Pal(\gal, \gcl)$ be equivariant with respect to permutations of $\gal$ is trivially satisfied; likewise, the requirement that $\Pal(\gal, \gcl)$ be invariant with respect to permutations of $\gcl$ is also trivially satisfied.  As a result, the Multihead Attention layers $\tal$ may be replaced by the identity, with everything else remaining the same. 
\end{itemize}

\section{A Generalized Variant of the Split Subspace Layer}
\label{app:generalized_variant}

We note that a generalized variant of the Split Subspace Layer is as follows:
\begin{equation}
    \railp = \Tail(\ral, \rcl) = \bal \eta_{\alpha, i}^\l(\gal, \gcl) + \xal \Pail(\gal, \gcl)
\end{equation}
where both $\phi$ and $\eta$ satisfy the conditions in (\ref{eq:phi_equivariance_supp}), and $\eta$ is explicitly invertible in the sense that the system of equations $y_{\alpha, i} = \eta_{\alpha, i}^\l(\gal, \gcl)$ for all $\alpha, i$ may be inverted to solve for all values of $\gail$.  An example of such an $\eta$ is given by $\eta_{\alpha, i}^\l(\gal, \gcl) = f(A \gail + \sum_{j \neq i} B \gamma_{\alpha,j}^\l + \sum_j C \gamma_{\alphac,j}^\l)$ for $D_\beta \times D_\beta$ matrices $A, B, C$ and an invertible nonlinearity $f : \mathbb{R}^{D_\beta} \to \mathbb{R}^{D_\beta}$ (such as the cube of each element).

\section{Proof of Theorem \ref{thm:log_domain}}
\label{prf:log_domain}

\begin{theorem*}
    The local energy can be written as
    \begin{equation*}
        \elocal_r(r; \theta) = -\tfrac{1}{2} \Delta_r q(r; \theta) -\tfrac{1}{2} \| \nabla_r q(r; \theta) \|^2 + V(r)
    \end{equation*}
    In particular, the local energy is independent of the phase $w(r; \theta)$.  Furthermore, let
    \begin{equation*}
        \Omega(r; \theta) = 2 \left( \elocal_r(r; \theta) - \mathbb{E}_{r \sim \rho(\cdot; \theta)} \left[ \elocal_r(r; \theta) \right] \right) \nabla_\theta q(r; \theta)
    \end{equation*}
    Then the gradient of the loss function may be written as
    \begin{equation*}
        \nabla_\theta \mathcal{L}(\theta) = \mathbb{E}_{r \sim \rho(\cdot; \theta)} \left[ \Omega(r; \theta) \right]
                                          \approx \frac{1}{K} \sum_{k=1}^K \Omega \left(r^{(k)}; \theta \right)
    \end{equation*}
    with samples $r^{(k)} \sim \rho(\cdot; \theta)$.
\end{theorem*}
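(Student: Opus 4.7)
The plan is to handle the two assertions sequentially: first derive a general expression for $\elocal_r$ in terms of the log-magnitude $q$ and phase $w$, then specialize using the structure of the phase established in Section \ref{sec:phase}; and second, compute $\nabla_\theta \mathcal{L}$ via a direct application of the log-derivative trick.

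For the local energy, I would write $\psi = e^{q + iw}$ and set $f = q + iw$, so that by the chain rule $\nabla_r \psi = \psi \, \nabla_r f$ and $\Delta_r \psi = \psi \left( \Delta_r f + \nabla_r f \cdot \nabla_r f \right)$. Expanding the (non-Hermitian) inner product gives
\begin{equation*}
    \nabla_r f \cdot \nabla_r f = \| \nabla_r q \|^2 - \| \nabla_r w \|^2 + 2 i \, \nabla_r q \cdot \nabla_r w,
\end{equation*}
so that $\text{Real}\{\Delta_r \psi / \psi\} = \Delta_r q + \| \nabla_r q \|^2 - \| \nabla_r w \|^2$. Substituting into the definition of $\elocal_r$ in Equation (\ref{eq:elocal_original}) yields
\begin{equation*}
    \elocal_r(r; \theta) = -\tfrac{1}{2} \Delta_r q - \tfrac{1}{2} \| \nabla_r q \|^2 + \tfrac{1}{2} \| \nabla_r w \|^2 + V(r).
\end{equation*}
To obtain the stated formula, I would then invoke Theorem \ref{thm:phase}: the phase assignment $\pir$ is piecewise constant on the complement of the measure-zero coincidence set, taking only two values $\{0, \pi\}$, so $\nabla_r w = 0$ almost everywhere. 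Since $\elocal_r$ appears only inside Lebesgue integrals against $\rho$, this a.e.\ equality is all that is required.

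For the gradient of the objective, I would start from the normalized form $\mathcal{L}(\theta) = \int \rho(r; \theta) \, \elocal_r(r; \theta) \, dr$ and differentiate under the integral sign:
\begin{equation*}
    \nabla_\theta \mathcal{L}(\theta) = \int \left[ \nabla_\theta \rho(r; \theta) \cdot \elocal_r(r; \theta) + \rho(r; \theta) \cdot \nabla_\theta \elocal_r(r; \theta) \right] dr.
\end{equation*}
Using the identity $q = \tfrac{1}{2} \log \rho$, we have $\nabla_\theta \rho = 2 \rho \, \nabla_\theta q$. Substituting and factoring $\rho$ out of the integrand gives
\begin{equation*}
    \nabla_\theta \mathcal{L}(\theta) = \int \rho(r; \theta) \left[ \nabla_\theta \elocal_r(r; \theta) + 2 \elocal_r(r; \theta) \nabla_\theta q(r; \theta) \right] dr = \mathbb{E}_{r \sim \rho(\cdot; \theta)}\!\left[ \Omega(r; \theta) \right],
\end{equation*}
and the Monte Carlo approximation follows immediately by replacing the expectation with an empirical mean over samples $r^{(k)} \sim \rho(\cdot; \theta)$.

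The only genuinely substantive step is the vanishing of $\|\nabla_r w\|^2$; everything else is standard calculus combined with the log-derivative identity. The main obstacle, such as it is, lies in being careful about the almost-everywhere nature of $\nabla_r w = 0$ (the phase has jump discontinuities on the lower-dimensional locus where $\pir$ changes), but since these exceptional sets have Lebesgue measure zero and the subsequent use of $\elocal_r$ is via expectations under $\rho$, they contribute nothing to $\mathcal{L}$ or its gradient. A minor bookkeeping point is that differentiation under the integral sign requires mild regularity on $\rho(r; \theta)$ and $\elocal_r(r; \theta)$ in $\theta$, which is standard for the neural-network parameterizations considered here.
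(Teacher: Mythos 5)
Your proposal is correct and follows essentially the same route as the paper's proof: expand $\psi = e^{q+iw}$ to express $\elocal_r$ in terms of $q$ and $w$, argue that the $\{0,\pi\}$-valued phase makes $\|\nabla_r w\|^2$ vanish wherever it matters (you phrase this as an almost-everywhere statement on the complement of the nodal set, while the paper notes that $\rho=0$ on the sign-flip locus so those points are never sampled---the same idea), and then obtain $\nabla_\theta\mathcal{L}$ by differentiating under the integral and using $\nabla_\theta \rho = 2\rho\,\nabla_\theta q$. No gaps worth noting.
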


\begin{proof}
    For the moment, we suppress $\theta$ for convenience.  Recall that the overall (i.e.~complex) local energy is defined by
    \begin{equation}
        \elocal(r) = \frac{H \psi(r)}{\psi(r)} = -\frac{\Delta \psi(r)}{2\psi(r)} + V(r)
    \end{equation}
    Let $r_{j, d}$ be the $d^{th}$ component of the position vector of the $j^{th}$ electron, $d = 1, \dots, D$.  Then plugging in $\psi(r) = e^{q(r) + iw(r)}$, we have that
    \begin{equation}
        \frac{\partial \psi}{\partial r_{j,d}}
        = e^{q(r) + iw(r)}\left( \frac{\partial q}{\partial r_{j,d}} + i \frac{\partial w}{\partial r_{j,d}} \right)
        = \psi(r) \left( \frac{\partial q}{\partial r_{j,d}} + i \frac{\partial w}{\partial r_{j,d}} \right)
    \end{equation}
    and
    \begin{align}
        \frac{\partial^2 \psi}{\partial r_{j,d}^2} & = e^{q(r) + iw(r)}\left( \frac{\partial^2 q}{\partial r_{j,d}^2} + i \frac{\partial^2 w}{\partial r_{j,d}^2} \right) +  e^{q(r) + iw(r)} \left( \frac{\partial q}{\partial r_{j,d}} + i \frac{\partial w}{\partial r_{j,d}} \right)^2 \notag \\
        & = \psi(r) \left[ \left( \frac{\partial^2 q}{\partial r_{j,d}^2} + \left( \frac{\partial q}{\partial r_{j,d}} \right)^2 - \left( \frac{\partial w}{\partial r_{j,d}} \right)^2 \right) + i \left( \frac{\partial^2 w}{\partial r_{j,d}^2} + 2  \frac{\partial q}{\partial r_{j,d}} \frac{\partial w}{\partial r_{j,d}}\right) \right]
    \end{align}
    With the appropriate summation, this immediately yields
    \begin{equation}
        \elocal(r) = -\frac{1}{2}\sum_{j=1}^n \left[ \left( \Delta_j q + \| \nabla_j q \|^2 - \| \nabla_j w \|^2 \right) + i \left( \Delta_j w + 2 \nabla_j q \cdot \nabla_j w \right) \right] + V(r)
    \end{equation}
    so that its real part simplifies to
    \begin{align}
        \elocal_r(r) 
        & = -\frac{1}{2}\sum_{j=1}^n \left( \| \nabla_j w \|^2 - \Delta_j q - \| \nabla_j q \|^2 \right) + V(r) \notag \\
        & = \tfrac{1}{2} \left( \| \nabla w \|^2 - \Delta q - \| \nabla q \|^2 \right) + V(r) 
    \end{align}

    Now, it is known that since the Hamiltonian is time-reversal invariant  and  Hermitian, both its eigenvalues and its eigenfunctions are real.  Since the ground-state wavefunction we are looking for is real, the phase $w(r)$ can be taken to belong to the two element set $\{0, \pi\}$, where $w(r) = 0$ corresponds to positive values of the wavefunction $\psi(r)$, and $w(r) = \pi$ to negative values of $\psi(r)$.  Thus, where the sign of $\psi(r)$ does not change $w(r)$ is constant, and therefore $\| \nabla w (r) \|^2 = 0$.

    We are then left to consider the case when the sign of $\psi(r)$ flips, and therefore there is a discontinuity in $w(r)$; this occurs precisely where $\psi(r) = 0$.  However, recall that the loss is given by
    \begin{equation}
        \mathcal{L}(\theta) = 
        \langle \psi(\cdot; \theta) | H | \psi(\cdot; \theta) \rangle
        \, = \, \mathbb{E}_{r \sim \rho(\cdot; \theta)} \left[ \elocal_r(r; \theta) \right]
    \end{equation}
    When $\psi(r) = 0$ then $\rho(r) = 0$; thus, samples where there is a discontinuity are never selected.  We may therefore set the local energy at such values of $r$ to any value we wish, without affecting the value of $\mathcal{L}(\theta)$.  In particular, we are free to set $\| \nabla w(r) \|= 0$ at such points.  In conclusion, then, we have demonstrated that
    \begin{equation}
        \elocal_r(r; \theta) = -\tfrac{1}{2} \Delta_r q(r; \theta) -\tfrac{1}{2} \| \nabla_r q(r; \theta) \|^2 + V(r)
    \end{equation}
    which is independent of the phase $w(r)$.

    Turning to the second part of the theorem, we note that
    \begin{equation}
        \mathcal{L}(\theta)
        \, = \, \mathbb{E}_{r \sim \rho(\cdot; \theta)} \left[ \elocal_r(r; \theta) \right]
        \, = \, \int \elocal_r(r; \theta) \rho(r; \theta) dr
    \end{equation}
    so that
    \begin{equation}
        \nabla_\theta \mathcal{L}(\theta)
        \, = \, \int \left( \nabla_\theta \elocal_r(r; \theta) \rho(r; \theta) + \elocal_r(r; \theta) \nabla_\theta \rho(r; \theta) \right) dr
    \end{equation}
    However, since $q(r; \theta) = \frac{1}{2} \log \rho(r; \theta)$, then $\nabla_\theta q(r; \theta) = \nabla_\theta \rho(r; \theta) / 2 \rho(r; \theta)$, or $\nabla_\theta \rho(r; \theta) = 2 \rho(r; \theta) \nabla_\theta q(r; \theta)$.
    Plugging this in gives
    \begin{align}
        \nabla_\theta \mathcal{L}(\theta)
        & \, = \, \int \left( \nabla_\theta \elocal_r(r; \theta) \rho(r; \theta) + 2 \elocal_r(r; \theta) \rho(r; \theta) \nabla_\theta q(r; \theta) \right) dr \notag \\
        & \, = \, \int \rho(r; \theta) \left( \nabla_\theta \elocal_r(r; \theta) + 2 \elocal_r(r; \theta) \nabla_\theta q(r; \theta) \right) dr \label{eq:grad_loss}
    \end{align}
    Now, let us examine the vector $\zeta \equiv \int \rho(r; \theta) \nabla_\theta \elocal_r(r; \theta) dr$.  First, note that as the eigenfunctions are real, then $\elocal_r(r; \theta) = \elocal(r; \theta)$.  Then, a straightforward computation gives that the $j^{th}$ component of $\nabla_\theta \elocal(r; \theta)$ is
    \begin{equation}
        \partial_{\theta_j} \elocal(r; \theta)
        = \left[ \partial_{\theta_j} \frac{H\psi(r; \theta)}{\psi(r; \theta)} \right]
        = \frac{\psi(r; \theta) \, (\partial_{\theta_j} H \psi)(r; \theta) - (H\psi)(r; \theta) \,\, (\partial_{\theta_j}\psi)(r; \theta)}{\psi^2(r; \theta)}
    \end{equation}
    We may interchange the order of differentiation so that
    \begin{equation}
        \partial_{\theta_j} \elocal(r; \theta)
        = \frac{\psi(r; \theta) \, (H \partial_{\theta_j} \psi)(r; \theta) - (H\psi)(r; \theta) \,\, (\partial_{\theta_j}\psi)(r; \theta)}{\psi^2(r; \theta)}
    \end{equation}
    Thus, the $j^{th}$ component of $\zeta$ is given by
    \begin{equation}
        \zeta_j = \int \left( \psi(r; \theta) \, (H\partial_{\theta_j} \psi)(r; \theta) - (H\psi)(r; \theta) \,\, (\partial_{\theta_j} \psi)(r; \theta) \right) dr
    \end{equation}
    where we have used $\rho = |\psi|^2 = \psi^2$ since the eigenfunctions are real.  We may rewrite this as
    \begin{equation}
        \zeta_j = \langle \psi , H (\partial_{\theta_j} \psi) \rangle - \langle H\psi , \partial_{\theta_j} \psi \rangle
    \end{equation}
    However, note that
    \begin{equation}
        \langle \psi , H (\partial_{\theta_j} \psi) \rangle
        = \langle H^\dagger \psi , \partial_{\theta_j} \psi \rangle
        = \langle H \psi , \partial_{\theta_j} \psi \rangle
    \end{equation}
    where the latter equality follows from the Hermiticity of the Hamiltonian $H$.  Thus, we have shown that $\zeta_j = 0$; as this is true for any $j$, we have that $\zeta = 0$.  Plugging this into Equation (\ref{eq:grad_loss}) gives 
    \begin{equation}
        \nabla_\theta \mathcal{L}(\theta)
        \, = \, 2 \int \rho(r; \theta) \elocal_r(r; \theta) \nabla_\theta q(r; \theta) dr
        \label{eq:nearly}
    \end{equation}
    Finally, using the Expected Grad-Log-Prob Lemma, we have that
    \begin{equation}
        \mathbb{E}_{r \sim \rho(\cdot; \theta)} \left[ \nabla_\theta q(r; \theta) \right]
        = \int \rho(r; \theta) \nabla_\theta q(r; \theta) dr = 0
    \end{equation}
    Multiplying the above equation by $\mathbb{E}_{r \sim \rho(\cdot; \theta)} \left[ \elocal_r(r; \theta) \right]$ and subtracting off from Equation (\ref{eq:nearly}) yields
    \begin{align}
        \nabla_\theta \mathcal{L}(\theta)
        & \, = \, 2 \int \rho(r; \theta) \left( \elocal_r(r; \theta) - \mathbb{E}_{r \sim \rho(\cdot; \theta)} \left[ \elocal_r(r; \theta) \right] \right) \nabla_\theta q(r; \theta) dr \notag \\
        & \, = \, \mathbb{E}_{r \sim \rho(\cdot; \theta)} \left[ \Omega(r; \theta) \right]
    \end{align}
    where
    \begin{equation}
        \Omega(r; \theta) = 2 \left( \elocal_r(r; \theta) - \mathbb{E}_{r \sim \rho(\cdot; \theta)} \left[ \elocal_r(r; \theta) \right] \right) \nabla_\theta q(r; \theta)
    \end{equation}
    as desired.
\end{proof}

\section{Optimization of the Objective Function}
\label{app:optimization}

The sampled approximation to the objective is given by
\begin{equation}
    \mathcal{L}(\theta)
    \, \approx \,
    \frac{1}{K} \sum_{k=1}^K \elocal_r\left(r^{(k)}; \theta \right)
\end{equation}
In order to optimize this objective, we use the procedure in Algorithm \ref{alg:ground_state}; first, we recall the following equations:
\begin{equation}
    q(r; \theta) = q_z \left( T^{-1} (r; \theta) \right) + \frac{1}{2} \sum_{\l=0}^L \log \left| \det J_{(T^\l)^{-1}}(r^\lp; \theta) \right|
    \label{eq:q_supp}
\end{equation}
\begin{equation}
    \elocal_r(r; \theta) = -\tfrac{1}{2} \Delta_r q(r; \theta) -\tfrac{1}{2} \| \nabla_r q(r; \theta) \|^2 + V(r)
    \label{eq:elocal_supp}
\end{equation}
\begin{equation}
    \Omega(r; \theta) = \nabla_\theta \elocal_r(r; \theta) + 2 \elocal_r(r; \theta) \nabla_\theta q(r; \theta)
    \label{eq:omega_supp}
\end{equation}
Algorithm \ref{alg:ground_state} is specified for the discrete normalizing flow; the procedure for the continuous normalizing flow will be similar.  Note that we initially sample a large number $K_{large}$ of samples from the base density; we emphasize that this step can be performed entirely offline, and does not entail additional computational complexity.
\begin{algorithm}[hbt!]
\caption{Computation of Ground State Wavefunction and Energy}
\label{alg:ground_state}
\begin{algorithmic}
    \REQUIRE base log-density $q_z(\cdot)$, normalizing flow $\{T^\l(\cdot; \theta)\}_{\l=0}^L$, potential $V(\cdot)$, learning rate $\epsilon$
    \STATE sample $\mathcal{Z} = \left\{ z^{(k)} \right\}_{k=1}^{K_{large}}$ for $z^{(k)} \sim \rho_z(\cdot)$ and $K_{large}$ a very large number of samples
    \STATE take $q(r; \theta)$ from (\ref{eq:q_supp}) and use auto-differentiation to compute $\nabla_r q(r; \theta)$ and $\Delta_r q(r; \theta)$
    \STATE using $\nabla_r q(r; \theta)$ and $\Delta_r q(r; \theta)$, compute $\elocal_r(r; \theta)$ from (\ref{eq:elocal_supp})
    \STATE using auto-differentiation, compute the function $\Omega(r; \theta)$ as in (\ref{eq:omega_supp})
    \STATE initialize $\theta$, e.g.~using Xavier initialization
    \WHILE{not converged}
        \STATE sample $K$ values of $z^{(k)}$ from $\mathcal{Z}$
        \STATE compute $r^{(k)} = T(z^{(k)}; \theta)$ using $T = T_L \circ \dots \circ T_0$
        \STATE compute the energy $E = \frac{1}{K} \sum_{k=1}^K \elocal_r(r^{(k)}; \theta)$
        \STATE compute the gradient $g = \frac{1}{K} \sum_{k=1}^K \Omega(r^{(k)}; \theta)$
        \STATE take $\theta \leftarrow \theta - \epsilon g$
    \ENDWHILE
    \RETURN $E$, $\theta$
\end{algorithmic}
\end{algorithm}

\section{Proof of Theorem \ref{thm:phase}}
\label{prf:phase}

\begin{theorem*}
    Let $\rho_0(r)$ be the density for the ground state wavefunction. Let $\prec$ be a strict total order on $\mathbb{R}^D$, and define the set
    \begin{equation*}
        \mathcal{R} = \{r = (r_1, \dots r_n): r_1 \prec r_2 \prec \dots \prec r_{n_u} \,\, \text{ and } \,\, r_{n_u+1} \prec r_{n_u+2} \prec \dots \prec r_n \}
    \end{equation*}
    For any $r$ without $r_i = r_j$, define the permutation $\pir \in \G$ by $\pir r \in \mathcal{R}$.  Then a valid antisymmetric ground state wavefunction is given by
    \begin{equation*}
        \psi_0(r) = 
        \begin{cases}
            (-1)^\pir \sqrt{\rho_0(r)} & \text{if } r_i \neq r_j \,\, \forall i, j \\
            0 & \text{otherwise}
        \end{cases}
    \end{equation*}
\end{theorem*}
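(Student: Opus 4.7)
The plan is to verify three things about $\psi_0$: that it is well-defined, that it induces the density $\rho_0$ in the sense $|\psi_0(r)|^2 = \rho_0(r)$ almost everywhere, and that it is antisymmetric under $\G$ (which via Theorem \ref{thm:spin_multiplicity} corresponds to the full-$x$ wavefunction being antisymmetric under $\mathbb{S}_n$). Well-definedness is immediate: because $\prec$ is a strict total order on $\mathbb{R}^D$ and $r$ has no coincidences, there is a unique permutation of the $\nup$-indices sorting $(r_i)_{i \in \nup}$ into strictly increasing $\prec$-order and, independently, a unique one on $\ndn$, and their product is the unique $\pir \in \G = \mathbb{S}_{\nup} \times \mathbb{S}_{\ndn}$ placing $r$ into $\mathcal{R}$. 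The modulus-squared check is likewise short: on the full-measure set of coincidence-free $r$, $|\psi_0(r)|^2 = ((-1)^{\pir})^2 \rho_0(r) = \rho_0(r)$, and the exceptional set on which $\psi_0$ is set to $0$ has Lebesgue measure zero, so normalization, matrix elements of $H$, and the ground-state property are unaffected.

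The heart of the proof is antisymmetry under $\G$. I would fix $\pi \in \G$ and a coincidence-free $r$, set $r' = \pi r$ (still coincidence-free, since $\pi$ permutes within spin groups), and identify $\pi_\prec(r')$ explicitly. By definition $\pi_\prec(r') r' \in \mathcal{R}$, i.e.~$\pi_\prec(r')\pi r \in \mathcal{R}$, and by uniqueness of the sort this forces $\pi_\prec(r')\pi = \pir$, hence $\pi_\prec(r') = \pir \pi^{-1}$. Then
\begin{equation*}
\psi_0(\pi r) = (-1)^{\pi_\prec(r')} \sqrt{\rho_0(\pi r)} = (-1)^{\pir \pi^{-1}} \sqrt{\rho_0(r)} = (-1)^\pi (-1)^{\pir} \sqrt{\rho_0(r)} = (-1)^\pi \psi_0(r),
\end{equation*}
using property (R1) of $\rho_0$ for $\pi \in \G$, together with the sign being a homomorphism and $(-1)^{\pi^{-1}} = (-1)^\pi$.

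The coincidence cases fall out easily. If $r_i = r_j$ for some $i,j$ both in $\nup$ (or both in $\ndn$), then $\rho_0(r) = 0$ by (R2), so $\psi_0(r) = 0$; and since $\pi \in \G$ preserves spin groups, $\pi r$ still has a same-spin coincidence, so $\psi_0(\pi r) = 0 = (-1)^\pi \psi_0(r)$. For cross-spin coincidences the definition sets both sides to $0$ directly, and cross-spin coincidences are preserved by any $\pi \in \G$.

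The main obstacle is purely bookkeeping: correctly tracking how the sorting permutation $\pi_\prec$ transforms under a $\G$-permutation of the argument and being scrupulous about the sign algebra. I would isolate the identity $\pi_\prec(\pi r) = \pir \pi^{-1}$ as a small lemma so that the antisymmetry computation becomes a one-line application of the sign homomorphism together with (R1); after that the remaining work is routine case analysis.
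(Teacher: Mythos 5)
Your proof is correct and follows essentially the same route as the paper's: the key identity $\pi_\prec(\pi r) = \pir\,\pi^{-1}$ obtained from uniqueness of the sorting permutation, followed by the sign homomorphism, $(-1)^{\pi^{-1}} = (-1)^{\pi}$, and $\G$-invariance (R1) of $\rho_0$, with the coincidence case handled by setting $\psi_0 = 0$. The additional checks you include (well-definedness of $\pir$ and $|\psi_0|^2 = \rho_0$ off a null set) are compatible refinements rather than a different argument.
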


\begin{proof}
    We begin by noting that the set $\mathcal{R}$ contains the spin-up electrons in ascending order, according to the ordering relation $\prec$, and the spin-down electrons also in ascending order.  Now, begin by considering the case of $r$ for which $r_i = r_j$ for some pair of electrons $i$ and $j$; in this case, $\psi_0(r) = 0$, as is required by antisymmetry.  Now, consider the case of $r$ for which $r_i \neq r_j \,\, \forall i, j$.  In this case, for any permutation $\pi \in \G$ we have that
    \begin{equation}
        \psi_0(\pi r) = (-1)^{\pi_\prec(\pi r)} \sqrt{\rho_0(\pi r)}
        \label{eq:psi_tr}
    \end{equation}
    However, recall that $\pir$ is defined by 
    \begin{equation}
        \pir r \in \mathcal{R}
    \end{equation}
    Therefore, $\pi_\prec(\pi r)$ is defined by 
    \begin{equation}
        \pi_\prec(\pi r) \pi r \in \mathcal{R}
    \end{equation}
    Comparing the latter two equations, we see that 
    \begin{equation}
        \pi_\prec(\pi r) \pi = \pir \qquad \Rightarrow \qquad \pi_\prec(\pi r) = \pir \pi^{-1}
        \label{eq:pir}
    \end{equation}
    Furthermore, we know that as $\rho_0(x)$ is the density for the ground state wavefunction, it must satisfy property (D1) of Theorem 2, 
    namely it must be $\G$-invariant; therefore, we must have that
    \begin{equation}
        \rho_0(\pi r) = \rho_0(r)
        \label{eq:rho_as}
    \end{equation}
    Plugging Equations (\ref{eq:pir}) and (\ref{eq:rho_as}) into (\ref{eq:psi_tr}) gives
    \begin{align}
        \psi_0(\pi r) & = (-1)^{\pir \pi^{-1}} \sqrt{\rho_0(r)} \notag \\
                      & = (-1)^\pi (-1)^\pir \sqrt{\rho_0(r)} \notag \\
                      & = (-1)^\pi \psi_0(r) \label{eq:psi_as}
    \end{align}
    where in the second line, we have used the facts that $(-1)^{\pi_a \pi_b} = (-1)^{\pi_a} (-1)^{\pi_b}$; and that $(-1)^{\pi^{-1}} = (-1)^\pi$.  But Equation (\ref{eq:psi_as}) is exactly the antisymmetry property we desire, and so we have completed the proof.

    Finally, we note that $\psi_0(r) > 0$ for $r \in \mathcal{R}$; this is an arbitrary choice, and we could have equally well defined a second ground state wavefunction $\tilde{\psi}_0$ with $\tilde{\psi}_0(r) < 0$ for $r \in \mathcal{R}$.  It is easy to see that in this case, $\tilde{\psi}_0(r) = -\psi_0(r)$ for all $r$.  However, this is not surprising: either $\psi_0$ or $-\psi_0$ may be taken as an eigenfunction of $H$, as eigenfunctions are only defined up to sign.
\end{proof}

\section{Proof of Theorem \ref{thm:cusps}}
\label{prf:cusps}

\begin{theorem*}
    Let the set of distances be given by $\Dl = \left\{ \Dijl \right\}_{i < j}$ where $\Dijl = \| \ril - \rjl \|$.  Given a layer of the form
    \begin{equation*}
        \rilp = \Rl(\Dl; \theta) \, \ril + \tl(\Dl; \theta) \qquad \text{with } \Rl(\Dl; \theta) \in O(D) \text{ and } \tl(\Dl; \theta) \in \mathbb{R}^D
    \end{equation*}
    Then the layer is both $\G$-equivariant as well as invertible.
\end{theorem*}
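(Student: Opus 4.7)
The plan is to establish the two properties in turn, with the key observation driving both being that the multiset of pairwise distances $\Dl$ is preserved both under permutations in $\G$ and under the rigid motion defined by $(\Rl, \tl)$. Once this is in hand, both $\G$-equivariance and invertibility follow essentially for free.

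For $\G$-equivariance, I would first observe that the unordered collection $\Dl = \{ \Dijl \}_{i<j}$ is invariant under \emph{any} permutation $\pi \in \mathbb{S}_n$, and hence in particular under any $\pi \in \G$: permuting electrons simply relabels the pairs without changing the underlying set of distances (using $\|r_a - r_b\| = \|r_b - r_a\|$). Consequently, $\Rl(\Dl; \theta)$ and $\tl(\Dl; \theta)$ are $\G$-invariant functions of $r^\l$. Applying the same affine map $v \mapsto \Rl v + \tl$ to each electron coordinate then manifestly commutes with any relabeling of electrons, so if $r^\lp = T^\l(r^\l)$ and $\tilde r^\l = \pi r^\l$ for $\pi \in \G$, then $T^\l(\tilde r^\l)_i = \Rl r^\l_{\pi^{-1}(i)} + \tl = r^\lp_{\pi^{-1}(i)} = (\pi r^\lp)_i$, giving $T^\l(\pi r^\l) = \pi T^\l(r^\l)$.

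For invertibility, the central step is to show that the layer preserves all pairwise distances, i.e.\ $\Dlp = \Dl$. This is a one-line calculation: $\|\rilp - \rjlp\| = \|\Rl(\ril - \rjl)\| = \|\ril - \rjl\|$, since $\Rl \in O(D)$ and the translations $\tl$ cancel. Because $\Dlp = \Dl$, we can reconstruct $\Rl$ and $\tl$ directly from the output by evaluating the same networks on $\Dlp$, and then explicitly invert the affine map to obtain
\begin{equation*}
    \ril = \bigl(\Rl(\Dlp; \theta)\bigr)^T \bigl(\rilp - \tl(\Dlp; \theta)\bigr).
\end{equation*}

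There is no real obstacle to this argument; both parts are essentially one-line computations. The only subtlety worth flagging is the modeling convention that $\Dl$ is treated as an unordered (multi)set of distances, so that the networks producing $\Rl$ and $\tl$ depend on $r^\l$ only through this permutation-invariant quantity; this is exactly what allows the Deep Set–based implementation sketched in Appendix~\ref{app:implementation_cusps} to realize such maps.
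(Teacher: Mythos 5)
Your proposal is correct and follows essentially the same route as the paper's own proof: you establish that the set $\Dl$ is permutation-invariant so that $\Rl$ and $\tl$ are unchanged under $\pi \in \G$ (giving equivariance of the uniformly applied affine map), and that the layer preserves pairwise distances ($\Dlp = \Dl$), which lets you recover $\Rl$ and $\tl$ from the output and write the explicit inverse $\ril = \Rl(\Dlp;\theta)^T(\rilp - \tl(\Dlp;\theta))$. The only differences are cosmetic (indexing convention for the permutation action and the order in which the two properties are proved).
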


\begin{proof}
    Let us begin with invertibility.  We may compute the inter-electron distances at layer $\lp$:
    \begin{align}
        \Dijlp & = \| \rilp - \rjlp \| \notag \\
               & = \| \Rl(\Dl; \theta) \, \ril + \tl(\Dl; \theta) - \Rl(\Dl; \theta) \, \rjl - \tl(\Dl; \theta) \| \notag \\
               & = \| \Rl(\Dl; \theta) ( \ril - \rjl ) \| = \| \ril - \rjl \| = \Dijl
    \end{align}
    where the third line holds since $\Rl(\Dl; \theta) \in O(D)$.  That is, since we are rotating and translating all of the electrons with the same rotation matrix and translation vector the inter-electron distances are preserved.  As a result, the inverse is simply
    \begin{align}
        \ril & = \Rl(\Dl; \theta)^{-1} \, (\rilp - \tl(\Dl; \theta)) \notag \\
             & = \Rl(\Dlp; \theta)^T \, (\rilp - \tl(\Dlp; \theta))
    \end{align}
    where we have used the fact that for a rotation matrix, $\Theta^{-1} = \Theta^T$.  Note that all of the arguments on the right-hand side of the equation depend only on quantities from layer $\lp$, as desired.

    Having established invertibility, let us turn to $\G$-equivariance.  Let $\pi \in \G$, and denote the layer by $r^\lp = Q(r^\l)$, so that $\rilp = Q_i(r^\l)$.  Note that since $\Dl$ is the \textit{set} of distances, we have that $\pi \Dl = \Dl$: a set is inherently unordered, and therefore is unaffected by permutations. Then we have that
    \begin{align}
        Q_i(\pi r^\l) & = \Rl(\pi \Dl; \theta) \, r_{\pi(i)}^\l + \tl(\pi \Dl; \theta) \notag \\
                      & = \Rl(\Dl; \theta) \, r_{\pi(i)}^\l + \tl(\Dl; \theta) \notag \\
                      & = Q_{\pi(i)} (r^\l)
    \end{align}
    so that $Q(\pi r^l) = \pi Q(r^\l)$, as desired.
\end{proof}

\section{Implementation of the Electron-Electron Cusp Layer}
\label{app:implementation_cusps}

Recall that
\begin{equation}
    \rilp = \Rl(\Dl; \theta) \, \ril + \tl(\Dl; \theta)
\end{equation}
Therefore, the network must be a function of the \textit{set} of inter-electron distances $\Dl$.  Using multihead attention will be inefficient, as we must apply it to all \textit{pairs} of electrons, leading to quartic complexity.  Instead, we propose the following Deep Set \citep{zaheer2017deep} style layer:
\begin{enumerate}
    \item \textit{\textbf{MLP Per Electron Pair:}} Apply the same Multilayer Perceptron $\el$ to each electron pair individually:
    \begin{equation}
        \zijl = \el(\Dijl) \quad \text{for all } i < j
    \end{equation}
    \item \textit{\textbf{Average:}} Form the average value: $\bzl = \frac{1}{\frac{1}{2}n(n-1)} \sum_{i < j} \zijl$.
    \item \textit{\textbf{Overall MLP:}} Apply a Multilayer Perceptron $\hel$ to the average:
    \begin{equation}
        \bzl \leftarrow \hel(\bzl)
    \end{equation}
    The output should be of dimension $D^2 + D$, which is equal to $12$ when $D=3$.
    \item \textit{\textbf{Split into Rotation and Translation:}}
    \begin{align}
        \tl & = \text{First $D$ components of } \bzl \notag \\
        A^\l & = \text{Last $D^2$ components of } \bzl, \text{ reshaped into a $D \times D$ matrix} \\
        B^\l & = A^\l - (A^\l)^T, \text{ a skew-symmetric matrix} \notag \\
        \Rl & = \exp(B^\l), \text{ using the matrix exponential} \notag
    \end{align}
\end{enumerate}

\noindent \textbf{Notes}:
\begin{itemize}
    \item The reason we parameterize the rotation as an exponential of a skew-symmetric matrix is so that the layer can effectively be a residual-style layer: if we choose $A^\l = 0$ and $\tl = 0$, then we recover $\rilp = \ril$.  (This is harder if we use a rotation matrix directly, as the identity transformation $\rilp = \ril$ is only recovered if $\Rl = I$, which is harder to achieve.)
    \item It is proposed to use one such layer, or a very small number of such layers, somewhere near the beginning of the flow.  The work of incorporating the cusps in the appropriate manner can then be performed by subsequent layers.
\end{itemize}

\section{Electron-Nuclear Cusps}
\label{app:electron_nuclear_cusps}

It is also known that the gradient of the wavefunction should exhibit a discontinuity when an electron and nucleus coincide.  As in the case of electron-electron cusps, we may treat this by incorporating the electron-nuclear distances directly; we may design our layer exactly analogously to the electron-electron cusp layer, with one main caveat: to preserve invertibility, we can only deal with a single nucleus at a time.  In particular, for a given nucleus $I$ with position $R_I$, let $\DIl = \left\{ \DiIl \right\}_{i=1}^n$ with $\DiIl = \| \ril - R_I \|$.  Then the layer looks like
\begin{equation}
    \rilp = \Rl(\DIl; \theta) \, (\ril - R_I) + R_I
\end{equation}
Note in the above that only the rotation matrix is parameterized, and the translation vector is fixed.  We must include one such layer for each nucleus $I$.

\section{Proof of Theorem \ref{thm:induction}}
\label{prf:induction}

We begin by recalling the equations for nuclear permutation invariance and joint rotation invariance:
\begin{equation}
    \rho(r; \pi R, \pi Z) = \rho(r; R, Z) \quad \text{for } \pi \in \mathbb{S}_N
    \label{eq:nuclear_permutation_invariance_supp}
\end{equation}
\begin{equation}
    \rho(\Theta r; \Theta R, Z) = \rho(r; R, Z) \quad \text{for } \Theta \in O(D)
    \label{eq:joint_rotation_invariance_supp}
\end{equation}
We now state and prove the theorem.
\begin{theorem*}
    Let $\bar{R} = \frac{1}{N}\sum_{I=1}^N R_I = 0$.  Given a continuous normalizing flow of the form $dv/dt = \Gamma_t(v; R, Z)$ with $v(0) = z \sim \rho_z(\cdot)$ and $r = v(1)$.  Let the function $\Gamma_t$ be invariant with respect to nuclear permutations and equivariant with respect to joint rotations, i.e. for all $t$
    \begin{align}
        & \Gamma_t(v; \pi R, \pi Z) = \Gamma_t(v; R, Z) \hspace{2.0mm} \forall \pi \in \mathbb{S}_N \notag \\
        & \Gamma_t(\Theta v; \Theta R, Z) = \Theta \Gamma_t(v; R, Z) \hspace{2.0mm} \forall \Theta \in O(D)
        \label{eq:gamma_extra_properties_supp}
    \end{align}
    Furthermore, suppose that the base density is invariant with respect to rotations, $\rho_z(\Theta z) = \rho_z(z)$ for $\Theta \in O(D)$.  Then the resulting density $\rho(r; R, Z)$ satisfies Equations (\ref{eq:nuclear_permutation_invariance_supp}) and (\ref{eq:joint_rotation_invariance_supp}).
\end{theorem*}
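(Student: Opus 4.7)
The plan is to treat the normalizing flow as a parameterized diffeomorphism $T_{R,Z}: z \mapsto r$ given by the time-$1$ map of the ODE, and then push $\rho_z$ forward by $T_{R,Z}$ via the change-of-variables formula
\begin{equation}
    \rho(r; R, Z) = \rho_z\!\left( T_{R,Z}^{-1}(r) \right) \, \bigl| \det J_{T_{R,Z}^{-1}}(r) \bigr|.
\end{equation}
The two required invariances in (\ref{eq:nuclear_permutation_invariance}) and (\ref{eq:joint_rotation_invariance}) will be established separately, each by showing how $T_{R,Z}$ transforms under the relevant action on the molecular parameters and then reading off the consequence for the pushforward density.

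For nuclear permutation invariance, the argument is essentially immediate: the hypothesis $\Gamma_t(v; \pi R, \pi Z) = \Gamma_t(v; R, Z)$ says that replacing $(R, Z)$ by $(\pi R, \pi Z)$ leaves the vector field of the ODE unchanged for every $t$. Since the initial condition $v(0) = z$ does not depend on $(R, Z)$, the two ODEs are literally the same, and so $T_{\pi R, \pi Z} = T_{R,Z}$. Plugging this into the change-of-variables formula yields $\rho(r; \pi R, \pi Z) = \rho(r; R, Z)$.

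For joint rotation invariance, the key step is to show the conjugation identity $T_{\Theta R, Z} = \Theta \circ T_{R,Z} \circ \Theta^{-1}$, where I write $\Theta$ for the action $(r_1, \dots, r_n) \mapsto (\Theta r_1, \dots, \Theta r_n)$ on $\mathbb{R}^{Dn}$. To prove it, let $v(t)$ solve the ODE with initial condition $z$ and parameters $(R, Z)$, and set $\tilde v(t) = \Theta v(t)$. Using the equivariance hypothesis on $\Gamma_t$,
\begin{equation}
    \frac{d \tilde v}{dt} = \Theta \Gamma_t(v; R, Z) = \Gamma_t(\Theta v; \Theta R, Z) = \Gamma_t(\tilde v; \Theta R, Z),
\end{equation}
so $\tilde v(t)$ solves the ODE with parameters $(\Theta R, Z)$ and initial condition $\Theta z$; evaluating at $t=1$ gives the conjugation identity. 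Inverting, $T_{\Theta R, Z}^{-1}(\Theta r) = \Theta T_{R,Z}^{-1}(r)$, so by the rotation invariance of $\rho_z$ we get $\rho_z(T_{\Theta R, Z}^{-1}(\Theta r)) = \rho_z(T_{R,Z}^{-1}(r))$. For the Jacobian factor, the chain rule applied to $T_{\Theta R, Z}^{-1} = \Theta \circ T_{R,Z}^{-1} \circ \Theta^{-1}$ gives $J_{T_{\Theta R, Z}^{-1}}(\Theta r) = \Theta \, J_{T_{R,Z}^{-1}}(r) \, \Theta^{-1}$, and since the block-diagonal action of $\Theta$ on $\mathbb{R}^{Dn}$ has determinant $\pm 1$, the absolute Jacobian is preserved. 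Combining these two pieces in the change-of-variables formula yields $\rho(\Theta r; \Theta R, Z) = \rho(r; R, Z)$.

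The main subtlety, and the only place where some care is needed, is the bookkeeping of how the orthogonal matrix $\Theta$ acts on the full configuration space $\mathbb{R}^{Dn}$ versus on single electrons, and the corresponding conjugation of the Jacobian. Everything else is mechanical and follows directly from the hypotheses together with the two basic properties of the change-of-variables formula.
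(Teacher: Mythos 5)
Your proof is correct, and for the nuclear-permutation part it is essentially the paper's argument (identical vector field $\Rightarrow$ identical flow map $\Rightarrow$ identical pushforward). For the joint-rotation part you take a genuinely different, self-contained route: the paper simply invokes Theorems 1 and 2 of \citep{kohler2020equivariant} ``noting that $R$ is irrelevant for the flow,'' whereas you derive the conjugation identity $T_{\Theta R, Z} = \Theta \circ T_{R,Z} \circ \Theta^{-1}$ directly from the ODE (by checking that $\tilde v = \Theta v$ solves the ODE with parameters $(\Theta R, Z)$ and initial condition $\Theta z$, using uniqueness of solutions) and then push this through the change-of-variables formula, using rotation invariance of $\rho_z$ and the fact that the block-diagonal action of $\Theta$ on $\mathbb{R}^{Dn}$ has $|\det| = 1$ so the Jacobian factor is unchanged. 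What your version buys is that it handles the parameter dependence explicitly: the cited results concern a fixed group action on the flow variable alone, while here the ``equivariance'' of $\Gamma_t$ is joint in $(v,R)$, so strictly one must observe (as you do) that rotating $R$ yields the conjugated flow map rather than literally the same equivariant map; the paper's appeal to the citation glosses over this with a parenthetical remark. The paper's approach is shorter; yours is more careful and makes the argument self-contained, at the cost of redoing the standard equivariant-flow bookkeeping.
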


\begin{proof}
    Let us first consider permutation invariance, i.e.~Equation (\ref{eq:nuclear_permutation_invariance_supp}).  Let $r$ be produced by solving the flow 
    \begin{equation}
        dv/dt = \Gamma_t(v; R, Z) \text{ with } v(0) = z \sim \rho_z(\cdot) \text{ and } r = v(1)
    \end{equation}
    Consider a permutation $\pi$ on the nuclei, and let $\tilde{r}$ be the resulting electronic positions.  Then $\tilde{r}$ is produced by solving the flow 
    \begin{equation}
        d\tilde{v}/dt = \Gamma_t(\tilde{v}; \pi R, \pi Z) \text{ with } \tilde{v}(0) = z \sim \rho_z(\cdot) \text{ and }  \tilde{r} = \tilde{v}(1)
    \end{equation}
    However, we know that $\Gamma_t(\tilde{v}; \pi R, \pi Z) = \Gamma_t(\tilde{v}; R, Z)$.  Thus, $\tilde{r}$ is given by 
    \begin{equation}
        d\tilde{v}/dt = \Gamma_t(\tilde{v}; R, Z) \text{ with } \tilde{v}(0) = z \sim \rho_z(\cdot) \text{ and }  \tilde{r} = \tilde{v}(1)
    \end{equation}
    which is precisely equivalent to the equation for $r$; thus $\tilde{r} = r$, i.e.~the random variables representing the electronic positions are identical in both cases.  Thus, their distributions must be equal: $\rho(r; \pi R, \pi Z) = \rho(r; R, Z)$, so Equation (\ref{eq:nuclear_permutation_invariance_supp}) is established.

    Let us now turn to joint rotation invariance, i.e.~Equation (\ref{eq:joint_rotation_invariance_supp}).  As we know that $\Gamma_t$ satisfies rotation equivariance, i.e.~$\Gamma_t(\Theta v; \Theta R, Z) = \Theta \Gamma_t(v; R, Z)$, we may apply Theorems 1 and 2 from \citep{kohler2020equivariant} (noting that $R$ is irrelevant for the flow, which is entirely in $v$).  This yields immediately that $\rho(\Theta r; \Theta R, Z) = \rho(r; R, Z)$, so Equation (\ref{eq:joint_rotation_invariance_supp}) is established.
\end{proof}

\section{Proof of Theorem \ref{thm:induction_transformation}}
\label{prf:induction_transformation}

\begin{theorem*}
    Let $\phi_t(v; R, Z)$ be a function which is $\G$-equivariant with respect to $v$ i.e.~$\phi_t(g v; R, Z) = g \phi_t(v; R, Z)$ for $g \in \G$.  Let $\omega_t(v; R, z)$ be a function whose output is itself a rotation, i.e. $\omega_t(v; R, z) \in O(D)$.  Let $\omega_t$ be $\G$-invariant with respect to $v$, and $O(D)$-equivariant jointly with respect to $v$ and $R$ i.e.~$\omega_t(\Theta v; \Theta R, Z) = \Theta \omega_t(v; R, Z)$.  Finally, let both $\phi_t$ and $\omega_t$ be permutation-invariant jointly with respect to $R$ and $Z$ i.e.~$\phi_t(v; \pi R, \pi Z) = \phi_t(v; R, Z)$ and likewise for $\omega_t$. 
    Then the function 
    \begin{equation*}
        \Gamma_t(v; R, Z) = \zeta \phi_t(\zeta^{-1} v; \zeta^{-1} R, Z)
        \qquad \text{where} \qquad
        \zeta = \omega_t(v; R, Z)
    \end{equation*}
    satisfies the properties in Equation (\ref{eq:gamma_extra_properties_supp}) and is $\G$-equivariant with respect to $v$.
\end{theorem*}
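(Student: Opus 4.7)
The plan is to verify the three required properties of $\Gamma_t$ separately: (i) joint permutation invariance in $(R,Z)$, (ii) joint $O(D)$ rotation equivariance in $(v,R)$, and (iii) $\G$-equivariance in $v$. The common tool throughout will be a simple but crucial commutation observation: a rotation $\zeta \in O(D)$ applied uniformly to every electron (resp.\ every nucleus) acts on $\mathbb{R}^{Dn}$ (resp.\ $\mathbb{R}^{DN}$) by the block-diagonal operator $\mathrm{diag}(\zeta,\dots,\zeta)$, which commutes with any permutation $g$ of the electron (resp.\ nucleus) indices, since permutations act on a different (index) factor than $\zeta$ does. I would state this commutation once at the outset so it can be used freely in all three parts.

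For (i), I would apply $\pi \in \mathbb{S}_N$ to the nuclei. First, by the hypothesis that $\omega_t$ is jointly permutation-invariant in $(R,Z)$, the value $\zeta = \omega_t(v;R,Z)$ is unchanged. Then expanding $\Gamma_t(v;\pi R,\pi Z) = \zeta\,\phi_t(\zeta^{-1}v;\zeta^{-1}\pi R,\pi Z)$, the commutation observation yields $\zeta^{-1}\pi R = \pi \zeta^{-1}R$, and finally the joint permutation invariance of $\phi_t$ collapses this to $\zeta\,\phi_t(\zeta^{-1}v;\zeta^{-1}R,Z) = \Gamma_t(v;R,Z)$.

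For (ii), apply $\Theta \in O(D)$ jointly to $v$ and $R$. The $O(D)$-equivariance of $\omega_t$ gives $\omega_t(\Theta v;\Theta R,Z) = \Theta\zeta$, so the new rotation is $\Theta\zeta$ and its inverse is $\zeta^{-1}\Theta^{-1}$. Substituting into the definition of $\Gamma_t$, the two factors of $\Theta$ and $\Theta^{-1}$ cancel inside both arguments of $\phi_t$, and the leading $\Theta$ is pulled out to give $\Theta\Gamma_t(v;R,Z)$, as required. For (iii), given $g\in\G$, the $\G$-invariance of $\omega_t$ in $v$ keeps $\zeta$ unchanged. The commutation observation gives $\zeta^{-1}gv = g\zeta^{-1}v$; the $\G$-equivariance of $\phi_t$ then moves $g$ through $\phi_t$; and a second application of the commutation pulls $g$ past the outer $\zeta$, yielding $g\Gamma_t(v;R,Z)$.

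The main obstacle, and really the only conceptually nontrivial step, is the commutation claim between the block-diagonal action of a rotation in $O(D)$ and a permutation in $\G$ (or in $\mathbb{S}_N$ on the nuclei). It is almost immediate once one unpacks what it means for $\zeta$ to act on $\mathbb{R}^{Dn}$, but it is worth stating carefully because it is used in every one of the three parts; the rest of the proof is then a clean chain of substitutions using the hypothesized equivariance/invariance properties of $\phi_t$ and $\omega_t$.
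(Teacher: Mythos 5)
Your proposal is correct and follows essentially the same route as the paper's proof: in each of the three parts you first track how $\zeta=\omega_t(v;R,Z)$ transforms using the hypothesized invariance/equivariance of $\omega_t$, then substitute into the definition of $\Gamma_t$ and use the properties of $\phi_t$, with the key auxiliary fact being that a uniform rotation applied to every electron (or nucleus) commutes with a permutation of the indices — the same fact the paper invokes, which you merely state more explicitly up front. No gaps; nothing further is needed.
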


\begin{proof}
    Let us begin with the first condition in Equation (\ref{eq:gamma_extra_properties_supp}), namely we wish to show that $\Gamma_t(v; \pi R, \pi Z) = \Gamma_t(v; R, Z)$.  Use tilde's to denote the variables after the permutation $\pi$ has been applied.  Thus,
    \begin{equation}
        \tilde{\zeta} = \omega_t(v; \pi R, \pi Z) = \omega_t(v; R, Z) = \zeta
    \end{equation}
    where we have used the fact that $\omega_t$ is permutation-invariant jointly with respect to $R$ and $Z$.  Then
    \begin{align}
        \Gamma_t(v; \pi R, \pi Z) 
        & = \tilde{\zeta} \phi_t(\tilde{\zeta}^{-1} v; \tilde{\zeta}^{-1} \pi R, \pi Z) \notag \\
        & = \zeta \phi_t(\zeta^{-1} v; \zeta^{-1} \pi R, \pi Z) \notag \\
        & = \zeta \phi_t(\zeta^{-1} v; \pi \zeta^{-1} R, \pi Z) \notag \\
        & = \zeta \phi_t(\zeta^{-1} v; \zeta^{-1} R, Z) \notag \\
        & = \Gamma_t(v; R, Z) 
    \end{align}
    where in the second line we have used the fact that $\tilde{\zeta} = \zeta$; in the third line, the fact that the operation of applying an identical rotation to a list of vectors commutes with  a permutation applied to that list of vectors; and in the fourth line, the fact that $\phi_t$ is permutation-invariant jointly with respect to $R$ and $Z$.  We have thus established the first condition in Equation (\ref{eq:gamma_extra_properties_supp}).

    Now let us turn to the second condition in Equation (\ref{eq:gamma_extra_properties_supp}), that is we need to show that $\Gamma_t(\Theta v; \Theta R, Z) = \Theta \Gamma_t(v; R, Z)$.  We have that
    \begin{equation}
        \tilde{\zeta} = \omega_t(\Theta v; \Theta R, Z) = \Theta \omega_t(v; R, Z) = \Theta \zeta
    \end{equation}
    where we have used the fact that $\omega_t$ is $O(D)$-equivariant jointly with respect to $v$ and $R$.  Then
    \begin{align}
        \Gamma_t(\Theta v; \Theta R, Z)
        & = \tilde{\zeta} \phi_t(\tilde{\zeta}^{-1} \Theta v; \tilde{\zeta}^{-1} \Theta R, Z) \notag \\
        & = \Theta \zeta \phi_t(\zeta^{-1}\Theta^{-1} \Theta v; \zeta^{-1}\Theta^{-1} \Theta R, Z) \notag \\
        & = \Theta \zeta \phi_t(\zeta^{-1} v; \zeta^{-1} R, Z) \notag \\
        & = \Theta \Gamma_t(v; R, Z)
    \end{align}
    as desired.

    Finally, let us turn to demonstrating the $\G$-equivariance of $\Gamma_t$ with respect to $v$.  Let $g \in \G$; then we have that
    \begin{equation}
        \tilde{\zeta} = \omega_t(g v; R, Z) = \omega_t(v; R, Z) = \zeta
    \end{equation}
    where we have used the fact that $\omega_t$ is $\G$-invariant with respect to $v$.  Then
    \begin{align}
        \Gamma_t(g v; R, Z)
        & = \tilde{\zeta} \phi_t(\tilde{\zeta}^{-1} g v; \tilde{\zeta}^{-1} R, Z) \notag \\
        & = \zeta \phi_t(\zeta^{-1} g v; \zeta^{-1} R, Z) \notag \\
        & = \zeta \phi_t(g \zeta^{-1} v; \zeta^{-1} R, Z) \notag \\
        & = \zeta g \phi_t(\zeta^{-1} v; \zeta^{-1} R, Z) \notag \\
        & = g \zeta \phi_t(\zeta^{-1} v; \zeta^{-1} R, Z) \notag \\
        & = g \Gamma_t(v; R, Z)
    \end{align}
    where in the second line we have used the fact that $\tilde{\zeta} = \zeta$; in the third and fifth lines, the fact that the operation of applying an identical rotation to a list of vectors commutes with  a permutation applied to that list of vectors; and in the fourth line, the fact that $\phi_t$ is $\G$-equivariant with respect to $v$.  This completes the proof.
\end{proof}

\section{Implementation of Continuous Normalizing Flow for Multiple Molecules}
\label{app:continuous_multiple}

We must implement both networks mentioned in Theorem 12: 
the functions $\phi_t$ and $\omega_t$.  The function $\phi_t$ is $\G$-equivariant, so that we may use the general recipe described in Appendix \ref{app:implementation}; however, it has the additional properties it depends on both $R$ and $Z$, and must be permutation-invariant jointly with respect to these two variables.  Therefore, the following minor modification may be made to the recipe described in Appendix \ref{app:implementation} (noting that the notation changes slightly as we no longer have layers $\l$ - the flow is continuous; and that we replace the variables $\gail$ with $v_{\alpha, i}$).
We compute a Deep Set \citep{zaheer2017deep} function on $R, Z$, i.e.~on the inputs $\{(R_I, Z_I)\}$; the output of this function is permutation-invariant by construction.  This output is then fed into the Fully Connected Layer with Spin Mixing as an extra input.  An alternative to the Deep Set approach is to apply a transformer to $R, Z$, where each token is the pair $(R_I, Z_I)$, and then apply an averaging step at the end; this will also produce a permutation-invariant function.

In order to implement the function $\omega_t$, recall that its output is a rotation matrix. Furthermore, it is $\G$-invariant in $v$; $O(D)$-equivariant with respect to $v$ and $R$ jointly; and permutation-invariant with respect to $R$ and $Z$ jointly.  We may use an EGNN architecture \citep{satorras2021n} jointly on electrons and nuclei.  In the EGNN:
\begin{itemize}
    \item The positions of the electrons and nuclei are initialized as $v$ and $R$ respectively.
    \item The hidden vectors of the electrons and nuclei are initialized in order to encode two things:
    \begin{enumerate}
        \item Whether the vertex corresponds to an electron or a nucleus.
        \item Properties of the vertex: (a) in the case of an electron, whether the spin is up or down; (b) in the case of a nucleus, the atomic number $Z_I$.
    \end{enumerate}
    This encoding can be achieved via combining one-hot vectors with linear projections of varying dimensionalities.
\end{itemize}
For each of the $D$ final layers of the EGNN, one may then take the position vectors for that layer and form an average over them; this yields a total of $D$ new vectors.  These $D$ vectors are clearly $\G$-invariant in $v$, as reordering within spins does not matter; permutation-invariant in $R$ and $Z$ jointly; and $O(D)$-equivariant with respect to $v$ and $R$ jointly, by the built-in equivariance properties of EGNNs.  We then take these $D$ vectors, and perform Gram-Schmidt on them to obtain a rotation matrix $\Theta$, noting that Gram-Schmidt retains the equivariance property.  A similar idea is discussed in \citep{kaba2023equivariance}.  This completes the implementation.


\end{document}